\def\text#1{\textrm{#1}}
\def\precond#1{{\vphantom{#1}}^\bullet #1}
\def\postcond#1{{#1}^\bullet}
\def\production#1{\stackrel{#1}{\longrightarrow}}
\newfont{\fsc}{eusm10 scaled 1100}      
\def\powermultiset#1{\bbbn^{#1}}
\def\implies{\Rightarrow}
\def\mathrlap{\mathpalette\mathrlapinternal}
\def\mathrlapinternal#1#2{%
  \rlap{$\mathsurround=0pt#1{#2}$}}
\def\mathllap{\mathpalette\mathllapinternal}
\def\mathllapinternal#1#2{%
  \llap{$\mathsurround=0pt#1{#2}$}}
\def\trail#1{\text{~#1}}
\def\defitem#1{\emph{#1}}
\def\AA{\text{\it AA}}
\let\origexists\exists
\let\orignexists\nexists
\let\origforall\forall
\def\quantorSpace{}
\def\exists#1.{\quantorSpace\origexists\def\quantorSpace{\,}#1.\onespace}
\def\nexists#1.{\quantorSpace\orignexists\def\quantorSpace{\,}#1.\onespace}
\def\forall#1.{\quantorSpace\origforall\def\quantorSpace{\,}#1.\onespace}
\def\onespace#1{\let\argument=#1\ifx\onespace#1\else~\fi\argument}
\let\origmin\min
\def\min{\mathord{\origmin}}
\let\origmax\max
\def\max{\mathord{\origmax}}
\def\quireunderscore{_}
\def\quire#1{%
  \def\tmp{#1}%
  \ifx\tmp\quireunderscore%
    \def\tmp{\quireindexed_}
  \else%
    \def\tmp{\mathscr{Q}#1}
  \fi\tmp}
\def\quireindexed_#1{\mathscr{Q}_{\text{#1}}}
\newtheorem{observation}{Observation}
\def\observationname{Observation}
\newcommand{\refdf}[1]{\definitionname~\ref{df-#1}}
\newcommand{\refthm}[1]{\theoremname~\ref{thm-#1}}
\newcommand{\reflem}[1]{\lemmaname~\ref{lem-#1}}
\newcommand{\reffig}[1]{\figurename~\ref{fig-#1}}
\newcommand{\refobs}[1]{\observationname~\ref{obs-#1}}
\newcommand{\refsec}[1]{Section~\ref{sec-#1}}
\def\goesto{\@transition\rightarrowfill}
\def\Goesto{\@transition\Rightarrowfill}
\def\ngoesto{\@transition\nrightarrowfill}
\def\nGoesto{\@transition\nRightarrowfill}
\def\@transition#1{\@ifnextchar[{\@@transition{#1}}{\@@transition{#1}[]}}
\newbox\@transbox
\newbox\@arrowbox
\def\rightarrowfill{$\m@th\mathord-\mkern-6mu%
  \cleaders\hbox{$\mkern-2mu\mathord-\mkern-2mu$}\hfill
  \mkern-6mu\mathord\rightarrow$}
\def\Rightarrowfill{$\m@th\mathord=\mkern-6mu%
  \cleaders\hbox{$\mkern-2mu\mathord=\mkern-2mu$}\hfill
  \mkern-6mu\mathord\Rightarrow$}
\def\@@transition#1[#2]%
\wd\@transbox{#1}
\@transbox\hbox{$\mathop{\box\@arrowbox}\limits^{\box\@transbox}$}
\def\alignedcaption[#1&#2]{\mbox{\scriptsize $\mathllap{#1{}}\mathrlap{#2}$}}
\def\ie{i.e.\ }
\def\varnothing{\emptyset}
\def\restrictedto{\mathop\upharpoonright}
\newcommand{\plat}[1]{\raisebox{0pt}[0pt][0pt]{#1}}   
\newcommand{\inp}{\mathbin\in}                        
\def\idx#1#2#3#4#5{
  \def\argone{#1}
  \def\argtwo{#2}
  \def\argthree{#3}
  \def\argfour{#4}
  \def\argfive{#5}
  \def\testprime{'}
  \def\testdprime{''}
  \def\testtprime{'''}
  {\vphantom{\argthree}}_%
    {\vphantom{\argfour}\argone}^%
    {\vphantom{\argfive}{\argtwo}}%
  \argthree_%
    {\vphantom{\argone}\argfour}%
    \ifx\argfive\testprime\argfive\else%
    \ifx\argfive\testdprime\argfive\else%
    \ifx\argfive\testtprime\argfive\else%
    ^{\vphantom{\argtwo}\argfive}\fi\fi\fi%
}
\newcommand{\monus}{\mathrel{\raisebox{-0pt}[0pt][0pt]{$
                      \stackrel{\raisebox{-5pt}[0pt][0pt]{\huge$\cdot$}}
                               {\raisebox{0pt}[0pt][0pt]{$-$}}$}}}
\def\FS{{\it FS}}
\def\swap{\mbox{swap}}
\def\swapeq{\approx_{\mbox{s}}}
\def\adjacent{\mathrel{\,\,\rput(0,0.1){\psscalebox{-0.8}{\leftrightarrow}}\,\,}}
\def\connectedto{\adjacent^*}
\def\FS{\text{FS}}
\def\boldbracketa{%
  \psline[linewidth=0.1pt]{-}(0.15,-0.05)(0.05,-0.05)%
  \psline[linewidth=1.5pt]{-}(0.05,-0.05)(0.05,0.25)%
  \psline[linewidth=0.1pt]{-}(0.05,0.25)(0.15,0.25)%
  \phantom{[\,}%
}
\def\boldbracketb{%
  \psline[linewidth=0.1pt]{-}(0.0,-0.05)(0.1,-0.05)%
  \psline[linewidth=1.5pt]{-}(0.1,-0.05)(0.1,0.25)%
  \psline[linewidth=0.1pt]{-}(0.1,0.25)(0.0,0.25)%
  \phantom{]\,}%
}
\def\BD#1{\mathop{\boldbracketa#1\boldbracketb}}
\def\BDinf#1{{\boldbracketa#1\boldbracketb_{\psscalebox{0.8}{\scriptscriptstyle\infty}}}}
\def\BDify#1{{\it BD}(#1)}
\newenvironment{itemise}{\begin{list}{$\bullet$}{\leftmargin 12pt \labelwidth\leftmargin\advance\labelwidth-\labelsep \topsep 4pt \itemsep 2pt \parsep 2pt}}{\end{list}}
\newenvironment{itemisei}{\begin{list}{$-$}{\leftmargin 12pt \labelwidth\leftmargin\advance\labelwidth-\labelsep \topsep 4pt \itemsep 2pt \parsep 2pt}}{\end{list}}
\def\justempty{}
\newenvironment{define}[1]{\begin{definition}\rm\def\arga{#1}\ifx\justempty\arga~\else#1\fi\\\vspace{-6ex}\\\mbox{~}\begin{itemise}}{\end{itemise}\end{definition}}
\DeclareFontFamily{T1}{la}{}
\DeclareFontShape{T1}{la}{m}{n}{<->s*[0.8571428571]la14}{}
\def\processfont#1{\text{\fsc #1}}
\def\NN{\processfont{N}}
\def\SS{\processfont{S}\,}
\def\TT{\processfont{T}}
\def\FF{\processfont{F}}
\def\MM{\processfont{M}}
\def\PP{P}
\def\QQ{Q}
\newcommand{\R}{\mathcal{R}}
\newcounter{netimage}
\def\p#1:#2;{\cnode #1{0.3}{n\thenetimage-#2}}
\def\P#1:#2;{\p #1:#2;\pscircle*#1{0.1}}
\def\q#1:#2:#3;{\p #1:#2;\rput#1{\rput[l](0.45,0){\large\it #3}}}
\def\Q#1:#2:#3;{\P #1:#2;\rput#1{\rput[l](0.45,0){\large\it #3}}}
\def\qq#1:#2:#3;{\p #1:#2;\rput#1{\rput[t](0,-0.5){\large\it #3}}}
\def\ql#1:#2:#3;{\p #1:#2;\rput#1{\rput[r](-0.45,0){\large\it #3}}}
\def\qt#1:#2:#3;{\p #1:#2;\rput#1{\rput[b](0,0.45){\large\it #3}}}
\def\Qt#1:#2:#3;{\P #1:#2;\rput#1{\rput[b](0,0.45){\large\it #3}}}
\def\Ql#1:#2:#3;{\P #1:#2;\rput#1{\rput[r](-0.45,0){\large\it #3}}}
\def\qx#1:#2:#3:#4;{\p #1:#2;\rput#1{\rput#4{\large\it #3}}}
\def\QXX#1:#2:#3:#4:#5;{\p #1:#2;\rput#1{\rput#4{\large\it #3}}\pscircle*#5{0.1}}
\def\s#1:#2:#3;{\p #1:#2;\rput#1{\rput(-0.03,0){\large\it #3}}}
\def\t#1:#2:#3;{\rput#1{\rnode{n\thenetimage-#2}{\psframebox{%
  \vbox to 0.6cm{\vfil\hbox to 0.6cm{\hfil\Large\it #3\hfil}\vfil}}}}}
\def\u#1:#2:#3:#4;{\rput#1{\rnode{n\thenetimage-#2}{\psframebox{%
  \vbox to 0.6cm{\vfil\hbox to 0.6cm{\hfil\Large\it #3\hfil}\vfil}}}}%
  \rput#1{\rput[l](0.6,0){\large\it #4}}}
\def\ut#1:#2:#3:#4;{\rput#1{\rnode{n\thenetimage-#2}{\psframebox{%
  \vbox to 0.6cm{\vfil\hbox to 0.6cm{\hfil\Large\it #3\hfil}\vfil}}}}%
  \rput#1{\rput[b](0,0.6){\large\it #4}}}
\def\ul#1:#2:#3:#4;{\rput#1{\rnode{n\thenetimage-#2}{\psframebox{%
  \vbox to 0.6cm{\vfil\hbox to 0.6cm{\hfil\Large\it #3\hfil}\vfil}}}}%
  \rput#1{\rput[r](-0.6,0){\large\it #4}}}
\def\a#1->#2;{\ncline{->}{n\thenetimage-#1}{n\thenetimage-#2}}
\def\A#1->#2;{\ncarc[arcangle=16]{->}{n\thenetimage-#1}{n\thenetimage-#2}}
\def\AA#1->#2;{\ncarc[arcangle=32]{->}{n\thenetimage-#1}{n\thenetimage-#2}}
\def\B#1->#2;{\ncarc[arcangle=-8]{->}{n\thenetimage-#1}{n\thenetimage-#2}}
\def\avlinearc{0.2}
\def\av#1[#2]-#3->[#4]#5;{
  \SpecialCoor
  \psline[linearc=\avlinearc]{->}([angle=#2]n\thenetimage-#1)#3([angle=#4]n\thenetimage-#5)
}
\long\def\petrinet(#1)#2\end{\psscalebox{0.7}{\pspicture(#1)\stepcounter{netimage}#2\endpspicture}\end}
\titlerunning{On Causal Semantics of Petri Nets}
\title{On Causal Semantics of Petri Nets
\texorpdfstring{\newline\large(extended abstract)\thanks{This work was
    partially supported by the DFG (German Research Foundation).}}{}}
\title{On Causal Semantics of Petri Nets%
\texorpdfstring{\thanks{This work was
    partially supported by the DFG (German Research Foundation).}}{}}
\authorrunning{R.J. van Glabbeek, U. Goltz and J.-W. Schicke}
\author{
  Rob van Glabbeek \inst{1,2} \and
  Ursula Goltz \inst{3} \and
  Jens-Wolfhard Schicke \inst{3}
}
\institute{
  NICTA, Sydney, Australia
  \and
  School of Comp.\ Sc.\ and Engineering, Univ.\ of New South Wales,
  Sydney, Australia
  \and
  Institute for Programming and Reactive Systems, TU Braunschweig, Germany\\[1ex]
   \email{rvg@cs.stanford.edu} \qquad
    \email{goltz@ips.cs.tu-bs.de} \qquad \email{drahflow@gmx.de}
}
\begin{document}

\thispagestyle{empty}
\rput(5.5,-9){\includegraphics[width=14cm]{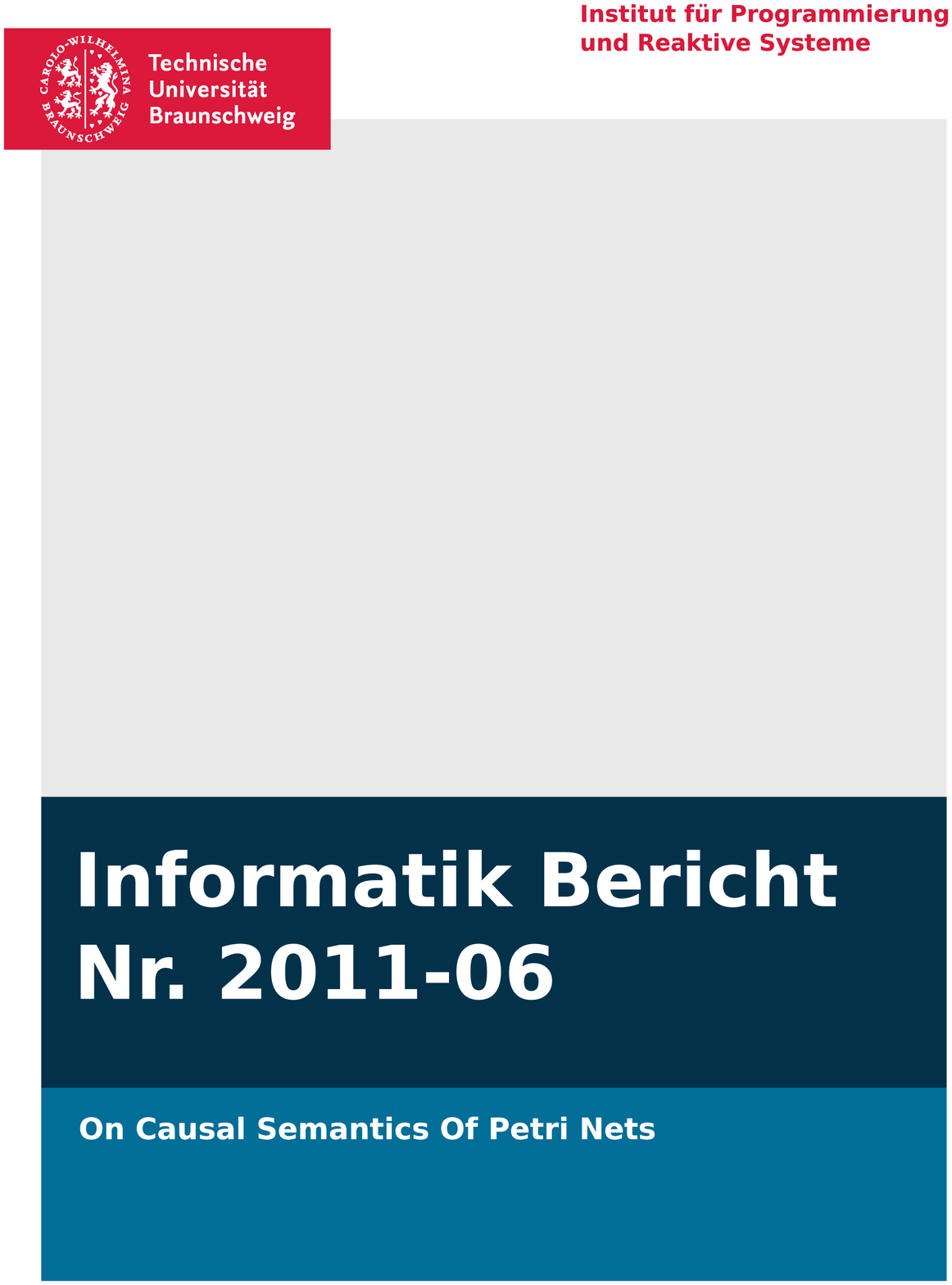}}
\clearpage

\setcounter{page}{1}

\maketitle
\setcounter{footnote}{0}

\begin{abstract}
We consider approaches for causal semantics of Petri nets, explicitly
representing dependencies between transition occurrences.  For one-safe nets
or condition/event-systems, the notion of process as defined by Carl Adam
Petri provides a notion of a run of a system where causal dependencies are
reflected in terms of a partial order.  A well-known problem is how to
generalise this notion for nets where places may carry several tokens.
Goltz and Reisig have defined such a generalisation by distinguishing tokens
according to their causal history.  However, this so-called
\defitem{individual token interpretation} is often considered too detailed.
A number of approaches have tackled the problem of defining a more abstract
notion of process, thereby obtaining a so-called \defitem{collective token
interpretation}.  Here we give a short overview on these attempts and then
identify a subclass of Petri nets, called \defitem{structural conflict nets},
where the interplay between conflict and concurrency due to token multiplicity
does not occur.  For this subclass, we define abstract processes as
equivalence classes of Goltz-Reisig processes.  We justify this approach by
showing that we obtain exactly one maximal abstract process if and only if the
underlying net is conflict-free with respect to a canonical notion of conflict.
\end{abstract}

\section{Introduction}\label{sec-intro}
\noindent
In this paper we address a well-known problem in Petri net theory,
namely how to generalise Petri's concept of non-sequential processes
to nets where places may carry multiple tokens.

One of the most interesting features of Petri nets is that they allow
the explicit representation of causal dependencies between action
occurrences when modelling reactive systems.  This is a key difference
with models of reactive systems (like standard transition systems)
with an inherent so-called interleaving semantics, modelling
concurrency by non-deterministic choice between sequential executions.
In \cite{vanglabbeek01refinement} it has been shown, using the model
of event structures or configuration structures, that causal semantics
are superior to interleaving semantics when giving up the assumption
that actions are atomic entities.

In the following, we give a concise overview on existing approaches on
semantics of Petri nets that give an account of their runs, without
claiming completeness, and following closely a similar presentation in
\cite{glabbeek11ipl}.
\advance\textheight 1pt

Initially, Petri introduced the concept of a net together with a definition of
its dynamic behaviour in terms of the firing rule for single transitions or
for finite sets (\defitem{steps}) of transitions firing in parallel. Sequences
of transition firings or of steps are the usual way to define the behaviour of
a Petri net. When considering only single transition firings, the set of all
firing sequences yields a linear time interleaving semantics (no choices
between alternative behaviours are represented). Otherwise we obtain a linear
time step semantics, with information on possible parallelism, but without
explicit representation of causal dependencies between transition occurrences.

\advance\textheight -1pt
Petri then defined \defitem{condition/event systems}, where ---
amongst other restrictions --- places (there called conditions) may
carry at most one token. For this class of nets, he proposed what is
now the classical notion of a \defitem{process}, given as a mapping
from an \defitem{occurrence net} (acyclic net with unbranched places)
to the original net \cite{petri77nonsequential,genrich80dictionary}.
A process models a run of the represented system, obtained by choosing
one of the alternatives in case of conflict. It records all
occurrences of the transitions and places visited during such a run,
together with the causal dependencies between them, which are given by
the flow relation of the net.  A linear-time causal semantics of a
condition/event system is thus obtained by associating with a net the
set of its processes.  Depending on the desired level of abstraction,
it may suffice to extract from each process just the partial order of
transition occurrences in it. The firing sequences of transitions or
steps can in turn be extracted from these partial orders.  Nielsen,
Plotkin and Winskel extended this to a branching-time semantics by
using occurrence nets with forward branched places
\cite{nielsen81petri}.  These capture all runs of the represented
system, together with the branching structure of choices between them.

However, the most frequently used class of Petri nets are nets where places
may carry arbitrary many tokens, or a certain maximal number of tokens when
adding place capacities. This type of nets is often called
\defitem{place/transition systems} (\mbox{P\hspace{-1pt}/T} systems). Here
tokens are usually assumed to be indistinguishable entities, for example
representing a number of available resources in a system. Unfortunately, it is
not straightforward to generalise the notion of process, as defined by Petri
for condition/event systems, to \mbox{P\hspace{-1pt}/T} systems. In fact, it
has now for more than 20 years been a well-known problem in Petri net theory
how to formalise an appropriate causality-based concept of process or run for
general \mbox{P\hspace{-1pt}/T} systems. In the following we give an
introduction to the problem and a short overview on existing approaches.

As a first approach, Goltz and Reisig generalised Petri's notion of process to
general \mbox{P\hspace{-1pt}/T} systems \cite{goltz83nonsequential}. We call
this notion of a process \defitem{GR-process}. It is based on a
canonical unfolding of a P/T systems into a condition/event system,
representing places that may carry several tokens by a corresponding number
of conditions (see \cite{goltz87representations}).  \reffig{unsafe} shows a
\mbox{P\hspace{-1pt}/T} system with two of its GR-processes.

\begin{figure}
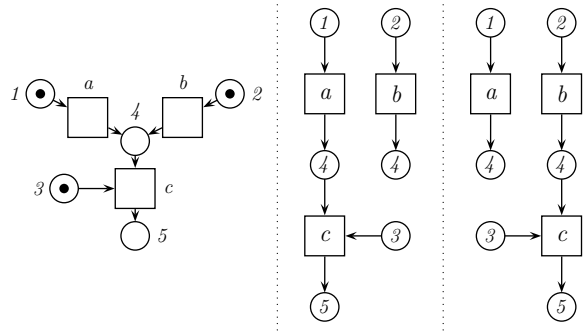

  \begin{center}
    \psscalebox{0.9}{
    \begin{petrinet}(12,7)
      \Ql(0.5,5):p:1;
      \Q(4.5,5):q:2;
      \ut(1.5,4.5):a::a;
      \ut(3.5,4.5):b::b;
      \qt(2.5,4):r:4;
      \Ql(1,3):cs:3;
      \u(2.5,3):c::c;
      \q(2.5,2):s:5;

      \a p->a; \a q->b; \a a->r; \a b->r;
      \a r->c; \a cs->c; \a c->s;

      \psline[linestyle=dotted](5.5,0)(5.5,7)

      \s(6.5,6.5):p2:1;
      \s(8.0,6.5):q2:2;
      \t(6.5,5):a2:a;
      \t(8.0,5):b2:b;
      \s(6.5,3.5):r2:4;
      \s(8.0,3.5):r2p:4;
      \t(6.5,2):c2:c;
      \s(8.0,2):cs2:3;
      \s(6.5,0.5):s2:5;

      \a p2->a2; \a q2->b2; \a a2->r2; \a b2->r2p; \a r2->c2; \a cs2->c2; \a c2->s2;

      \psline[linestyle=dotted](9.0,0)(9.0,7)

      \s(10.0,6.5):p3:1;
      \s(11.5,6.5):q3:2;
      \t(10.0,5):a3:a;
      \t(11.5,5):b3:b;
      \s(10.0,3.5):r3p:4;
      \s(11.5,3.5):r3:4;
      \t(11.5,2):c3:c;
      \s(10.0,2):cs3:3;
      \s(11.5,0.5):s3:5;

      \a p3->a3; \a q3->b3; \a a3->r3p; \a b3->r3; \a r3->c3; \a cs3->c3; \a c3->s3;

    \end{petrinet}
    }
  \end{center}
  \vspace{-2ex}
  \caption{A net $N$ with its two maximal GR-processes. The correspondence between elements of the net and their occurrences in the processes is indicated by labels.}
  \label{fig-unsafe}
\end{figure}

Engelfriet adapted GR-processes by additionally representing choices
between alternative behaviours \cite{engelfriet91branchingprocesses},
thereby adopting the approach of \cite{nielsen81petri} to
\mbox{P\hspace{-1pt}/T} systems, although without arc weights.
Meseguer, Sassone and Montanari extended this to cover also arc weights \cite{MMS97}.

However, if one wishes to interpret \mbox{P\hspace{-1pt}/T} systems with a
causal semantics, there are alternative interpretations of what ``causal
semantics'' should actually mean. Goltz already argued that when abstracting
from the identity of multiple tokens residing in the same place, GR-processes
do not accurately reflect runs of nets, because if a Petri net is
conflict-free it should intuitively have only one complete run (for there are
no choices to resolve), yet it may have multiple maximal GR-processes
\cite{goltz86howmany}.  This phenomenon already occurs in \reffig{unsafe},
since the choice between alternative behaviours is here only due to the
possibility to choose between two tokens which can or even should be seen as
indistinguishable entities.  A similar argument is made, e.g., in
\cite{HKT95}.

At the heart of this issue is the question whether multiple tokens residing in
the same place should be seen as individual entities, so that a transition
consuming just one of them constitutes a conflict, as in the interpretation
underlying GR-processes and the approach of
\cite{engelfriet91branchingprocesses,MMS97}, or whether such tokens are
indistinguishable, so that taking one is equivalent to taking the other.  Van
Glabbeek and Plotkin call the former viewpoint the \defitem{individual token
interpretation} of P\hspace{-1pt}/T systems.  For an alternative
interpretation, they use the term \defitem{collective token interpretation}
\cite{glabbeek95configuration}.  A possible formalisation of these
interpretations occurs in \cite{glabbeek05individual}. In the following we
call process notions for \mbox{P\hspace{-1pt}/T~systems} which are adherent to
a collective token philosophy \defitem{abstract processes}.  Another option,
proposed by Vogler, regards tokens only as notation for a natural number
stored in each place; these numbers are incremented or decremented when firing
transitions, thereby introducing explicit causality between any transitions
removing tokens from the same place \cite{vogler91executions}.

Mazurkiewicz applies again a different approach in \cite{mazurkiewicz89multitree}.
He proposes \defitem{multitrees}, which record possible multisets of fired
transitions, and then takes confluent subsets of multitrees as abstract
processes of P\hspace{-1pt}/T systems. This approach does not explicitly
represent dependencies between transition occurrences and hence does not apply
to nets with self-loops, where such information may not always be retrieved.

Yet another approach has been proposed by Best and Devillers in
\cite{best87both}.  Here an equivalence relation is generated by a
transformation for changing causalities in GR-processes, called
\defitem{swapping}, that identifies GR-processes which differ only in
the choice which token was removed from a place. In this paper, we
adopt this approach and we show that it yields a fully satisfying
solution for a subclass of P\hspace{-1pt}/T systems. We call the
resulting notion of a more abstract process \defitem{BD-process}. In
the special case of one-safe \mbox{P\hspace{-1pt}/T} systems (where
places carry at most one token), or for condition/event systems, no
swapping is possible, and a BD-process is just an isomorphism class of
GR-processes.

Meseguer and Montanari formalise runs in a net $N$ as
morphisms in a category $\mathcal{T}(N)$ \cite{MM88}. In \cite{DMM89}
it has been established that these morphisms ``coincide with the
commutative processes defined by Best and Devillers'' (their
terminology for BD-processes). Likewise, Hoogers, Kleijn and
Thiagarajan represent an abstract run of a net by a \defitem{trace},
thereby generalising the trace theory of Mazurkiewicz
\cite{mazurkiewicz95tracetheory}, and remark that ``it is
straightforward but laborious to set up a 1-1 correspondence between
our traces and the equivalence classes of finite processes generated
by the swap operation in [Best and Devillers, 1987]''.

To explain why it can be argued that BD-processes are not fully satisfying as
abstract processes for general \mbox{P\hspace{-1pt}/T} systems, we recall in
\reffig{badswapping} an example due to Ochma\'nski
\cite{ochmanski89personal,barylska09nonviolence}, see also
\cite{DMM89,glabbeek11ipl}. In the initial situation only two of the three
enabled transitions can fire, which constitutes a conflict.  However, the
equivalence obtained from the swapping transformation (formally defined in
\refsec{semantics}) identifies all possible maximal GR-processes and hence
yields only one complete abstract run of the system.  We are not aware of a
solution, i.e.\ any formalisation of the concept of a run of a net that
correctly represents both causality and parallelism of nets, and meets the
requirement that for this net there is more than one possible complete run.

\begin{figure}[t]
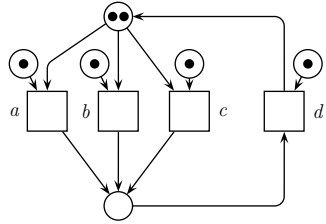

\vspace*{1em}
  \begin{center}
    \psscalebox{0.9}{
    \begin{petrinet}(12,5)
      \P(3.5,4):pa;
      \P(5,4):pb;
      \P(7,4):pc;
      \P(9.5,4):pd;

      \ul(4,3):a::a;
      \ul(5.5,3):b::b;
      \u(7,3):c::c;
      \u(9,3):d::d;

      \p(5.5,5):p;
      \p(5.5,1):q;

      \av p[210]-(4,4)->[90]a; \a pa->a; \a a->q;
      \a p->b; \a pb->b; \a b->q;
      \a p->c; \a pc->c; \a c->q;
      \av q[0]-(9,1)->[270]d; \a pd->d; \av d[90]-(9,5)->[0]p;

      \pscircle*(5.38,5){0.1}
      \pscircle*(5.62,5){0.1}
    \end{petrinet}
    }
  \end{center}
  \vspace{-5ex}
  \caption{A net with only a single process up to swapping equivalence.$\!$}
  \label{fig-badswapping}
\end{figure}

In \cite{glabbeek11ipl} and in the present paper, we continue the line of
research of \cite{MM88,DMM89,mazurkiewicz89multitree,HKT95} to formalise a
causality-based notion of an abstract process of a \mbox{P\hspace{-1pt}/T}
system that fits a collective token interpretation.  As remarked already in
\cite{goltz86howmany}, `what we need is some notion of an ``abstract
process''' and `a notion of maximality for abstract processes', such that `a
\mbox{P\hspace{-1pt}/T}-system is conflict-free iff it has exactly one maximal
abstract process starting at the initial marking'.  The example from
\reffig{badswapping} shows that BD-processes are in general not suited. We
defined in \cite{glabbeek11ipl} a subclass of \mbox{P\hspace{-1pt}/T} systems
where conflict and concurrency are clearly separated. We called these nets
\defitem{structural conflict nets}.  Using the formalisation of conflict for
\mbox{P\hspace{-1pt}/T} systems from \cite{goltz86howmany}, we have shown
that, for this subclass of P/T systems, we obtain more than one maximal
BD-process whenever the net contains a conflict.\footnote{The notion of
maximality for BD-processes is not trivial. However, with the results from
\refsec{results}, Corollary~1 from \cite{glabbeek11ipl} may be rephrased in
this way.} The proof of this result is quite involved; it was achieved by
using an alternative characterisation of BD-processes via firing sequences
from \cite{best87both}.

In this paper, we will show the reverse direction of this result, namely that
we obtain exactly one maximal BD-process of a structural conflict net if the
net is conflict-free.  Depending on the precise formalisation of a suitable
notion of maximality of BD-processes, this holds even for arbitrary nets.
Summarising, we then have established that we obtain exactly one maximal
abstract process in terms of BD-processes for structural conflict nets
\emph{if and only if} the net is conflict-free with respect to a canonical
notion of conflict.

We proceed by defining basic notions for \mbox{P\hspace{-1pt}/T} systems in
Section \ref{sec-basic}. In Section \ref{sec-semantics}, we define
GR-processes and introduce the swapping equivalence. Section
\ref{sec-conflict} recalls the concept of conflict in \mbox{P\hspace{-1pt}/T}
systems and defines structural conflict nets.\footnote{The material in
Sections \ref{sec-basic} to \ref{sec-conflict} follows closely the
presentation in \cite{glabbeek11ipl}, but needs to be included to make the
paper self-contained.} In Section \ref{sec-finiteruns}, we recapitulate the
alternative characterisation of BD-processes from \cite{best87both} in terms
of an equivalence notion on firing sequences \cite{best87both} and prove in
this setting that a conflict-free net has exactly one maximal run.  Finally,
in Section \ref{sec-results}, we investigate notions of maximality for
BD-processes and then transfer the result from Section \ref{sec-finiteruns} to
BD-processes.
\begin{CONCUR}
Due to lack of space, the proofs of Lemma's~\ref{lem-bdprocessexists},
\ref{lem-procleqimplrunleq}, \ref{lem-closingdiamond}, \ref{lem-cfmaximals}
and~\ref{lem-scmaximals}, some quite involved,
are omitted; these can be found in \cite{glabbeek11causaltr}.
\end{CONCUR}

\section{Place/transition Systems}\label{sec-basic}

\noindent
We will employ the following notations for multisets.

\begin{define}{
  Let $X$ be a set.
}\label{df-multiset}
\item A {\em multiset} over $X$ is a function $A\!:X \rightarrow \bbbn$,
i.e.\ $A\in \powermultiset{X}\!\!$.
\item $x \in X$ is an \defitem{element of} $A$, notation $x \in A$, iff $A(x) > 0$.
\item For multisets $A$ and $B$ over $X$ we write $A \subseteq B$ iff
 \mbox{$A(x) \leq B(x)$} for all $x \inp X$;
\\ $A\cup B$ denotes the multiset over $X$ with $(A\cup B)(x):=\text{max}(A(x), B(x))$,
\\ $A\cap B$ denotes the multiset over $X$ with $(A\cap B)(x):=\text{min}(A(x), B(x))$,
\\ $A + B$ denotes the multiset over $X$ with $(A + B)(x):=A(x)+B(x)$,
\\ $A - B$ is given by
$(A - B)(x):=A(x)\monus B(x)=\mbox{max}(A(x)-B(x),0)$, and
for $k\inp\bbbn$ the multiset $k\cdot A$ is given by
$(k \cdot A)(x):=k\cdot A(x)$.
\item The function $\emptyset\!:X\rightarrow\bbbn$, given by
  $\emptyset(x):=0$ for all $x \inp X$, is the \emph{empty} multiset over $X$.
\item If $A$ is a multiset over $X$ and $Y\subseteq X$ then
  $A\restrictedto Y$ denotes the multiset over $Y$ defined by
  $(A\restrictedto Y)(x) := A(x)$ for all $x \inp Y$.
\item The cardinality $|A|$ of a multiset $A$ over $X$ is given by
  $|A| := \sum_{x\in X}A(x)$.
\item A multiset $A$ over $X$ is \emph{finite}
  iff $|A|<\infty$, i.e.,
  iff the set $\{x \mid x \inp A\}$ is finite.
\end{define}
Two multisets $A\!:X \rightarrow \bbbn$ and
$B\!:Y\rightarrow \bbbn$
are \emph{extensionally equivalent} iff
$A\restrictedto (X\cap Y) = B\restrictedto (X\cap Y)$,
$A\restrictedto (X\setminus Y) = \emptyset$, and
$B \restrictedto (Y\setminus X) = \emptyset$.
In this paper we often do not distinguish extensionally equivalent
multisets. This enables us, for instance, to use $A \cup B$ even
when $A$ and $B$ have different underlying domains.
With $\{x,x,y\}$ we will denote the multiset over $\{x,y\}$ with
$A(x)\mathbin=2$ and $A(y)\mathbin=1$, rather than the set $\{x,y\}$ itself.
A multiset $A$ with $A(x) \leq 1$ for all $x$ is
identified with the set $\{x \mid A(x)=1\}$.

Below we define place/transition systems as net structures with an initial marking.
In the literature we find slight variations in the definition of \mbox{P\hspace{-1pt}/T}
systems concerning the requirements for pre- and postsets of places
and transitions. In our case, we do allow isolated places. For
transitions we allow empty postsets, but require at least one
preplace, thus avoiding problems with infinite self-concurrency.
Moreover, following \cite{best87both}, we restrict attention
to nets of \defitem{finite synchronisation}, meaning that each
transition has only finitely many pre- and postplaces.
Arc weights are included by defining the flow relation as a function to the natural numbers.
For succinctness, we will refer to our version of a \mbox{P\hspace{-1pt}/T} system as a \defitem{net}.

\begin{define}{}\label{df-nst}
\item[]
  A \defitem{net} is a tuple
  $N = (S, T, F, M_0)$ where
  \begin{itemise}
    \item $S$ and $T$ are disjoint sets (of \defitem{places} and \defitem{transitions}),
    \item $F: (S \mathord\times T \mathrel\cup T \mathord\times S) \rightarrow \bbbn$
      (the \defitem{flow relation} including \defitem{arc weights}), and
    \item $M_0 : S \rightarrow \bbbn$ (the \defitem{initial marking})
  \end{itemise}
  such that for all $t \inp T$ the set $\{s\mid F(s, t) > 0\}$ is
  finite and non-empty, and the set $\{s\mid F(t, s) > 0\}$ is finite.
\end{define}

\noindent
Graphically, nets are depicted by drawing the places as circles and
the transitions as boxes. For $x,y \inp S\cup T$ there are $F(x,y)$
arrows (\defitem{arcs}) from $x$ to $y$.  When a net represents a
concurrent system, a global state of this system is given as a
\defitem{marking}, a multiset of places, depicted by placing $M(s)$
dots (\defitem{tokens}) in each place $s$.  The initial state is
$M_0$.  The system behaviour is defined by the possible moves between
markings $M$ and $M'$, which take place when a finite multiset $G$ of
transitions \defitem{fires}.  When firing a transition, tokens on
preplaces are consumed and tokens on postplaces are created, one for
every incoming or outgoing arc of $t$, respectively.  Obviously, a
transition can only fire if all necessary tokens are available in $M$
in the first place. \refdf{firing} formalises this notion of behaviour.
\begin{TR}\newpage\end{TR}

\begin{define}{
  Let $N\!=\!(S, T, F, M_0)$ be a net and $x\inp S\cup T$.
}\label{df-preset}
\item[]
The multisets $\precond{x},~\postcond{x}: S\cup T \rightarrow
\bbbn$ are given by $\precond{x}(y)=F(y,x)$ and
$\postcond{x}(y)=F(x,y)$ for all $y \inp S \cup T$.
If $x\in T$, the elements of $\precond{x}$ and $\postcond{x}$ are
called \emph{pre-} and \emph{postplaces} of $x$, respectively.
These functions extend to multisets
$X:S \cup T \rightarrow\bbbn$ as usual, by
$\precond{X} := \Sigma_{x \in S \cup T}X(x)\cdot\precond{x}$ and
$\postcond{X} := \Sigma_{x \in S \cup T}X(x)\cdot\postcond{x}$.
\end{define}

\begin{define}{
  Let $N \mathbin= (S, T, F, M_0)$ be a net,
  $G \in \bbbn^T\!$, $G$ non-empty and finite, and $M, M' \in \bbbn^S\!$.
}\label{df-firing}
\item[]
$G$ is a \defitem{step} from $M$ to $M'$,
written $M\production{G}_N M'$, iff
\begin{itemise}
  \item $^\bullet G \subseteq M$ ($G$ is \defitem{enabled}) and
  \item $M' = (M - \mbox{$^\bullet G$}) + G^\bullet$. 
\end{itemise}
We may leave out the subscript $N$ if clear from context.
Extending the notion to words $\sigma = t_1t_2\ldots t_n \in T^*$
we write $M\production{\sigma} M'$ for\vspace{-5pt}
$$
\exists M_1, M_2, \ldots, M_{n-1}.
M\!\production{\{t_1\}}\! M_1\!\production{\{t_2\}}\! M_2 \cdots M_{n-1}\!\production{\{t_n\}}\! M'\!\!.
$$
When omitting $\sigma$ or $M'$ we always mean it to be existentially quantified.
When $M_0 \production{\sigma}_N$, the word $\sigma$
is called a \defitem{firing sequence} of $N$.
The set of all firing sequences of $N$ is denoted by $\FS(N)$.
\end{define}

\noindent
Note that steps are (finite) multisets, thus allowing self-concurrency.
Also note that $M\goesto[\{t,u\}]$ implies $M\goesto[tu]$ and $M\goesto[ut]$.
We use the notation $t\in \sigma$ to indicate that the transition $t$
occurs in the sequence $\sigma$, and $\sigma\leq\rho$ to indicate that
$\sigma$ is a prefix of the sequence $\rho$, i.e.\ $\exists \mu. \rho=\sigma\mu$.

\section{Processes of place/transition systems}\label{sec-semantics}

\noindent
We now define processes of nets.
A (GR-)process is essentially a conflict-free, acyclic net together
with a mapping function to the original net. It can be obtained by
unwinding the original net, choosing one of the alternatives in case
of conflict.
The acyclic nature of the process gives rise to a notion of causality
for transition firings in the original net via the mapping function.
Conflicts present in the original net are represented by one net yielding
multiple processes, each representing one possible way to decide the conflicts.

\begin{define}{}\label{df-process}
 \item[]
  A pair $\PP = (\NN, \pi)$ is a
  \defitem{(GR-)process} of a net $N = (S, T, F, M_0)$
  iff
  \begin{itemise}\itemsep 3pt
   \item $\NN = (\SS, \TT, \FF, \MM_0)$ is a net, satisfying
   \begin{itemisei}
    \item $\forall s \in \SS. |\precond{s}| \leq\! 1\! \geq |\postcond{s}|
    \wedge\, \MM_0(s) = \left\{\begin{array}{@{}l@{\quad}l@{}}1&\mbox{if $\precond{s}=\emptyset$}\\
                                   0&\mbox{otherwise,}\end{array}\right.$
    \item $\FF$ is acyclic, \ie
      $\forall x \inp \SS \cup \TT. (x, x) \mathbin{\not\in} \FF^+$,
      where $\FF^+$ is the transitive closure of $\{(t,u)\mid F(t,u)>0\}$,
    \item and $\{t \mid (t,u)\in \FF^+\}$ is finite for all $u\in \TT$.
   \end{itemisei}
    \item $\pi:\SS \cup \TT \rightarrow S \cup T$ is a function with 
    $\pi(\SS) \subseteq S$ and $\pi(\TT) \subseteq T$, satisfying
   \begin{itemisei}
    \item $\pi(\MM_0) = M_0$, i.e.\ $M_0(s) = |\pi^{-1}(s) \cap \MM_0|$ for all $s\in S$, and
    \item $\forall t \in \TT, s \in S.
      F(s, \pi(t)) = |\pi^{-1}(s) \cap \precond{t}| \wedge
      F(\pi(t), s) = |\pi^{-1}(s) \cap \postcond{t}|$.
  \end{itemisei}
  \end{itemise}
  $P$ is called \defitem{finite} if $\TT$ is finite.
\end{define}

\noindent
The conditions for $\NN$ ensure that a process is indeed a mapping from an
occurrence net as defined in \cite{petri77nonsequential,genrich80dictionary}
to the net $N$; hence we define processes here in the classical way as in
\cite{goltz83nonsequential,best87both} (even though not introducing occurrence
nets explicitly).

A process is not required to represent a completed run of the original net.
It might just as well stop early. In those cases, some set of transitions can
be added to the process such that another (larger) process is obtained. This
corresponds to the system taking some more steps and gives rise to a natural
order between processes.

\begin{define}{
  Let $\PP = ((\SS, \TT, \FF, \MM_0), \pi)$ and\\ $\PP' = ((\SS', \TT\,', \FF\,', \MM_0'), \pi')$ be
  two processes of the same net.
}\label{df-extension}
\item
  $\PP'$ is a \defitem{prefix} of $\PP$, notation $\PP'\leq \PP$, and 
  $\PP$ an \defitem{extension} of $\PP'$, iff 
    $\SS'\subseteq \SS$,
    $\TT\,'\subseteq \TT$,
    $\MM_0' = \MM_0$,
    $\FF\,'=\FF\restrictedto(\SS' \mathord\times \TT\,' \mathrel\cup \TT\,' \mathord\times \SS')$
    and $\pi'=\pi\restrictedto(\SS'\times \TT\,')$.
\item
  A process of a net is said to be \defitem{maximal} if 
  it has no proper extension.
\end{define}

\noindent
The requirements above imply that if $\PP'\leq \PP$, $(x,y)\in
\FF^+$ and $y\in \SS' \cup \TT\,'$ then $x\in \SS' \cup \TT\,'$.
Conversely, any subset $\TT\,'\subseteq \TT$ satisfying
$(t,u)\in \FF^+ \wedge u\in \TT\,' \Rightarrow t\in \TT\,'$ uniquely determines a
prefix of $\PP$.

\begin{TR}
Henceforth, we will write $P'\goesto[G]P$ with $G\inp\bbbn^T$ a finite and non-empty
multiset of transitions of the underlying net, if $P'\mathbin\leq P$, all transitions in
$K:=\TT\setminus\TT\,'$ are maximal in $\TT$ w.r.t.\ $\FF^+$, and $\pi(K)=G$, i.e.\
$G(t)=|\pi^{-1}(t)\cap K|$ for all $t\in T$. As usual, we write $P'\goesto[a]P$ instead of $P'\goesto[\{a\}]P$
for singleton steps.
Let $\mathcal{P}_0(N)$ be the set of \emph{initial processes} of a net $N$: those with an
empty set of transitions.
Now for each finite process $P$ of $N$, having $n$ transitions, there is a sequence
$P_0 \goesto[a_1] P_1 \goesto[a_2] \ldots \goesto[a_n] P_n$ with
$P_0\in\mathcal{P}_0(N)$ and $P_n=P$.

For $P=((\SS,\TT,\FF,\MM_0),\pi)$ a finite GR-process of a net $N=(S,T,F,M_0)$, we write
$P^\circ$ for $\{s\inp \SS \mid \forall t\inp \TT. \FF(s,t)\mathbin=0\}$, and
$\widehat P$ for the marking $\pi(P^\circ)\in \bbbn^S$.
The following observations describe a \emph{step bisimulation}
\cite{vanglabbeek01refinement} between the above transition relation on the
processes of a net, and the one on its markings.

\begin{observation}\rm\label{obs-bisimulation}
Let $N=(S,T,F,M_0)$ be a net, $G\inp\bbbn^T$ non-empty and finite, and $P,Q$ be
finite GR-processes of $N$.  \vspace{-1ex}
\begin{itemize}
\item[(a)]
$\mathcal{P}_0(N)\neq\emptyset$ and if $P\inp\mathcal{P}_0$ then $\widehat P = M_0$.
\item[(b)]
If $P \goesto[G] Q$ then $\widehat P \goesto[G] \widehat Q$.
\item[(c)]
If $\widehat P \goesto[G] M$ then there is a $Q$ with  $P \goesto[G] Q$ and $\widehat Q = M$.
\item[(d)]
$\widehat P$ is \emph{reachable} in the sense that $M_0 \goesto[] \widehat{P}$.
(This follows from (a) and (b).)
\end{itemize}
\end{observation}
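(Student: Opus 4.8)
The plan is to verify the four items in turn, dispatching (a) and (d) quickly and concentrating on (b) and (c), which together constitute the step bisimulation. A remark I would record at the outset: in any GR-process the flow relation is $\{0,1\}$-valued, since $\FF(t,s)\ge 2$ or $\FF(s,t)\ge 2$ would contradict $|\precond s|\le 1\ge|\postcond s|$; hence all pre- and postsets occurring below are plain sets, and (as noted after \refdf{extension}) a prefix of a process is exactly a causally downward-closed subnet, pinned down by its set of transitions. For (a) I would exhibit an initial process explicitly: take no transitions and no arcs, let the places be $M_0(s)$ fresh copies of each $s\in S$ all mapped by $\pi$ to $s$, and let $\MM_0$ be the set of all these places; every process axiom is then vacuous or immediate and $\pi(\MM_0)=M_0$, so $\mathcal{P}_0(N)\neq\varnothing$; moreover if $P\in\mathcal{P}_0$ then, having no transitions, it satisfies $\precond s=\varnothing$ for each of its places $s$, so $P^\circ$ is the set of all its places, which equals $\MM_0$, giving $\widehat P=\pi(\MM_0)=M_0$. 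Part (d) is then a corollary: the decomposition recalled just before the Observation puts a finite process $P$ with $n$ transitions at the top of a chain $P_0\goesto[a_1]\cdots\goesto[a_n]P_n=P$ with $P_0\in\mathcal{P}_0(N)$; by (a), $\widehat{P_0}=M_0$, and $n$ applications of (b) yield $\widehat{P_0}\goesto[a_1]\cdots\goesto[a_n]\widehat P$, which witnesses $M_0\goesto[]\widehat P$.

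For (b), suppose $P\goesto[G]Q$, so $P\le Q$, the set $K$ of transitions of $Q$ absent from $P$ consists of transitions that are maximal in $Q$ with respect to $\FF_Q^+$, and $\pi_Q(K)=G$ (writing $\pi_Q$ for $Q$'s labelling, which restricts to $P$'s labelling on $P$'s places). The heart of the matter is a structural identity, $Q^\circ=\bigl(P^\circ-\biguplus_{t\in K}\precond t\bigr)+\biguplus_{t\in K}\postcond t$, a disjoint decomposition. Using $|\precond s|\le 1\ge|\postcond s|$, maximality of $K$, and the characterisation of prefixes as downward-closed subnets, I would establish: (i) for $t\in K$ and $s\in\precond t$, the transition $t$ is the unique out-neighbour of $s$, and $s$ is a place of $P$ --- in fact $s\in P^\circ$ --- since the inplace of $s$, if any, lies below $t$, hence is not maximal, hence is a transition of $P$; (ii) the sets $\precond t$ for $t\in K$ are pairwise disjoint; (iii) for $t\in K$ every $s\in\postcond t$ is fresh and, by maximality of $t$, has no out-neighbour in $Q$, so $\bigcup_{t\in K}\postcond t$ is precisely the set of places of $Q$ absent from $P$, contained in $Q^\circ$, and these sets are pairwise disjoint; and a place of $P$ lies in $Q^\circ$ iff it lies in $P^\circ$ and in no $\precond t$. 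I then apply $\pi_Q$ --- additive on multisets --- together with the mapping axioms in the form $\pi_Q(\precond t)=\precond{\pi_Q(t)}$ and $\pi_Q(\postcond t)=\postcond{\pi_Q(t)}$, summed over $K$ with $\pi_Q(K)=G$; this converts the identity into $\precond G\subseteq\widehat P$ and $\widehat Q=(\widehat P-\precond G)+\postcond G$, i.e.\ $\widehat P\goesto[G]\widehat Q$.

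For (c), suppose $\widehat P\goesto[G]M$, so $\precond G\subseteq\widehat P$ and $M=(\widehat P-\precond G)+\postcond G$. I would build $Q$ by adjoining to $P$ one fresh transition $u$ for each occurrence of each transition $t$ in $G$, with $\pi(u):=t$; as the preplaces of $u$ I choose a subset of $P^\circ$ containing exactly $F(s,t)$ places over each $s\in S$, taking these subsets pairwise disjoint across the new transitions; as the postplaces of $u$ I take $F(t,s)$ fresh places over each $s$; $\MM_0$ is kept and the obvious new arcs are added. The pivotal observation is that this pairwise-disjoint parcelling is possible \emph{exactly} because $\precond G\subseteq\widehat P$: over each $s\in S$ there are $\widehat P(s)$ places of $P^\circ$ available, and the total demand is $\sum_u F(s,\pi(u))=(\precond G)(s)$. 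I then check that $Q$ is a GR-process --- the occurrence-net conditions because old places acquire at most one new outgoing arc and no new incoming arc while the new places are fresh leaves; acyclicity and finiteness of causal pasts because every new transition sits just above some members of $P^\circ$ and just below fresh leaves; the mapping conditions by construction --- and that $P\le Q$ with the $Q$-transitions absent from $P$ being exactly the new ones, which are $\FF_Q^+$-maximal and map onto $G$, so that $P\goesto[G]Q$. Finally the identity of (b), now read off the construction, gives $\widehat Q=(\widehat P-\precond G)+\postcond G=M$.

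I expect the main obstacle to be in (c): re-establishing \emph{all} of the GR-process requirements for the extension, and in particular recognising that the pairwise-disjoint choice of preplace bundles out of $P^\circ$ is feasible precisely under the enabledness hypothesis $\precond G\subseteq\widehat P$ --- this is the exact spot where ``$G$ is enabled at $\widehat P$'' and ``$P$ admits a $G$-extension'' are matched. Setting up the $Q^\circ$-identity cleanly (it serves both (b) and (c)) is the other mildly delicate step; the remainder is bookkeeping.
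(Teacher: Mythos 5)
The paper offers no proof of this Observation --- it is presented as easy to check, with only the parenthetical remark that (d) follows from (a) and (b), which your argument reproduces exactly. Your verification of (a)--(c) is correct and is the routine argument the authors leave implicit: the multiset identity $Q^\circ=(P^\circ-\sum_{t\in K}\precond{t})+\sum_{t\in K}\postcond{t}$ for (b), and for (c) the disjoint parcelling of preplace bundles out of $P^\circ$, which is feasible exactly because $\precond{G}\subseteq\widehat{P}$.
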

\end{TR}
Two processes $(\NN, \pi)$ and $(\NN\,', \pi')$
are \defitem{isomorphic} iff there exists an isomorphism $\phi$ from
$\NN$ to $\NN\,'$ which respects the process mapping, i.e.\
$\pi = \pi' \circ \phi$.
Here an isomorphism $\phi$ between two nets $\NN=(\SS, \TT, \FF,
\MM_0)$ and $\NN\,'=(\SS', \TT\,', \FF\,', \MM'_0)$ is a
bijection between their places and transitions such that
$\MM'_0(\phi(s))=\MM_0(s)$ for all $s\in\SS$ and
$\FF\,'(\phi(x),\phi(y))=\FF(x,y)$ for all $x,y\in \SS\cup\TT$.
\vspace{2ex}

Next we formally introduce the swapping transformation and the resulting
equivalence notion on GR-processes from \cite{best87both}.

\begin{define}{
  Let $\PP = ((\SS, \TT, \FF, \MM_0), \pi)$ be a process and
  let $p, q \in \SS$ with $(p,q) \notin \FF^+\cup (\FF^+)^{-1}$ and
  $\pi(p) = \pi(q)$.
  }
\label{df-swap}
\item[]
  Then $\swap(\PP, p, q)$ is defined as $((\SS, \TT, \FF\,', \MM_0), \pi)$ with
  \begin{equation*}
    \FF\,'(x, y) = \begin{cases}
      \FF(q, y) & \text{ iff } x = p,\, y \in \TT\\
      \FF(p, y) & \text{ iff } x = q,\, y \in \TT\\
      \FF(x, y) & \text{ otherwise. }
    \end{cases}
  \end{equation*}
\end{define}

\begin{define}{}\label{df-swapeq}
\item
  Two processes $\PP$ and $\QQ$ of the same net are
  \defitem{one step swapping equivalent} ($\PP \swapeq \QQ$) iff
  $\swap(\PP, p, q)$ is isomorphic to $\QQ$ for some places $p$ and $q$.

\item
We write $\swapeq^*$ for the reflexive and transitive closure of $\swapeq$,
and $\BD{\PP}$ for the $\swapeq^*$-equivalence class of a finite process $\PP$.
The prefix relation $\leq$ between processes is lifted to such
equivalence classes by $\BD{\PP'} \leq \BD{\PP}$ iff
$\PP' \swapeq^* \QQ' \leq \QQ \swapeq^* \PP$ for some $\QQ',\QQ$.

\item
  Two processes $\PP$ and $\QQ$ are \defitem{swapping equivalent}
  ($\PP \swapeq^\infty \QQ$) iff
  \begin{equation*}
    \begin{split}
  &{\downarrow(\{\BD{\PP'}\mid \PP'\leq \PP,~\PP'~\mbox{finite}\})}
  =\\&{\downarrow(\{\BD{\QQ'}\mid \QQ'\leq \QQ,~\QQ'~\mbox{finite}\})}
    \end{split}
  \end{equation*}
  where $\downarrow$ denotes prefix-closure under $\leq$.

\item
$\!$We call a $\swapeq^\infty$\hspace{-2pt}-equivalence class of processes
a \defitem{{\small BD-}process}, and write $\BDinf{P}$.
\end{define}
It is not hard to verify that if $\PP \swapeq^* \QQ \leq \QQ'$ then $\PP \leq
\PP' \swapeq^* \QQ'$ for some process $\PP'$. This implies that $\leq$ is a
partial order on $\swapeq^*$-equivalence classes of finite processes.
Alternatively, this conclusion follows from Theorem 4 in \cite{glabbeek11ipl}.

Our definition of $\swapeq^\infty$ deviates from the definition of
$\equiv_1^\infty$ from \cite{best87both} to make proofs easier later
on.  We conjecture however that the two notions coincide.

Note that if $\PP \swapeq^\infty \QQ$ and $\PP$ is finite, then also
$\QQ$ is finite. Moreover, for finite GR-processes $\PP$ and $\QQ$ we
have $\PP \swapeq^\infty \QQ$ iff $\PP \swapeq^* \QQ$.
Thus, for a finite GR-process $\PP$, we have $\BDinf{P}=\BD{P}$.
In that case we call $\BD{P}$ a \emph{finite} BD-process.

\begin{TR}
The following observations are easy to check.
\begin{observation}\rm\label{obs-swaptrans}
Let $P,Q,P',Q'$ be finite GR-processes of a net $N$.  \vspace{-1ex}
\begin{itemize}
\item[(a)] 
If $P\goesto[a]Q$ and $P\goesto[a]Q'$ then $Q\swapeq^* Q'$.
\item[(b)]
If $P\swapeq^*Q$ and $P\goesto[a]P'$ then $Q\goesto[a]Q'$ for some $Q'$ with
$P'\swapeq^* Q'$.
\end{itemize}
\end{observation}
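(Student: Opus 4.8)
I would base both parts on two structural facts about the step relation $\goesto[a]$ on processes. First, if $P\goesto[a]Q$ then $K:=\TT_Q\setminus\TT_P$ is a single transition $t$ with $\pi(t)=a$, maximal in $Q$; its preplaces lie in the cut $P^\circ$ --- because in $Q$ every place has at most one outgoing arc and the unique outgoing arc of a preplace of $t$ goes to $t\notin\TT_P$ --- and, by \refdf{process}, for each place $s_0$ of $N$ they comprise exactly $F(s_0,a)$ places of $P^\circ$ with $\pi$-image $s_0$, while the postplaces of $t$ are fresh places outside $\SS_P$. Hence $Q$ is determined by $P$, by $a$, and by this choice of preplaces, up to an isomorphism renaming $t$ and its postplaces. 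Second, $\swap(P,p,q)$ only redirects the arcs leaving $p$ and $q$, leaves $\pi$ untouched, and (since $\pi(p)=\pi(q)$) leaves $\widehat P$ unchanged.

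For part (a): both $Q$ and $Q'$ extend $P$ by a single $a$-labelled maximal transition, and for every place $s_0$ of $N$ each of them selects $F(s_0,a)$ members of $A_{s_0}:=\pi^{-1}(s_0)\cap P^\circ$ as preplaces with image $s_0$. The places of $P^\circ$ have empty postset, so they are pairwise $\FF^+$-incomparable; therefore, for each $s_0$ in turn, I would transform $Q$'s selection into $Q'$'s by a finite sequence of swaps $\swap(\cdot,p,q)$, each exchanging a currently selected $p\in A_{s_0}$ with a currently unselected $q\in A_{s_0}$. Each such swap is legal: $p$ and $q$ have the same $\pi$-image and remain $\FF^+$-incomparable, since neither place ever acquires an outgoing arc other than the possible one into the new transition, whose postplaces are fresh. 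The process reached after all swaps agrees with $Q'$ except for the names of the new transition and its postplaces, hence is isomorphic to $Q'$, so $Q\swapeq^*Q'$.

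For part (b): the general case follows by induction along a $\swapeq$-chain realising $P\swapeq^*Q$, transporting the extension $P'$ across the isomorphisms occurring in that chain and using $\swap(P,p,p)=P$; so it suffices to treat a single swap $Q=\swap(P,p,q)$. I would first check that $\swap(P',p,q)$ is well-defined: $\pi(p)=\pi(q)$ still holds, and $p,q$ stay $\FF^+$-incomparable in $P'$, because every new $\FF^+$-pair of $P'$ runs through the freshly added maximal transition $t'$ and its fresh maximal postplaces, which cannot lie on a path between two places of $\SS_P$. Setting $Q':=\swap(P',p,q)$, swapping only redirects the arcs out of $p$ and $q$, so $Q'$ extends $Q$ by precisely the transition $t'$ --- still labelled $a$, still maximal, with the same fresh postplaces --- whence $Q\goesto[a]Q'$; and $Q'=\swap(P',p,q)$ gives $P'\swapeq Q'$, so $P'\swapeq^*Q'$.

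I expect the main obstacle to be the bookkeeping in part (a): verifying that every intermediate swap is legal and that after performing all of them one really lands on a process isomorphic to $Q'$ --- which needs a careful pairing of the fresh postplaces of the two new transitions and the convention that processes are identified up to isomorphism. In part (b) the only genuine subtlety is that $\swap(P',p,q)$ must remain defined, which is exactly the maximality-and-freshness argument sketched above.
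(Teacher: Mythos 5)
The paper offers no proof of this observation at all---it is introduced only with ``The following observations are easy to check''---so there is nothing to compare your argument against except the intended folklore reasoning, which you have reconstructed correctly. Your proposal is sound and supplies exactly the details one would expect: part (a) reduces to the fact that the two one-transition extensions of $P$ differ only in which places of $P^\circ$ over each $\pi$-image are selected as preplaces of the new maximal transition (the postplaces being fresh and hence absorbable into the isomorphism), a difference bridged by legal transposition swaps within each fibre $\pi^{-1}(s_0)\cap P^\circ$; part (b) correctly reduces, by induction along the $\swapeq$-chain, to commuting a single swap past a one-transition extension, where the only genuine point of care---that $p$ and $q$ stay $\FF^+$-incomparable in $P'$ so that $\swap(P',p,q)$ is defined---is settled by your observation that all new $\FF^+$-pairs end in the fresh maximal transition or its fresh maximal postplaces.
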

For GR-processes $P$ and $Q$ we write $\BD{P}\goesto[a]\BD{Q}$ if $P\goesto[a]Q'$
for some $Q'\inp\BD{Q}$. By \refobs{swaptrans}(b) this implies that
for any $P'\inp\BD{P}$ there is a $Q'\inp\BD{Q}$ with $P'\goesto[a]Q'$.
By \refobs{swaptrans}(a), for any BD-process $\BD{P}$ of a $N$ and any transition
$a$ of $N$ there is at most one BD-process $\BD{Q}$ with $\BD{P}\goesto[a]\BD{Q}$.
\end{TR}
\vspace{2ex}

We define a \emph{BD-run} as a more abstract and more general form
of BD-process. Like a BD-process, a BD-run is completely determined by
its finite approximations, which are finite BD-processes; however, a
BD-run does not require that these finite approximations are generated
by a given GR-process.

\begin{define}{
  Let $N$ be a net.
}\label{bd-run}
\item[]
  A \defitem{BD-run} $\R$ of $N$ is a non-empty set of finite BD-processes of $N$ such that
  \begin{itemise}
    \item $\BD{P} \leq \BD{Q} \in \R \implies \BD{P} \in \R$ ($\R$ is prefix-closed), and
    \item $\BD{P}, \BD{Q} \in \R \implies \exists \BD{U} \in \R. \BD{P}
      \leq \BD{U} \wedge \BD{Q} \leq \BD{U}$ ($\R$ is directed).
  \end{itemise}
\end{define}
The class of finite BD-processes and the finite elements (in the set
theoretical sense) in the class of BD-runs are in bijective correspondence.
Every finite BD-run $\R$ must have a largest element, say
$\BD{P}$, and the set of all prefixes of $\BD{P}$ is $\R$.
Conversely, the set of prefixes of a finite BD-process $\BD{P}$  
is a finite BD-run of which the largest element is again $\BD{P}$.

We now define a canonical mapping from GR-processes to BD-runs.

\begin{define}{
  Let $N$ be a net and $\PP$ a process thereof.
}\label{def-bdify}
\item[]
  Then $\BDify{\PP} := \mathord\downarrow \{\BD{\PP'} \mid
    \PP' \leq \PP,~ \PP' \text{ finite}\}$.
\end{define}

\begin{lemma}\rm\label{lem-bdprocessisrun}
  Let $N$ be a net and $\PP$ a process thereof.

  Then $\BDify{\PP}$ is a BD-run.
\end{lemma}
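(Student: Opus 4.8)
The statement asks us to show that $\BDify{\PP} = \mathord\downarrow\{\BD{\PP'} \mid \PP' \leq \PP,~\PP'~\text{finite}\}$ is a BD-run, i.e.\ that it is non-empty, prefix-closed, and directed. Non-emptiness and prefix-closure are almost immediate: $\PP$ has at least one finite prefix (any initial process obtained by restricting to an empty transition set, which exists by \refobs{bisimulation}(a) in the TR version, or directly since the empty-transition restriction of $\PP$ is a process of $N$), so the generating set is non-empty; and prefix-closure holds by construction, since $\BDify{\PP}$ is explicitly defined as a $\downarrow$-closure. So the only real content is \textbf{directedness}.

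For directedness, suppose $\BD{\QQ_1}, \BD{\QQ_2} \in \BDify{\PP}$. By definition of $\downarrow$, there are finite prefixes $\PP_1, \PP_2 \leq \PP$ with $\BD{\QQ_i} \leq \BD{\PP_i}$ for $i=1,2$. It suffices to find a single finite BD-process $\BD{\PP_3}$ with $\BD{\PP_i} \leq \BD{\PP_3}$ for both $i$ and with $\BD{\PP_3} \in \BDify{\PP}$; then $\BD{\PP_3}$ serves as the required upper bound. The natural candidate is to take $\PP_3 := \PP_1 \cup \PP_2$, meaning the prefix of $\PP$ determined by the transition set $\TT_1 \cup \TT_2$ (where $\TT_i$ is the transition set of $\PP_i$). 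Using the remark after \refdf{extension} --- that any downward-$\FF^+$-closed subset of $\TT$ determines a prefix of $\PP$, and that the union of two downward-closed sets is downward-closed --- this $\PP_3$ is a well-defined finite prefix of $\PP$ (finite since both $\TT_i$ are), hence $\BD{\PP_3} \in \BDify{\PP}$. It remains to check $\PP_i \leq \PP_3$ for $i=1,2$, which is clear at the level of GR-processes since $\TT_i \subseteq \TT_1\cup\TT_2$ and the structural components of $\PP_i$ are the restrictions of those of $\PP_3$ (both being restrictions of $\PP$); and $\PP_i \leq \PP_3$ implies $\BD{\PP_i} \leq \BD{\PP_3}$ directly from the definition of $\leq$ on $\swapeq^*$-classes.

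\textbf{The main obstacle} I anticipate is the bookkeeping around extensional equivalence of multisets and the precise interplay between the prefix order on GR-processes and the lifted prefix order on $\swapeq^*$-classes --- specifically confirming that $\PP_i \leq \PP_3$ at the GR-level really does give $\BD{\PP_i}\leq\BD{\PP_3}$ without needing any genuine swapping (it does, by taking $\QQ'=\PP_i$, $\QQ=\PP_3$ in the definition of $\leq$ on classes), and that $\PP_1\cup\PP_2$ as a sub-structure of $\PP$ genuinely satisfies all clauses of \refdf{process} (it does, since being a prefix of a process is inherited). These are routine but must be stated carefully. An alternative, slightly slicker route for directedness avoids even forming $\PP_1\cup\PP_2$ explicitly: observe that the set of finite prefixes of $\PP$ is itself directed under $\leq$ (again via $\TT_1\cup\TT_2$), and that $\BD{(-)}$ is monotone, so its image is directed, and $\downarrow$ of a directed set in a poset where $\leq$ on classes is a partial order (established in the text just before \refdf{swapeq}, or via Theorem~4 of \cite{glabbeek11ipl}) remains directed. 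Either way the crux is the $\TT_1 \cup \TT_2$ construction; everything else is unwinding definitions.
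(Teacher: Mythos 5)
Your proof is correct, and the union-of-transition-sets construction for directedness is the standard argument; the only non-trivial ingredients you use (that a downward-$\FF^+$-closed transition set determines a prefix, and that $\leq$ on $\swapeq^*$-classes is transitive) are both explicitly available in the text surrounding \refdf{extension} and \refdf{swapeq}. Note that the paper itself gives no in-line proof for this lemma but defers entirely to Lemma~1 of \cite{glabbeek11ipl}, so there is nothing in this document to compare against; your argument is a self-contained substitute for that citation.
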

\begin{proof}
  See \cite[Lemma 1]{glabbeek11ipl}.
\qed
\end{proof}

\noindent
This immediately yields an injective function from BD-processes to BD-runs, since
by \refdf{swapeq}, $\PP \swapeq^\infty \QQ$ iff $\BDify{\PP} = \BDify{\QQ}$.
For countable nets (i.e.\ nets with countably many places and
transitions), this function is even a bijection.

\begin{lemma}\rm\label{lem-bdprocessexists}
  Let \begin{TR}$N=(S,T,F,M_0)$\end{TR}\begin{CONCUR}$N$\end{CONCUR}
  be a countable net and $\R$ a BD-run of $N$.

Then $\R$ is countable and
there exists a process $\PP$ of $N$ such that $\R = \BDify{\PP}$.
\end{lemma}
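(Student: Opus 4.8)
The plan is to bound the cardinality of $\R$ and then to realise $\R$ as $\BDify{\PP}$, where $\PP$ is the componentwise union of an ascending chain of finite GR-processes read off from a cofinal chain inside $\R$. As $N$ is countable, so are $T$ and hence $T^*$. By \refobs{bisimulation}(a) the set $\mathcal{P}_0(N)$ is non-empty, and any two of its members are isomorphic --- pair up, over each $s\in S$, the $M_0(s)$ places mapped to $s$ --- so they constitute a single BD-process $\iota$. As recalled just before \refobs{bisimulation}, every finite GR-process arises from a member of $\mathcal{P}_0(N)$ by a sequence of single-transition steps, so every finite BD-process is reached by some $\iota\goesto[a_1]\cdots\goesto[a_n]$; and by \refobs{swaptrans}(a) this one-step relation on BD-processes is deterministic, so such a BD-process is determined by the word $a_1\cdots a_n\in T^*$. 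Hence $N$ has at most countably many finite BD-processes, and $\R$, being a set of these, is countable.

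Being non-empty and prefix-closed, $\R$ contains $\iota$. Enumerate $\R=\{r_0,r_1,\dots\}$ and, using directedness, build a chain of BD-processes $B_0:=\iota\leq B_1\leq B_2\leq\cdots$ inside $\R$ with $r_i\leq B_{i+1}$ for all $i$; it is then cofinal in $\R$. I would lift it to an ascending chain of genuine GR-processes $P_0\leq P_1\leq\cdots$ with $P_i\in B_i$ as follows. Take $P_0\in\mathcal{P}_0(N)$; given $P_i\in B_i$, the relation $B_i\leq B_{i+1}$ means, by \refdf{swapeq}, that $P_i\swapeq^*Y\leq Z\swapeq^*W$ for some GR-processes $Y,Z$ and some $W\in B_{i+1}$, so applying the fact recorded just after \refdf{swapeq} --- if $\PP\swapeq^*\QQ\leq\QQ'$ then $\PP\leq\PP'\swapeq^*\QQ'$ for some $\PP'$ --- to $P_i\swapeq^*Y\leq Z$ yields $P_{i+1}$ with $P_i\leq P_{i+1}\swapeq^*Z$, hence $P_{i+1}\in B_{i+1}$. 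Let $\PP$ be the componentwise union of the $P_i$, which is well defined because they form a $\leq$-chain. Then I would verify that $\PP$ is a GR-process of $N$ in the sense of \refdf{process}: acyclicity, finiteness of $\{t\mid(t,u)\in\FF^+\}$, and finite synchronisation all pass to the union because any finite witness to their failure already lives in some $P_i$ (a prefix being downward closed under $\FF^+$); and the equalities linking $\pi$ to $F$ together with the initial-marking clauses survive because, once an element of $\PP$ has appeared, the data constraining it never changes again --- in a step $P_i\goesto[a]P_{i+1}$ the fresh place(s) arise solely as postplaces of the fresh transition, so the pre- and postset of every transition, and the preset of every place, are frozen from the stage they appear, and hence so are the counts $|\pi^{-1}(s)\cap\precond t|$ and $|\pi^{-1}(s)\cap\postcond t|$; the initial-marking conditions hold since $P_0\in\mathcal{P}_0(N)$ and $\MM_0$ is common to all $P_i$.

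It remains to check $\BDify{\PP}=\R$. For $\supseteq$: each $r_i\leq B_{i+1}=\BD{P_{i+1}}$ with $P_{i+1}\leq\PP$ finite, so $r_i\in\BDify{\PP}$; as $\R=\{r_0,r_1,\dots\}$ this gives $\R\subseteq\BDify{\PP}$. For $\subseteq$: if $\PP'\leq\PP$ is finite then its transition set is finite, hence contained in the transition set of some $P_i$; since a prefix is uniquely determined by its ($\FF^+$-downward closed) transition set, $\PP'$ is a prefix of $P_i$ as well, whence $\BD{\PP'}\leq\BD{P_i}=B_i\in\R$; as $\R$ is prefix-closed and $\BDify{\PP}$ is by definition the prefix-closure of all such $\BD{\PP'}$, we conclude $\BDify{\PP}\subseteq\R$.

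The main obstacle is the coherent lifting of the cofinal chain of BD-processes to an ascending chain of actual GR-processes whose union is still a GR-process: this is where the interplay between $\swapeq^*$ and $\leq$ is essential, and the technically delicate (though routine) point is the stabilisation of transitions' pre- and postsets, needed to confirm that the union satisfies every clause of \refdf{process}.
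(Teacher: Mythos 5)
Your proof is correct and follows essentially the same route as the paper's: enumerate $\R$, use directedness together with the fact that $\PP \swapeq^* \QQ \leq \QQ'$ implies $\PP \leq \PP' \swapeq^* \QQ'$ to build an ascending chain of GR-processes whose BD-classes are cofinal in $\R$, take the componentwise union, and verify both inclusions exactly as the paper does. The only (inessential) deviation is your countability argument via firing-sequence words and determinism of the transition relation on BD-processes, where the paper instead counts isomorphism classes of processes with $n$ transitions directly.
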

\begin{TR}
\begin{proof}
  Up to isomorphism there is only one GR-process $((\SS, \TT, \FF,
  \MM_0), \pi)$ of $N$ with $|\TT| = 0$.  Furthermore, as $N$ is
  countable, up to isomorphism there are only countably many with
  $|\TT| = 1$, countably many with $|\TT| = 2$ and so on. Given
  that isomorphic GR-processes are swapping equivalent, there are only
  countably many BD-processes with any given finite number of
  transitions. Hence we can enumerate all finite BD-processes of
  $N$. As $\R$ contains only finite BD-processes of $N$, it, too, must
  be countable.

  We construct a sequence of processes
  $\PP_i = ((\SS_i, \TT_i, \FF_i, {\MM_0}), \pi_i)$.
  We start with $\PP_0 = ((\SS_0, \varnothing, \varnothing, \MM_0), \pi_0)$
  where $\SS_0:=\{(s,i)\in S\times\bbbn \mid i<M_0(s)\}$, $\MM_0(s,i)=1$ and $\pi_0(s,i)=s$.
  As $\BD{\PP_0}$ is a prefix of every finite BD-process of $N$, $\BD{\PP_0} \in \R$.

  We can enumerate the elements of $\R$ as $\QQ_1, \QQ_2, \ldots$.
  Now given a process $\PP_i$ with $\BD{\PP_i} \in \R$, consider $\BD{\QQ_i} \in \R$.
  As $\R$ is directed, there exists a $\BD{\PP'} \in \R$ with
  $\BD{\PP_i} \leq \BD{\PP'} \wedge \BD{\QQ_i} \leq \BD{\PP'}$ which is to say
  there exists some $\PP_{i+1}$ with $\PP_i \leq \PP_{i+1} \swapeq^* \PP'$,
  and some $\QQ'$ with $\QQ_i \leq \QQ' \swapeq^* \PP'$.
  We have $\BD{\PP_{i+1}} =\BD{\PP'} \in \R$.

  The limit $((\bigcup_{i=0}^\infty\SS_i, \bigcup_{i=0}^\infty\TT_i,
  \bigcup_{i=0}^\infty\FF_i, \MM_0), \bigcup_{i=0}^\infty\pi_i)$
  of the $\PP_i$ is the $\PP$ we had to find.
  We need to show that $\R = \BDify{\PP}$.

  Take any element $\BD{\QQ_i}$ of $\R$. Per construction,
  $\BD{\QQ_i} \leq \BD{P_{i+1}}$ and $P_{i+1} \leq P$, so
  $\BD{\QQ_i} \in \BDify{\PP_i}$.
  Hence $\R \subseteq \BDify{\PP}$.

  Now take any $\BD{\QQ} \in \BDify{\PP} = \mathord\downarrow\{\BD{\PP'} \mid
    \PP' \leq \PP, \PP' \text{ finite}\}$.
  Then there exist some finite $\QQ'$ such that
  ${\BD{\QQ} \leq \BD{\QQ'}} \wedge {\QQ' \leq \PP}$.
  The process $\QQ'$ has finitely many transitions.
  Hence there exists some $i$ such that all of these transitions occur in
  $\PP_i$ and as $\QQ' \leq \PP$ then also $\QQ' \leq \PP_i$.
  Since $\BD{\PP_i} \in \R$ and $\R$ is prefix closed, we have $\BD{\QQ'} \in \R$
  and $\BD{\QQ} \in \R$.
  \qed
\end{proof}
\end{TR}
\reflem{bdprocessexists} does not hold for uncountable nets,
as witnessed by the counterexample in \reffig{uncountable}.
This net $N$ has a transition $t$ for each real number $t\inp\bbbr$.
Each such transition has a private preplace $s_t$ with $M_0(s_t)=1$
and $F(s_t,t)=1$, which ensures that $t$ can fire only
once. Furthermore there is one shared place $s$ with $M_0(s)=2$ and a
loop $F(s,t)=F(t,s)=1$ for each transition $t$. There are no other
places, transitions or arcs besides the ones mentioned above.

Each GR-process of $N$, and hence also each BD-process, has only
countably many transitions. Yet, any two GR-processes firing the same
finite set of transitions of $N$ are swapping equivalent, and the set
of all finite BD-processes of $N$ constitutes a single BD-run
involving all transitions.

\begin{figure}
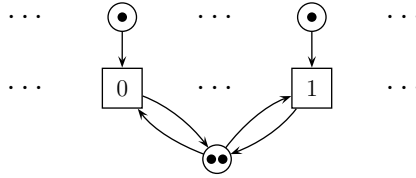

\vspace{-2ex}
  \begin{center}
    \psscalebox{0.9}{
    \begin{petrinet}(8,4)
      \P(2,3.5):p0;
      \P(6,3.5):p1;

      \p(4,0.5):pall;
      \pscircle*(3.88,0.5){0.1}
      \pscircle*(4.12,0.5){0.1}

      \t(2,2):t0:$0$;
      \t(6,2):t1:$1$;

      \a p0->t0;
      \a p1->t1;

      \A pall->t0; \A t0->pall;
      \A pall->t1; \A t1->pall;
      
      \rput(0,3.5){\psscalebox{2}{$\cdots$}}
      \rput(8,3.5){\psscalebox{2}{$\cdots$}}
      \rput(4,3.5){\psscalebox{2}{$\cdots$}}
      \rput(0,2.0){\psscalebox{2}{$\cdots$}}
      \rput(8,2.0){\psscalebox{2}{$\cdots$}}
      \rput(4,2.0){\psscalebox{2}{$\cdots$}}
    \end{petrinet}
    }
  \end{center}
  \vspace{-3ex}
  \caption{A net with no maximal GR-process, but with a maximal BD-run.}
  \label{fig-uncountable}
  \vspace{-2ex}
\end{figure}

\noindent
We now show that the mapping \textit{BD} respects the ordering of processes.

\begin{lemma}\rm\label{lem-procleqimplrunleq}
  Let $N$ be a net, and $\PP$ and $\PP'$ two GR-processes of $N$.

  If $\PP \leq \PP'$ then $\BDify{\PP} \subseteq \BDify{\PP'}$.
\end{lemma}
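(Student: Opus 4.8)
The plan is to obtain the inclusion directly from two elementary facts: transitivity of the prefix order $\leq$ on GR-processes, and monotonicity of the prefix-closure operator $\downarrow$. No construction or limit argument is required here, in contrast to \reflem{bdprocessexists}.

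First I would record that $\leq$ is transitive on GR-processes. If $\PP'' \leq \PP \leq \PP'$, then by \refdf{extension} the place set of $\PP''$ is contained in that of $\PP$, which is contained in that of $\PP'$ (similarly for transitions); the initial markings coincide throughout; and the flow relation and process mapping of $\PP''$ are restrictions of those of $\PP$, which are in turn restrictions of those of $\PP'$ — and a restriction of a restriction is the restriction to the smaller domain. Hence $\PP'' \leq \PP'$.

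Now suppose $\PP \leq \PP'$. For any finite GR-process $\PP''$ with $\PP'' \leq \PP$, transitivity gives $\PP'' \leq \PP'$, so $\BD{\PP''}$ is one of the generators of $\BDify{\PP'}$ in the sense of Definition~\ref{def-bdify}. Thus the generating set $\{\BD{\PP''} \mid \PP'' \leq \PP,~\PP'' \text{ finite}\}$ of $\BDify{\PP}$ is contained in the generating set $\{\BD{\QQ} \mid \QQ \leq \PP',~\QQ \text{ finite}\}$ of $\BDify{\PP'}$. Since $S \subseteq S'$ implies $\downarrow S \subseteq \downarrow S'$ (prefix-closure under the lifted order of \refdf{swapeq} being monotone), applying $\downarrow$ to both sides yields $\BDify{\PP} \subseteq \BDify{\PP'}$. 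I do not expect a genuine obstacle; the only point requiring care is the change of level between processes and $\swapeq^*$-equivalence classes, but this is harmless because the argument only ever uses the one-way implication ``$\PP'' \leq \PP' \Rightarrow \BD{\PP''}$ is a generator of $\BDify{\PP'}$'', which is immediate from the definition of \textit{BD}.
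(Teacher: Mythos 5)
Your proposal is correct and follows essentially the same route as the paper's own proof: the paper likewise observes that $\PP \leq \PP'$ makes the set of finite prefixes of $\PP$ a subset of those of $\PP'$ (implicitly via transitivity of $\leq$), passes to the sets of $\swapeq^*$-classes, and applies monotonicity of $\downarrow$. Your explicit check of transitivity from \refdf{extension} is a harmless elaboration of a step the paper leaves tacit.
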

  \begin{TR}
$\begin{array}{@{}l@{\hspace{23pt}}l@{\hspace{23pt}}r@{}}
\textit{Proof.} &
    \PP \leq \PP'
    \implies \\&
    \{\QQ \mid \QQ \leq \PP,\, \QQ \text{ finite}\} \subseteq
      \{\QQ' \mid \QQ' \leq \PP',\, \QQ' \text{ finite}\}
    \implies \\&
    \{\BD{\QQ} \mid \QQ \leq \PP,\, \QQ \text{ finite}\} \subseteq
      \{\BD{\QQ'} \mid \QQ' \leq \PP',\, \QQ' \text{ finite}\}
    \implies \\&
    \mathord\downarrow\{\BD{\QQ} \mid \QQ \leq \PP,\, \QQ \text{ finite}\} \subseteq
      \mathord\downarrow\{\BD{\QQ'} \mid \QQ' \leq \PP',\, \QQ' \text{ finite}\}
    \implies \\&
    \BDify{\PP} \subseteq \BDify{\PP'}.
& \mbox{\qed}
\end{array}$
  \end{TR}

\section{Conflicts in place/transition systems}\label{sec-conflict}

We recall the canonical notion of conflict introduced in
\cite{goltz86howmany}.

\begin{define}{
  Let $N \mathbin= (S, T, F, M_0)$ be a net and $M \in
  \bbbn^S\!$.
}\label{df-semanticconflict}
\item
  A finite, non-empty multiset $G \in \bbbn^T$ is in
  \defitem{(semantic) conflict} in $M$ iff\\
  $(\forall t \in G. M\goesto[G \restrictedto \{t\}]) \wedge \neg M\goesto[G]$.
\item
  $N$ is \defitem{(semantic) conflict-free} iff
  no finite, non-empty multiset $G \in \bbbn^T$ is in semantic conflict in any
  $M$ with $M_0 \goesto[] M$.
\item
  $N$ is \defitem{binary-conflict-\!-free} iff
  no multiset $G \in \bbbn^T$ with $|G| = 2$ is in semantic conflict in any
  $M$ with $M_0 \goesto[] M$.
\end{define}

\begin{trivlist}
\item[\hspace{\labelsep}\bf Remark:]
In a net $(S,T,F,M_0)$ with $S=\{s\}$, $T=\{t,u\}$, $M_0(s)=1$ and
$F(s,t)=F(s,u)=1$, the multiset $\{t,t\}$ is not enabled in $M_0$.
For this reason the multiset $\{t,t,u\}$ does not count as being in
conflict in $M_0$, even though it is not enabled. However, its subset
$\{t,u\}$ is in conflict.
\end{trivlist}

\begin{TR}
\begin{lemma}\rm\label{lem-cfdiamond}
  Let $N\mathbin=(S,T,F,M_0)$ be a binary-conflict-\!-free net,
  $a,b\inp T$ with $a\mathbin{\neq} b$, and $P,P',Q$ be finite GR-processes of $N$.

  If $P\goesto[a]P'$ and $P\goesto[b]Q$
  then $\widehat P \goesto[\{a,b\}]$ and $\exists Q'. P'\goesto[b]Q' \wedge \BD{Q}\goesto[a]\BD{Q'}$.
\end{lemma}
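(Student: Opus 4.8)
The plan is to transport the evident ``diamond'' from the level of markings --- where it is available because $\widehat P$ is reachable and $N$ is binary-conflict-\!-free --- back to the level of processes along the step bisimulation of \refobs{bisimulation}, and then to close it up using the swapping observations \refobs{swaptrans}(a) and~(b). For the first claim, from $P\goesto[a]P'$ and $P\goesto[b]Q$ \refobs{bisimulation}(b) yields $\widehat P\goesto[a]\widehat{P'}$ and $\widehat P\goesto[b]\widehat Q$, so the singleton steps $\{a\}$ and $\{b\}$ are both enabled in $\widehat P$, while $\widehat P$ is reachable by \refobs{bisimulation}(d). As $a\neq b$, the multiset $\{a,b\}$ has cardinality $2$; if it were not enabled in $\widehat P$ it would be in semantic conflict there, contradicting binary-conflict-\!-freeness. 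Hence $\widehat P\goesto[\{a,b\}]$.

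Next I would lift this step back to processes. Let $M$ be the marking with $\widehat P\goesto[\{a,b\}]M$. By \refobs{bisimulation}(c) there is a finite process $U$, with transition set $\TT_U$, flow relation $\FF_U$ and process mapping $\pi$, such that $P\goesto[\{a,b\}]U$ and $\widehat U=M$. Writing $\TT_P$ for the transitions of $P$ and $K:=\TT_U\setminus\TT_P$, the requirement $\pi(K)=\{a,b\}$ together with $a\neq b$ forces $K=\{u_a,u_b\}$ with $\pi(u_a)=a$, $\pi(u_b)=b$, and both $u_a,u_b$ maximal with respect to $\FF_U^+$. Since $\TT_P$ is downward closed with respect to $\FF_U^+$ (because $P\leq U$) and $u_a,u_b$ are maximal, the sets $\TT_P\cup\{u_a\}$ and $\TT_P\cup\{u_b\}$ are downward closed as well, hence determine prefixes $U_a$ and $U_b$ of $U$ with $P\leq U_a\leq U$ and $P\leq U_b\leq U$; reading off the newly added transitions gives $P\goesto[a]U_a$, $P\goesto[b]U_b$, $U_a\goesto[b]U$ and $U_b\goesto[a]U$.

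It then remains to close the diamond. By \refobs{swaptrans}(a), $P\goesto[a]P'$ together with $P\goesto[a]U_a$ yields $P'\swapeq^* U_a$, and $P\goesto[b]Q$ together with $P\goesto[b]U_b$ yields $Q\swapeq^* U_b$. Applying \refobs{swaptrans}(b) to $U_a\swapeq^* P'$ and the step $U_a\goesto[b]U$ produces a process $Q'$ with $P'\goesto[b]Q'$ and $U\swapeq^* Q'$; this $Q'$ is the witness required by the statement. Finally, $Q\swapeq^* U_b$ gives $\BD Q=\BD{U_b}$, the step $U_b\goesto[a]U$ gives $\BD{U_b}\goesto[a]\BD U$ by the definition of the transition relation on BD-processes, and $U\swapeq^* Q'$ gives $\BD U=\BD{Q'}$; hence $\BD Q\goesto[a]\BD{Q'}$.

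The one genuinely delicate point is the lifting step: there is no hope of relating $Q$ and $Q'$ directly, since GR-processes with the same final marking need not be swapping equivalent, so the argument has to route through an actual two-transition common extension $U$ of $P$ rather than through two separate one-transition extensions. Once $U$ is available, factoring $P\goesto[\{a,b\}]U$ into the four single-transition steps and invoking \refobs{swaptrans} twice is routine; the only verification calling for some care is that $\TT_P\cup\{u_a\}$ and $\TT_P\cup\{u_b\}$ are downward closed under $\FF_U^+$, which is precisely where maximality of $u_a$ and $u_b$ is used.
\qed
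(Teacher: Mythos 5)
Your proof is correct and follows essentially the same route as the paper: derive $\widehat P\goesto[\{a,b\}]$ from reachability, \refobs{bisimulation} and binary-conflict-\!-freeness, then lift back to processes via \refobs{bisimulation}(c) and close the diamond with \refobs{swaptrans}(a) and~(b). The only (harmless) variation is in the lifting step: the paper extends $P'$ by a $b$-transition ``chosen so as to use no tokens produced by the preceding $a$-transition'' and then commutes it, whereas you apply \refobs{bisimulation}(c) to the two-element step $\{a,b\}$ and factor the resulting common extension $U$ into its four single-transition edges --- a slightly more explicit way of making the same point.
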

\begin{proof}
  Suppose $P\goesto[a]P'$ and $P\goesto[b]Q$ with  $a\neq b$.
  Then $M_0\goesto[] \widehat P$ by \refobs{bisimulation}(d).
  Moreover, $\widehat P \goesto[a] \widehat P'$ and $\widehat P\goesto[b]\widehat Q$
  by \refobs{bisimulation}(b).
  Hence, as $N$ is binary-conflict-\!-free, $\widehat P \goesto[\{a,b\}]$.
  Therefore $\widehat P' \goesto[b]M$ for some $M$. Using \refobs{bisimulation}(c),
  there exists a GR-process $Q'$ with $P'\goesto[b]Q'$ and $\widehat Q' = M$.
  This $Q'$ can be chosen in such a way that the $b$-transition uses no tokens that
  are produced by the preceding $a$ transition.
  We then obtain that $P\goesto[b]Q''\goesto[a]Q'$ for some $Q''$.
  By \refobs{swaptrans}(a), $Q\swapeq^* Q''$, and hence $\BD{Q}\goesto[a]\BD{Q'}$.
\qed
\end{proof}
\end{TR}

\noindent
We proposed in \cite{glabbeek11ipl} a class of \mbox{P\hspace{-1pt}/T} systems
where the structural definition of conflict in terms of shared preplaces, as
often used in Petri net theory, matches the semantic definition of conflict as
given above. We called this class of nets \defitem{structural conflict
nets}. For a net to be a structural conflict net, we require that two
transitions sharing a preplace will never occur both in one step.

\begin{define}{
  Let $N \mathbin= (S, T, F, M_0)$ be a net.
}\label{df-structuralconflict}
\item[]
  $N$ is a \defitem{structural conflict net} iff
  $\forall t, u.
    (M_0 \goesto[]\;\goesto[\{t, u\}]) \implies
    \precond{t} \cap \precond{u} = \emptyset$.
\end{define}
Note that this excludes self-concurrency from the possible behaviours in a
structural conflict net: as in our setting every transition has at least one
preplace, $t = u$ implies $\precond{t} \cap \precond{u} \ne \emptyset$.  Also
note that in a structural conflict net a non-empty, finite multiset $G$ is in
conflict in a marking $M$ iff $G$ is a set and two distinct transitions in $G$
are in conflict in $M$. Hence a structural conflict net is conflict-free if
and only if it is binary-conflict-\!-free.  Moreover, two transitions enabled
in $M$ are in (semantic) conflict iff they share a preplace.

\section{A conflict-free net has exactly one maximal run}
\label{sec-finiteruns}

\noindent
In this section, we recapitulate results from \cite{best87both},
giving an alternative characterisation of runs of a net in terms of
firing sequences. We use an adapted notation and terminology and a
different treatment of infinite runs, as in \cite{glabbeek11ipl}.
As a main result of the present paper, we then prove in this
setting that a conflict-free net has exactly one maximal run. In the
following section, this result will be transferred to BD-processes.

The behaviour of a net can be described not only by its processes, but also by
its firing sequences. The imposed total order on transition firings abstracts
from information on causal dependence, or concurrency, between transition
firings.  To retrieve this information we introduce an \defitem{adjacency}
relation on firing sequences, recording which interchanges of transition
occurrences are due to semantic independence of transitions. Hence adjacent
firing sequences represent the same run of the net. We then define
\defitem{FS-runs} in terms of the resulting equivalence classes of firing
sequences.

\begin{define}{
  Let $N = (S, T, F, M_0)$ be a net, and $\sigma, \rho \in \FS(N)$.
}\label{df-connectedto}
\item
  $\sigma$ and $\rho$ are \defitem{adjacent}, $\sigma \leftrightarrow \rho$,
  iff $\sigma = \sigma_1 t u \sigma_2$, $\rho = \sigma_1 u t \sigma_2$ and
  $M_0 \goesto[\sigma_1]\goesto[\{t, u\}]$.
\item
  We write $\connectedto$ for the reflexive and transitive closure of $\adjacent$,
  and $[\sigma]$ for the $\connectedto$-equivalence class of a firing sequence $\sigma$.
\end{define}

\noindent
Note that $\connectedto$-related firing sequences contain the same
(finite) multiset of transition occurrences.
When writing $\sigma \connectedto \rho$ we implicitly claim that $\sigma, \rho \in \FS(N)$.
Furthermore $\sigma \connectedto \rho \wedge \sigma\mu \in \FS(N)$
implies $\sigma \mu \connectedto \rho \mu$ for all $\mu \in T^*$.

The following definition introduces the notion of \defitem{partial}
FS-run which is a formalisation of the intuitive concept of a finite,
partial run of a net.

\begin{define}{
  Let $N$ be a net and $\sigma, \rho \in \FS(N)$.
}\label{df-partialrun}
\item
  A \defitem{partial FS-run} of $N$ is an $\connectedto$-equivalence class
  of firing sequences.
\item
  A partial FS-run $[\sigma]$ is a \defitem{prefix} of another partial FS-run
  $[\rho]$, notation \mbox{$[\sigma]\leq[\rho]$}, iff
  $\exists \mu. \sigma \leq \mu \connectedto\! \rho$.
\end{define}
Note that $\sigma' \connectedto\! \sigma \leq \mu$ implies
$\exists \mu'. \sigma' \leq \mu' \connectedto \mu$;
thus the notion of prefix is well-defined, and a partial order.

Similar to the construction of BD-runs out of finite BD-processes,
the following concept of an FS-run extends the notion of a partial
FS-run to possibly infinite runs, in such a way that an FS-run is
completely determined by its finite approximations.

\begin{define}{
  Let $N$ be a net.
}\label{df-run}
\item[]
  An \defitem{FS-run} of $N$ is a non-empty, prefix-closed and directed set of
  partial FS-runs of $N$.
\end{define}
There is a bijective correspondence between partial FS-runs and the
finite elements in the class of FS-runs, just as in the case of
BD-runs in Section~\ref{sec-semantics}.
Much more interesting however is the following bijective
correspondence between BD-runs and FS-runs.

\begin{theorem}\rm\label{thm-fsbd}
  There exists a bijective function $\Pi$ from FS-runs to BD-runs
  such that $\Pi(\R) \subseteq \Pi(\R')$ iff $\R \subseteq \R'$.
\end{theorem}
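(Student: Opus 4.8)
The plan is to construct $\Pi$ by matching up the finite building blocks on each side: partial FS-runs $[\sigma]$ and finite BD-processes $\BD{P}$. First I would establish a bijection $\varphi$ between $\connectedto$-equivalence classes of firing sequences and $\swapeq^*$-equivalence classes of finite GR-processes, respecting the respective prefix orders. In one direction, given a firing sequence $\sigma = t_1 t_2 \cdots t_n$, repeatedly apply Observation~\ref{obs-bisimulation}(c) to build a chain $P_0 \goesto[t_1] P_1 \goesto[t_2] \cdots \goesto[t_n] P_n$ starting from an initial process, and set $\varphi([\sigma]) := \BD{P_n}$. By Observation~\ref{obs-swaptrans}(a) this is independent of the choices made along the chain; adjacency of $\sigma$ and $\rho$ (an interchange $tu \leftrightarrow ut$ with $\widehat P \goesto[\{t,u\}]$) produces processes related by a single swap, using that the $u$-occurrence can be chosen to consume none of the tokens produced by the $t$-occurrence (the argument already used in the proof of Lemma~\ref{lem-cfdiamond}), so $\varphi$ is well-defined on $\connectedto$-classes. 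In the other direction, any finite GR-process $P$ with $n$ transitions admits, as noted after Definition~\ref{df-extension}, a chain $P_0 \goesto[a_1] \cdots \goesto[a_n] P_n = P$; set $\psi(\BD{P}) := [a_1 \cdots a_n]$, and check well-definedness using Observation~\ref{obs-swaptrans}(b) together with the fact that reordering the $a_i$ along a linearisation of the causal order of $P$ only produces $\connectedto$-related sequences. That $\varphi$ and $\psi$ are mutually inverse follows since both chains realise the same step bisimulation (Observation~\ref{obs-bisimulation}).

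Next I would check that $\varphi$ is monotone and reflects the order: $[\sigma] \leq [\rho]$ iff $\varphi([\sigma]) \leq \varphi([\rho])$. The forward direction: if $\sigma \leq \mu \connectedto \rho$, then a chain for $\mu$ extends a chain for $\sigma$, giving a finite GR-process $P_\sigma \leq P_\mu$, hence $\BD{P_\sigma} \leq \BD{P_\mu} = \varphi([\rho])$ by Lemma~\ref{lem-procleqimplrunleq} (or directly from the definition of $\leq$ on $\swapeq^*$-classes). The reverse direction uses the ``prism'' property noted after Definition~\ref{df-swapeq}: if $P_\sigma \swapeq^* Q' \leq Q \swapeq^* P_\rho$, lift the prefix $Q' \leq Q$ back through the swaps to obtain $P_\sigma \leq P' \swapeq^* P_\rho$, read off a firing sequence $\mu$ from $P'$ with $\sigma \leq \mu$, and observe $\mu \connectedto \rho$ because $P'$ and $P_\rho$ are swapping equivalent.

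Finally I would lift $\varphi$ to the completions. An FS-run $\R$ is a non-empty, prefix-closed, directed set of partial FS-runs; define $\Pi(\R) := \varphi(\R) = \{\varphi(c) \mid c \in \R\}$. Since $\varphi$ is an order isomorphism between partial FS-runs and finite BD-processes, $\Pi(\R)$ is again non-empty, prefix-closed and directed, i.e.\ a BD-run (Definition~\ref{bd-run}); injectivity and the order-equivalence $\Pi(\R) \subseteq \Pi(\R')$ iff $\R \subseteq \R'$ are immediate from $\varphi$ being a bijection. For surjectivity, given a BD-run $\S$, pull it back via $\varphi^{-1}$ to a set of partial FS-runs, which is non-empty, prefix-closed and directed by the same order-isomorphism, hence an FS-run mapping to $\S$. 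The main obstacle I expect is the well-definedness of $\varphi$ on $\connectedto$-classes — precisely, showing that one adjacency step on the firing-sequence side corresponds to one $\swapeq$ step (or the identity) on the process side; this is the place where the token-choice freedom in Observation~\ref{obs-bisimulation}(c) must be exploited carefully, exactly as in the proof of Lemma~\ref{lem-cfdiamond}, and where the finite-synchronisation hypothesis ensures the relevant processes are genuinely finite so that $\swapeq^\infty$ coincides with $\swapeq^*$.
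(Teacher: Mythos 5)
Your construction is essentially the right one, but note that the paper itself contains no proof of this theorem: it defers entirely to the remarks at the end of Section~5 of \cite{glabbeek11ipl}, which establish precisely the order isomorphism you build between partial FS-runs and finite BD-processes (Theorems 3 and 4 of that reference) and then pass to the completions on both sides. So your proposal reconstructs the outsourced argument rather than diverging from it. The one step you should not wave through is the well-definedness of $\psi$ on $\swapeq^*$-classes: \refobs{swaptrans}(b) together with ``reordering the $a_i$ along a linearisation of the causal order of $P$'' only handles two linearisations of the \emph{same} process, whereas a single application of $\swap$ changes the causal order, so you still owe the claim that a linearisation of $P$ and a linearisation of $\swap(P,p,q)$ are $\connectedto$-related. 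This is true --- because $(p,q)\notin\FF^+\cup(\FF^+)^{-1}$, the union of the two causal orders on $\TT$ is acyclic, so the two processes admit a common linearisation, after which your transposition argument applies on each side separately --- but it is exactly the content of Theorem~3 of \cite{glabbeek11ipl} and is the real work of the proof; the converse direction (adjacent firing sequences yield swap-related processes, via choosing the second transition's tokens disjointly from the first's output, as in \reflem{cfdiamond}) you do handle correctly. With that gap filled, the order-reflection via the prism property and the lifting of the finite-level isomorphism to non-empty, prefix-closed, directed sets, giving $\Pi(\R)\subseteq\Pi(\R')$ iff $\R\subseteq\R'$, are immediate as you say.
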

\begin{proof}
  See \cite{glabbeek11ipl}, in particular the remarks at the end of Section 5.
\qed
\end{proof}

\begin{TR}
\noindent
We use the relations between firing sequences up to $\connectedto$ and
finite GR-processes up to $\swapeq^*$, as examined in
\cite{best87both}, to establish the following variant of
\reflem{cfdiamond}, which we will need in the next section.

\begin{lemma}\rm\label{lem-scdiamond}
  Let $N\mathbin=(S,T,F,M_0)$ be a structural conflict net,
  $a,b\inp T$ with $a\mathbin{\neq} b$,
  $\R$ be a BD-run of $N$, and $P,P',Q\in^2\R$.
  (Here $X\mathbin{\in^2} Z$ stands for $\exists Y. X\inp Y\inp Z$.)

  If $P\!\goesto[a]\!P'$ and $P\!\goesto[b]\!Q$ then $\widehat P \!\goesto[\{a,b\}]$
  and $\exists Q'\!\in^2\!\R. P'\!\goesto[b]\!Q' \wedge \BD{Q}\!\goesto[a]\!\BD{Q'}$.
\end{lemma}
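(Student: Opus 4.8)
The plan is to reduce Lemma~\ref{lem-scdiamond} to Lemma~\ref{lem-cfdiamond} (stated in the \texttt{TR} block above) by exploiting the structural-conflict-net hypothesis. Recall from the discussion following Definition~\ref{df-structuralconflict} that a structural conflict net is conflict-free if and only if it is binary-conflict-\!-free; more to the point, in such a net two transitions enabled in a marking $M$ are in semantic conflict precisely when they share a preplace, and a non-empty finite multiset $G$ is in conflict in $M$ iff $G$ is a set whose elements pairwise exhibit such conflicts. So the first step is to observe that the structural conflict net $N$ need not be conflict-free globally, but the relevant marking $\widehat P$ is reachable (by Observation~\ref{obs-bisimulation}(d)), and at $\widehat P$ both $a$ and $b$ are enabled, with $\widehat P \goesto[a]\widehat{P'}$ and $\widehat P \goesto[b]\widehat Q$ by Observation~\ref{obs-bisimulation}(b). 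I then need to argue that $\{a,b\}$ is \emph{not} in conflict at $\widehat P$, i.e.\ $\widehat P \goesto[\{a,b\}]$.

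For that step, here is the key point: since $P \goesto[a] P'$ and $P \goesto[b] Q$ with $P,P',Q \in^2 \R$ for a single BD-run $\R$, the BD-processes $\BD P$, $\BD{P'}$, $\BD Q$ all lie in $\R$, and $\BD P \goesto[a] \BD{P'}$, $\BD P \goesto[b] \BD Q$ (using Observation~\ref{obs-swaptrans}/the remarks on $\BD P \goesto[a]\BD Q$ in the \texttt{TR} block). By directedness of $\R$ there is a common extension $\BD U \in \R$ with $\BD{P'} \le \BD U$ and $\BD Q \le \BD U$. Unfolding the definition of $\le$ on BD-processes and of the $\goesto[a]$, $\goesto[b]$ relations, this forces the existence of a GR-process in which, after a common prefix corresponding to $P$ (up to $\swapeq^*$), both an $a$-labelled and a $b$-labelled transition occur as distinct maximal transitions — hence both draw their tokens from $\widehat P$ without interference. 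Translating back through Observation~\ref{obs-bisimulation}(b), this yields $\widehat P \goesto[\{a,b\}]$ directly; equivalently, if $a$ and $b$ shared a preplace they could not both extend $P$ into processes sharing a common BD-extension, contradicting the structural conflict property read off via Definition~\ref{df-structuralconflict}. Either route establishes $\widehat P \goesto[\{a,b\}]$.

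With $\widehat P \goesto[\{a,b\}]$ in hand, I construct $Q'$ exactly as in the proof of Lemma~\ref{lem-cfdiamond}: since $\widehat{P'}\goesto[b]M$ for the appropriate $M$, Observation~\ref{obs-bisimulation}(c) gives a GR-process $Q'$ with $P' \goesto[b] Q'$ and $\widehat{Q'}=M$, and $Q'$ can be chosen so that its $b$-transition consumes no token produced by the preceding $a$-transition; then $P \goesto[b] Q'' \goesto[a] Q'$ for some $Q''$, and by Observation~\ref{obs-swaptrans}(a) $Q \swapeq^* Q''$, so $\BD Q \goesto[a] \BD{Q'}$. The only genuinely new obligation over Lemma~\ref{lem-cfdiamond} is that $Q' \in^2 \R$, i.e.\ $\BD{Q'} \in \R$. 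For this I use the common extension: $\BD{Q'} \goesto[a]$ from $\BD{P'} \le \BD U \in \R$ on one side and $\BD{Q'} \goesto[b]$-predecessor $\BD{Q}\le\BD U$ on the other combine to show $\BD{Q'}$ is a prefix of (a process $\swapeq^*$-equivalent to) $\BD U$, and prefix-closure of $\R$ then gives $\BD{Q'} \in \R$, whence $Q' \in^2 \R$.

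The main obstacle I expect is precisely the bookkeeping in the previous paragraph: making rigorous the claim that $\BD{Q'} \le \BD U$, i.e.\ that the GR-process $Q'$ obtained by the ``non-interfering $b$ after $a$'' construction really sits below the common extension $\BD U$ rather than merely below some other extension of $\BD{P'}$. This requires invoking the uniqueness of $\goesto[a]$-successors among BD-processes (the remark after Observation~\ref{obs-swaptrans} in the \texttt{TR} block: for any $\BD P$ and transition $a$ there is at most one $\BD Q$ with $\BD P \goesto[a]\BD Q$) to identify the $a$-step out of $\BD{P'}$ inside $\BD U$ with the $a$-step producing $\BD{Q'}$, plus the analogous uniqueness for the $b$-step, and then patching these together using that $\le$ on $\swapeq^*$-classes is a partial order (noted after Definition~\ref{df-swapeq}). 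Once that identification is made, prefix-closure of $\R$ closes the argument.
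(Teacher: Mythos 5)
Your proposal has the right external shape (first establish $\widehat P\goesto[\{a,b\}]$, then build $Q'$ and argue $Q'\in^2\R$), but both load-bearing steps are asserted rather than proved, and the assertions hide exactly the content of the lemma. The central gap is the claim that the common extension $\BD U$ ``forces the existence of a GR-process in which, after a common prefix corresponding to $P$, both an $a$-labelled and a $b$-labelled transition occur as distinct maximal transitions''. Unfolding $\BD{P'}\leq\BD U$ and $\BD Q\leq\BD U$ gives you two \emph{different} representatives of $\BD U$: in one of them the new $a$-occurrence sits directly above $P$, in the other the new $b$-occurrence does, and in each representative the other occurrence may be causally buried arbitrarily deep. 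Merging these into a single representative in which both occurrences are concurrent and draw their tokens from $P^\circ$ is precisely what has to be proved; your fallback (``if $a$ and $b$ shared a preplace they could not both extend $P$ into processes sharing a common BD-extension'') is the contrapositive of that same unproved claim, so the argument is circular. The paper instead converts the two extension chains $P'=P_1\goesto[a_1]\cdots\goesto[a_k]P_{k+1}\in\BD U$ and $Q=Q_1\goesto[b_1]\cdots\goesto[b_\ell]Q_{\ell+1}\in\BD U$ into firing sequences $\sigma\rho\connectedto\sigma\mu$ via Theorem~3 of \cite{glabbeek11ipl}; equality of the transition multisets forces $b=a_h$ for some $h$, and each adjacent transposition in the $\connectedto$-chain that exchanges some $a_j$ with $b$ exhibits a reachable marking $M$ with $M\goesto[\{a_j,b\}]$, whence $\precond{a_j}\cap\precond{b}=\emptyset$ by \refdf{structuralconflict}. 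Note that this disjointness is needed for \emph{every} $a_j$ with $j<h$, not just for the pair $(a,b)$ at $\widehat P$, and it is extracted at markings other than $\widehat P$ -- your proposal never engages with the intermediate transitions $a_1,\dots,a_{h-1}$ past which $b$ must be commuted.

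The second gap is the membership $Q'\in^2\R$. Your plan is to deduce $\BD{Q'}\leq\BD U$ from the uniqueness of $\goesto[a]$-successors and the antisymmetry of $\leq$ on $\swapeq^*$-classes, but neither fact produces a prefix relation between $\BD{Q'}$ and $\BD U$: containing the right transition occurrences is not sufficient for being a prefix, since the causal structure of the representatives must also line up. In the paper this is handled constructively: one sets $Q'_h:=P_{h+1}$ (a prefix of $P_{k+1}\in^2\R$, hence in $\R$ by prefix-closure) and then, descending from $j=h-1$ to $0$, uses $\precond{a_j}\cap\precond{b}=\emptyset$ to build $Q'_j$ with $P_j\goesto[b]Q'_j\goesto[a_j]Q'_{j+1}$, each $Q'_j$ inheriting membership in $\R$ from $Q'_{j+1}$ by prefix-closure; $Q'$ is then $Q'_1$, and $\BD Q\goesto[a]\BD{Q'}$ follows from \refobs{swaptrans}(a) applied to $Q$ and $Q'_0$. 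Without this induction your construction only yields some $Q'$ with $P'\goesto[b]Q'$, with no control over its relation to $\R$. Finally, the reduction to \reflem{cfdiamond} cannot be taken literally, since a structural conflict net need not be binary-conflict-\!-free; only the final assembly step of that proof survives, and only after the firing-sequence analysis has delivered the preset disjointness.
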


\begin{proof}
  Suppose $P\goesto[a]P'$ and $P\goesto[b]Q$ with  $a\neq b$.
  Then $M_0\goesto[] \widehat P$ by \refobs{bisimulation}(d).
  Let $\sigma\in T^*$ be such that $M_0\goesto[\sigma] \widehat P$.
  As runs are directed, there is a finite BD-process $\BD{U}\in\R$ with
  $\BD{P'}\leq \BD{U}$ and $\BD{Q}\leq \BD{U'}$.
  Hence there must be a sequence 
  $P'=P_1\goesto[a_1] P_2 \goesto[a_2] \ldots \goesto[a_{k}] P_{k+1}$
  with $P_{k+1}\in\BD{U}$, and, similarly, a sequence
  $Q=Q_1\goesto[b_1] Q_2 \goesto[b_2] \ldots \goesto[b_\ell] Q_{\ell+1}$
  with $Q_{\ell+1}\in\BD{U}$. Let $a_0:= a$ and $\rho:= a_0 a_1 \cdots
  a_k$; likewise let $b_0:=b$ and $\mu:= b_0 b_1 \cdots b_\ell$. By Theorem 3 in
  \cite{glabbeek11ipl} it follows that $\sigma \rho, \sigma\mu \in
  \FS(N)$ and $\sigma\rho \connectedto \sigma\mu$.
  By \refdf{connectedto}, $\sigma \rho$ and $\sigma \mu$ must
  contain the same multiset of  transitions. So $b=a_h$ for some
  $1\leq h\leq k$; we take $h$ minimal, so that $b\neq a_j$ for $0\leq j < h$.

  Let $Q'_h:=P_{h+1}$. Since $P_{h+1} \leq P_{k+1}\in^2\R$ and $\R$ is
  prefix-closed, we have $Q'_h\in^2\R$.
  Working our way down from $h\mathord-1$ to $0$, we construct
  for any $j$ with $0\leq j < h$ 
  a $Q'_j\in^2 \R$ with $P_j \goesto[b] Q'_j \goesto[a_j] Q'_{j+1}$.
  Suppose we already have $Q'_{j+1}$. Then $P_j\goesto[a_j]P_{j+1}\goesto[b]Q'_{j+1}$.

  Somewhere in the sequence $\sigma \rho = \nu_1
  \leftrightarrow \nu_2 \leftrightarrow \cdots \leftrightarrow \nu_n =
  \sigma \mu$ the transitions $a_j$ and $b$ must be exchanged,
  i.e.\ $\nu_i=\nu'a_jb\nu''\leftrightarrow\nu'ba_j\nu''=\nu_{i+1}$.\vspace{1pt}
  Thus there is a marking $M$ with $M_0\goesto[\nu']M\goesto[\{a_j,b\}]$.
  Since $N$ is a structural conflict net, $\precond{a_j}\cap\precond{b}=\emptyset$.
  This immediately yields a $Q'_j$ with $P_j \goesto[b] Q'_j \goesto[a_j] Q'_{j+1}$.
  Since $Q'_{j+1}\in^2\R$ and $\R$ is prefix-closed, we have $Q'_j\in^2\R$.

  Finally, let $Q':= Q'_1$. Then $P'\mathop=P_1\goesto[b]Q'_1\mathop=Q'$ and
  $P \mathop= P_0\goesto[b]Q'_0\goesto[a]Q'_1$.
  \refobs{swaptrans}(a) yields $Q\swapeq^* Q'_0$. Hence $\BD{Q}\goesto[a]\BD{Q'}$.
\qed
\end{proof}
\end{TR}

\noindent
We now show that a conflict-free net has exactly one maximal run. As we have a
bijective correspondence, it does not matter which notion of run we use here
(FS-run or BD-run).  We prove an even stronger result, using
binary-conflict-\!-free instead of conflict-free.  In preparation we need the
following lemma\begin{TR}s\end{TR}.

\begin{TR}
\begin{lemma}\rm\label{lem-conflictfreeswap}
  Let $N = (S, T, F, M_0)$ be a binary-conflict-\!-free net,
  $\sigma t, \sigma u \inp \FS(N)$ with $t, u \inp T$, and $t \mathbin{\ne} u$.

  Then $\sigma tu, \sigma ut \in \FS(N)$ and
  $\sigma tu \connectedto \sigma ut$.
\end{lemma}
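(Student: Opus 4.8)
The plan is to reduce the statement directly to the definition of binary-conflict-\!-freeness, combined with the remark made just after \refdf{firing} that $M\goesto[\{t,u\}]$ implies both $M\goesto[tu]$ and $M\goesto[ut]$. First I would pin down the marking reached after the common prefix $\sigma$. Each single-transition step $M\goesto[\{v\}]M''$ determines $M''$ uniquely as $(M-\precond v)+\postcond v$, so the word $\sigma$ leads from $M_0$ to a unique marking, say $M$, with $M_0\goesto[\sigma]M$; in particular $M$ is reachable. From $\sigma t\in\FS(N)$ and $\sigma u\in\FS(N)$, together with the definition of $M_0\goesto[\sigma t]$ and $M_0\goesto[\sigma u]$ on words, we obtain $M\goesto[t]$ and $M\goesto[u]$, i.e.\ both singleton steps $\{t\}$ and $\{u\}$ are enabled in $M$.

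Next I would apply \refdf{semanticconflict} to the multiset $G:=\{t,u\}$. Since $t\neq u$ we have $|G|=2$, and $G\restrictedto\{t\}=\{t\}$, $G\restrictedto\{u\}=\{u\}$ are both enabled in $M$ by the previous step, so the first conjunct of ``$G$ is in conflict in $M$'' holds. Because $N$ is binary-conflict-\!-free and $M$ is reachable from $M_0$, $G$ is not in conflict in $M$, so the remaining conjunct must fail; that is, $M\goesto[\{t,u\}]$.

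Finally, by the remark after \refdf{firing}, $M\goesto[\{t,u\}]$ yields $M\goesto[tu]$ and $M\goesto[ut]$, hence $\sigma tu,\sigma ut\in\FS(N)$. Moreover $M_0\goesto[\sigma]M\goesto[\{t,u\}]$, so by \refdf{connectedto} the sequences $\sigma tu$ and $\sigma ut$ are adjacent, and in particular $\sigma tu\connectedto\sigma ut$. The only point that needs a little care is the bookkeeping that both $\sigma t$ and $\sigma u$ pass through the \emph{same} reachable marking $M$ after $\sigma$, so that the binary-conflict-\!-freeness hypothesis is applicable at $M$; everything else is immediate from the definitions, so I do not expect a genuine obstacle here.
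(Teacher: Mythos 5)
Your proposal is correct and follows essentially the same route as the paper's own proof: determinism of firing yields a unique marking $M$ after $\sigma$ in which both $t$ and $u$ are enabled, binary-conflict-freeness then gives $M\goesto[\{t,u\}]$, and the remark after \refdf{firing} together with \refdf{connectedto} yields the conclusion. You merely spell out the uniqueness of $M$ and the unfolding of \refdf{semanticconflict} in more detail than the paper does.
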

\begin{proof}
  As we have unlabelled transitions, $\sigma$ leads to a unique marking.
  From $M_0 \goesto[\sigma t]{} \wedge M_0 \goesto[\sigma u]$ we thus
  have that an $M_1$ exists with $M_0 \goesto[\sigma]{} M_1 \wedge
  M_1 {\goesto[t]} \wedge M_1 \goesto[u]$. Due to binary-conflict-\!-freeness
  then also $M_1 \goesto[\{t, u\}]$.
  Hence $M_0 \goesto[\sigma]\goesto[\{t, u\}]$, so
  $\sigma tu, \sigma ut \in \FS(N)$ and
  $\sigma tu \connectedto \sigma ut$.
  \qed
\end{proof}

\begin{lemma}\rm\label{lem-closingdiamondwithout}
  Let $N = (S, T, F, M_0)$ be a binary-conflict-\!-free net,
  $\sigma t, \sigma \rho \inp \FS(N)$ with $t \inp T$, $\sigma,\rho\in T^*$,
  and $t \mathbin{\notin} \rho$.

  Then $\sigma t\rho, \sigma\rho t \in \FS(N)$ and
  $\sigma t\rho \connectedto \sigma \rho t$.
\end{lemma}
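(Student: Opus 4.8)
The plan is to prove the lemma by induction on the length of $\rho$. The base case, where $\rho$ is the empty word, is immediate: both $\sigma t\rho$ and $\sigma\rho t$ equal $\sigma t$, which lies in $\FS(N)$ by assumption, and $\connectedto$ is reflexive.

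For the induction step I would decompose $\rho = u\rho'$ with $u\inp T$; since $t\notin\rho$ this forces $u\neq t$ and $t\notin\rho'$. From $\sigma\rho = \sigma u\rho' \inp \FS(N)$ one obtains $\sigma u\inp\FS(N)$, so together with the assumption $\sigma t\inp\FS(N)$ and $u\neq t$, \reflem{conflictfreeswap} gives $\sigma tu,\,\sigma ut\inp\FS(N)$ and $\sigma tu\connectedto\sigma ut$. Next I would invoke the induction hypothesis with $\sigma u$ playing the role of $\sigma$ and $\rho'$ that of $\rho$: its premises $\sigma u\,t = \sigma ut\inp\FS(N)$, $\sigma u\,\rho' = \sigma\rho\inp\FS(N)$, and $t\notin\rho'$ all hold, yielding $\sigma ut\rho',\,\sigma u\rho' t\inp\FS(N)$ and $\sigma ut\rho'\connectedto\sigma u\rho' t$. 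In particular $\sigma\rho t = \sigma u\rho' t\inp\FS(N)$, one of the two conclusions.

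It then remains to transport $t$ across the leading $u$. From $\sigma ut\connectedto\sigma tu$ together with $\sigma ut\rho'\inp\FS(N)$, the congruence of $\connectedto$ under right concatenation (the remark following \refdf{connectedto}) gives $\sigma ut\rho'\connectedto\sigma tu\rho'$; as $\connectedto$ relates only firing sequences, this simultaneously shows $\sigma t\rho = \sigma tu\rho'\inp\FS(N)$. Concatenating the three adjacency facts gives $\sigma t\rho = \sigma tu\rho' \connectedto \sigma ut\rho' \connectedto \sigma u\rho' t = \sigma\rho t$, which closes the induction. I do not expect a real obstacle here; the only points needing a little care are that every word appearing in the chain is genuinely a firing sequence — automatic from the definition of $\connectedto$ — and that the induction hypothesis is applied to the strictly shorter word $\rho'$, so the induction is well-founded.
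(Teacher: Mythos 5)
Your proof is correct and follows essentially the same route as the paper's: induction on the length of $\rho$, peeling off the first transition $u$ of $\rho$, applying \reflem{conflictfreeswap} to swap $t$ past $u$, invoking the induction hypothesis with $\sigma u$ in place of $\sigma$, and chaining the adjacencies via the right-concatenation property of $\connectedto$. No gaps.
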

\begin{proof}
  Via induction on the length of $\rho$.

  If $\rho \mathbin= \epsilon$, $\sigma t\in\FS(N)$ trivially implies
  $\sigma \epsilon t, \sigma t \epsilon\in\FS(N)$ and
  $\sigma \epsilon t \connectedto \sigma t \epsilon$.

  For the induction step take $\rho \mathbin{:=} u\mu$ (with $u \ne t$).
  With $\sigma t, \sigma u\in\FS(N)$ and \reflem{conflictfreeswap} also
  $\sigma u t\in\FS(N)$ and $\sigma t u \connectedto \sigma u t$.
  Together with $\sigma u \mu\in\FS(N)$, the induction assumption then gives us
  $\sigma u t \mu \in \FS(N)$ and
  $\sigma u t \mu \connectedto \sigma u \mu t =
  \sigma \rho t$. With $\sigma u t \connectedto \sigma t u$ also
  $\sigma u t \mu \connectedto \sigma t u \mu =
  \sigma t \rho$ and $\sigma\rho t,\,\sigma t \rho\in \FS(N)$.
  \qed
\end{proof}

\begin{lemma}\rm\label{lem-closingdiamondwith}
  Let $N = (S, T, F, M_0)$ be a binary-conflict-\!-free net,
  $\sigma, \rho_1, \rho_2 \in T^*$, $t \in T$, $t \notin \rho_1$.

  If $\sigma t \in\FS(N) \wedge \sigma \rho_1 t \rho_2 \in\FS(N)$
  then $\sigma t \rho_1\rho_2\in\FS(N) \wedge
  \sigma t \rho_1 \rho_2 \connectedto \sigma \rho_1 t \rho_2$.
\end{lemma}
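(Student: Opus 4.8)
The plan is to reduce this to Lemma~\ref{lem-closingdiamondwithout} by first moving $t$ past $\rho_1$ alone, and then appending $\rho_2$ to both sides of the resulting adjacency chain.

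First I would observe that $\sigma \rho_1 t \rho_2 \in \FS(N)$ entails $\sigma \rho_1 \in \FS(N)$, since $\FS(N)$ is prefix-closed. Together with the hypothesis $\sigma t \in \FS(N)$ and the assumption $t \notin \rho_1$, this puts us exactly in the situation of Lemma~\ref{lem-closingdiamondwithout} with $\rho := \rho_1$. Applying that lemma yields $\sigma t \rho_1, \sigma \rho_1 t \in \FS(N)$ and $\sigma t \rho_1 \connectedto \sigma \rho_1 t$.

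Next I would append the suffix $\rho_2$. We already know $\sigma \rho_1 t \rho_2 \in \FS(N)$, and since $\connectedto$ is symmetric we also have $\sigma \rho_1 t \connectedto \sigma t \rho_1$. By the remark following \refdf{connectedto} (namely that $\alpha \connectedto \beta$ and $\alpha\mu \in \FS(N)$ imply $\alpha\mu \connectedto \beta\mu$ for all $\mu \in T^*$), taking $\alpha := \sigma \rho_1 t$, $\beta := \sigma t \rho_1$ and $\mu := \rho_2$, we obtain $\sigma \rho_1 t \rho_2 \connectedto \sigma t \rho_1 \rho_2$. In particular $\sigma t \rho_1 \rho_2 \in \FS(N)$, since every sequence occurring in a $\connectedto$-chain is by definition a firing sequence. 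Finally $\sigma t \rho_1 \rho_2 \connectedto \sigma \rho_1 t \rho_2$ follows because $\connectedto$ is an equivalence relation; this establishes both conjuncts of the statement.

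I do not expect a genuine obstacle here. The only points that need a moment's care are checking that $\sigma \rho_1$ is indeed a firing sequence before invoking Lemma~\ref{lem-closingdiamondwithout}, and checking that the ``append a suffix'' remark is applicable --- which it is, precisely because the given global firing sequence $\sigma \rho_1 t \rho_2$ witnesses that all intermediate sequences appearing in the chain from $\sigma\rho_1 t$ to $\sigma t\rho_1$ remain firing sequences when $\rho_2$ is appended.
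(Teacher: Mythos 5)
Your proof is correct and follows essentially the same route as the paper's: derive $\sigma\rho_1\in\FS(N)$ by prefix-closure, apply \reflem{closingdiamondwithout} to obtain $\sigma t\rho_1 \connectedto \sigma\rho_1 t$, and then append $\rho_2$ via the remark following \refdf{connectedto}, using $\sigma\rho_1 t\rho_2\in\FS(N)$ as the witness. Your extra care about the direction in which the append-a-suffix remark is applied is exactly the right point to check, and it goes through.
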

\begin{proof}
  Applying \reflem{closingdiamondwithout} with
  $\sigma t\inp\FS(N) \wedge \sigma \rho_1\inp\FS(N)$ 
  we get $\sigma t \rho_1\inp\FS(N) \wedge
  \sigma t \rho_1 \connectedto \sigma \rho_1 t$.
  Since $\sigma \rho_1 t \rho_2\in\FS(N)$ the latter yields
  $\sigma t \rho_1 \rho_2 \connectedto \sigma \rho_1 t \rho_2$
  and thus $\sigma t \rho_1 \rho_2\in\FS(N)$.
  \qed
\end{proof}
\end{TR}

\begin{lemma}\rm\label{lem-closingdiamond}
  Let $N$ be a binary-conflict-\!-free net.
  \vspace{2pt}

  If $\sigma,\sigma'\inp\FS(N)$ then
  $\exists \mu, \mu'. \sigma \mu\inp\FS(N) \wedge
  \sigma' \mu'\in\FS(N) \wedge \sigma\mu \connectedto \sigma'\mu'$.
\end{lemma}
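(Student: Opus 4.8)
\begin{TR}
\begin{proof}
The plan is to argue by induction on the length $|\sigma'|$; since the assertion is symmetric in $\sigma$ and $\sigma'$, this costs no generality. For the base case $\sigma'=\epsilon$ take $\mu:=\epsilon$ and $\mu':=\sigma$, so that $\sigma\mu=\sigma'\mu'=\sigma\in\FS(N)$ and $\sigma\connectedto\sigma$.

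For the inductive step I would write $\sigma'=\sigma'' t$ with $t\in T$ and $\sigma''\in\FS(N)$, and apply the induction hypothesis to $\sigma$ and $\sigma''$ to obtain $\mu_0,\mu_0'$ with $\sigma\mu_0,\sigma''\mu_0'\in\FS(N)$ and $\sigma\mu_0\connectedto\sigma''\mu_0'$. It then remains to absorb the trailing occurrence of $t$, and I would split on whether $t$ occurs in $\mu_0'$. If $t\notin\mu_0'$, then from $\sigma'' t=\sigma'\in\FS(N)$ and $\sigma''\mu_0'\in\FS(N)$, \reflem{closingdiamondwithout} gives $\sigma'' t\mu_0',\sigma''\mu_0' t\in\FS(N)$ with $\sigma'' t\mu_0'\connectedto\sigma''\mu_0' t$; since $\sigma\mu_0$ and $\sigma''\mu_0'$ reach the same marking, which this shows enables $t$, also $\sigma\mu_0 t\in\FS(N)$, and appending $t$ to $\sigma\mu_0\connectedto\sigma''\mu_0'$ yields $\sigma\mu_0 t\connectedto\sigma''\mu_0' t\connectedto\sigma'' t\mu_0'=\sigma'\mu_0'$; so $\mu:=\mu_0 t$, $\mu':=\mu_0'$ work. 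If $t$ does occur in $\mu_0'$, write $\mu_0'=\nu_1 t\nu_2$ with $t\notin\nu_1$, so $\sigma''\nu_1 t\nu_2=\sigma''\mu_0'\in\FS(N)$; then \reflem{closingdiamondwith}, with $\sigma'',t,\nu_1,\nu_2$ in the roles of $\sigma,t,\rho_1,\rho_2$, gives $\sigma'' t\nu_1\nu_2\in\FS(N)$ with $\sigma'' t\nu_1\nu_2\connectedto\sigma''\mu_0'\connectedto\sigma\mu_0$; so $\mu:=\mu_0$, $\mu':=\nu_1\nu_2$ work, noting $\sigma'\mu'=\sigma'' t\nu_1\nu_2$.

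I do not expect a genuine obstacle here: \reflem{closingdiamondwithout} and \reflem{closingdiamondwith} already perform the delicate task of sliding a single transition past a conflict-free block of firings. The only point that needs care is the case split in the inductive step --- if the transition $t$ that $\sigma'$ appends already occurs in the completion $\mu_0'$ returned by the induction hypothesis, that occurrence absorbs the new $t$ and $\mu'$ merely becomes shorter, whereas otherwise a fresh copy of $t$ must be appended to $\sigma\mu_0$, which forces one to check (using that $\connectedto$-related firing sequences reach the same marking) that the common reachable marking still enables $t$.
\end{proof}
\end{TR}
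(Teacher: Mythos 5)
Your proof is correct and follows essentially the same route as the paper's: the paper inducts on the length of $\sigma$ (extending it to $\sigma t$) where you induct on the length of $\sigma'$ (peeling off the last transition $t$), which by the symmetry of the statement is the same argument, and both use the identical case split on whether $t$ occurs in the completion returned by the induction hypothesis, resolved by \reflem{closingdiamondwith} and \reflem{closingdiamondwithout} respectively. Your appeal to ``$\connectedto$-related firing sequences reach the same marking'' is sound (the paper handles this step via its remark that $\sigma\connectedto\rho\wedge\sigma\mu\in\FS(N)$ implies $\sigma\mu\connectedto\rho\mu$), so no gap remains.
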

  \begin{TR}
\begin{proof}
  Via induction on the length of $\sigma$.

  If $\sigma = \epsilon$ we take $\mu = \sigma'$ and $\mu' = \epsilon$.

  For the induction step we start with
  $$\sigma, \sigma'\in\FS(N) \implies
  \exists \mu, \mu'. \sigma\mu\in\FS(N) \wedge \sigma'\mu'\in\FS(N)
  \wedge \sigma\mu \connectedto \sigma'\mu'$$ and need to show that
  $$\sigma t, \sigma'\in\FS(N) \implies
  \exists \bar{\mu}, \bar{\mu}'.
  \sigma t\bar{\mu}\in\FS(N) \wedge \sigma'\bar{\mu}'\in\FS(N)
  \wedge \sigma t\bar{\mu} \connectedto \sigma'\bar{\mu}'\trail{.}$$

  If $t \inp \mu$, $\mu$ must be of the form $\mu_1 t \mu_2$ with $t \notin \mu_1$.
  We then take $\bar{\mu} := \mu_1 \mu_2$ and $\bar{\mu}' := \mu'$.
  By \reflem{closingdiamondwith} we find
  $\sigma t \mu_1 \mu_2\in\FS(N)$, i.e.\
  $\sigma t \bar{\mu}\in\FS(N)$.
  $\sigma' \bar{\mu}'\in\FS(N)$ is already contained in the induction
  assumption.  Per \reflem{closingdiamondwith} $\sigma t \bar{\mu} =
  \sigma t \mu_1\mu_2 \connectedto \sigma \mu_1 t \mu_2 =
  \sigma \mu$. From the induction assumption we obtain
  $\sigma \mu \connectedto \sigma' \mu' = \sigma' \bar{\mu}'$.

  If $t \mathbin{\notin} \mu$, we take $\bar{\mu} := \mu$ and $\bar{\mu}' := \mu't$.
  By \reflem{closingdiamondwithout} we find that
  $\sigma t \mu, \sigma \mu t\in\FS(N)$, i.e.\ also
  $\sigma t \bar{\mu}\in\FS(N)$. From $\sigma \mu t\in\FS(N)$
  and $\sigma \mu \connectedto \sigma' \mu'$ follows that
  $\sigma' \mu' t\in\FS(N)$, i.e.\ $\sigma' \bar{\mu}'\in\FS(N)$.
  Also by \reflem{closingdiamondwithout} we find
  $\sigma t \bar{\mu} = \sigma t \mu \connectedto \sigma \mu t$.
  From the induction assumption  we obtain
  $\sigma \mu t \connectedto \sigma' \mu' t= \sigma' \bar{\mu}'$.
  \qed
\end{proof}
  \end{TR}

\begin{theorem}\rm\label{thm-onemaximalrun}
  Let $N$ be a binary-conflict-\!-free net.

  There is exactly one maximal FS-run of $N$.
\end{theorem}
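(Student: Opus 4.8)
The plan is to exhibit the maximal FS-run explicitly, rather than through a Zorn-type argument, by taking the set of \emph{all} partial FS-runs of $N$, and then to observe that every FS-run is contained in it. So let $\R_{\max} := \{[\sigma]\mid \sigma\in\FS(N)\}$ be the set of all partial FS-runs of $N$ (\refdf{partialrun}).

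First I would verify that $\R_{\max}$ is itself an FS-run in the sense of \refdf{run}. It is non-empty, since $\epsilon\in\FS(N)$ and hence $[\epsilon]\in\R_{\max}$. It is prefix-closed by construction: if $[\rho]\in\R_{\max}$ and $[\sigma]\leq[\rho]$, then $\sigma\leq\mu\connectedto\rho$ for some $\mu$, so $\sigma$ is a prefix of a firing sequence and therefore a firing sequence itself, giving $[\sigma]\in\R_{\max}$. The only substantive point is directedness, and here \reflem{closingdiamond} does exactly the required work: given $[\sigma],[\rho]\in\R_{\max}$, that lemma supplies words $\mu,\mu'$ with $\sigma\mu,\rho\mu'\in\FS(N)$ and $\sigma\mu\connectedto\rho\mu'$, so $[\sigma\mu]=[\rho\mu']\in\R_{\max}$; moreover $[\sigma]\leq[\sigma\mu]$ trivially, and $[\rho]\leq[\sigma\mu]$ because $\rho\leq\rho\mu'\connectedto\sigma\mu$. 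Hence $[\sigma\mu]$ is a common upper bound of $[\sigma]$ and $[\rho]$ inside $\R_{\max}$.

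It then remains to note that $\R_{\max}$ is the $\subseteq$-largest FS-run: by \refdf{run}, any FS-run of $N$ is a set of partial FS-runs of $N$, hence a subset of $\R_{\max}$. Consequently $\R_{\max}$ is a maximal FS-run (nothing properly contains it), and if $\R$ is any maximal FS-run then $\R\subseteq\R_{\max}$ together with the fact that $\R_{\max}$ is an FS-run forces $\R=\R_{\max}$. This yields existence and uniqueness simultaneously.

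The genuine obstacle is already behind us: it is the ``closing the diamond'' argument of \reflem{closingdiamond} and the chain of auxiliary lemmas preceding it, where binary-conflict-freeness is used to merge two divergent firing sequences into a confluent pair. Granting that, the theorem is essentially the observation that confluence of $\connectedto$ makes the collection of all partial FS-runs directed, so that this collection is by itself the unique maximal FS-run.
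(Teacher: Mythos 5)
Your proposal is correct and is essentially identical to the paper's own proof: both take $\R = \{[\sigma]\mid \sigma\in\FS(N)\}$, verify prefix-closure directly and directedness via \reflem{closingdiamond}, and conclude maximality and uniqueness from the fact that every FS-run is by definition a subset of this $\R$.
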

\begin{proof}
  Let $\R = \{[\sigma] \mid \sigma \text{ is a finite firing sequence of } N\}$.
  We claim that $\R$ is said maximal FS-run of $N$.

  First we show that $\R$ is prefix closed and directed, and thus indeed an FS-run.

  Take any $[\rho] \leq [\sigma] \in \R$. Then by definition of $\leq$,
  $\exists \nu. \rho \leq \nu \wedge \nu \connectedto \sigma$. We need to show
  that $[\rho] \in \R$, i.e.\ that $\rho$ is a firing sequence of $N$.
  Since $\sigma$ is a firing sequence of $N$ and $\nu \connectedto \sigma$,
  $\nu$ is also a firing sequence of $N$. Together with $\rho \leq \nu$
  follows that $\rho$, too, is a firing sequence of $N$.
  Thus $\R$ is prefix closed.

  To show directedness, let $[\sigma], [\rho] \in \R$.
  We need to show that $\exists [\nu] \inp \R.
  [\sigma] \leq [\nu]\linebreak[3] \wedge [\rho] \leq [\nu]$,
  or with the definitions of $\leq$ and $[~]$ expanded,
  $\exists \nu.
    (\exists \alpha. \sigma \leq \alpha \connectedto \nu\linebreak[3] \wedge
     \exists \beta. \rho \leq \beta \connectedto \nu)$.
  We now apply \reflem{closingdiamond} to $\sigma, \rho\in\FS(N)$,
  obtaining $\mu$ and $\mu'$ as mentioned in that
  lemma, and take $\alpha = \sigma\mu$ and $\beta = \rho\mu'$.
  Then \reflem{closingdiamond} gives us $\alpha \connectedto \beta$ and
  we take $\nu = \alpha$.
  Thus $\R$ is directed.

  Finally we show that $\R$ is maximal. Take any run $\R'$ of $N$.
  Then $\R' \subseteq \R$ by definition of $\R$, hence $\R$ is maximal.
  \qed
\end{proof}

\section{BD-processes fit structural conflict nets}
\label{sec-results}

In this section we show that BD-processes are adequate as abstract
processes for the subclass of structural conflict nets.

In \cite{glabbeek11ipl} we have shown that a semantic conflict in a structural conflict net
always gives rise to multiple maximal GR-processes even up to swapping equivalence.

\begin{theorem}\rm\label{thm-structuralconflictbdprocsoneway}
  Let $N$ be a structural conflict net.

  If $N$ has only one maximal GR-process up to $\swapeq^\infty$ then $N$ is conflict-free.
\end{theorem}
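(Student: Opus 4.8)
The plan is to prove the contrapositive: if $N$ is a structural conflict net that is \emph{not} conflict-free, then $N$ has at least two maximal GR-processes lying in distinct $\swapeq^\infty$-classes, contradicting the hypothesis. By the Remark following \refdf{structuralconflict}, a structural conflict net is conflict-free iff it is binary-conflict-\!-free; so $N$ is not binary-conflict-\!-free, and hence there are $\sigma\inp\FS(N)$ with $M_0\goesto[\sigma]M$ and distinct transitions $a,b\inp T$ with $M\goesto[a]$, $M\goesto[b]$ and $\neg M\goesto[\{a,b\}]$.

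Next I would realise this conflict inside a process. Tracing $\sigma$ from an initial process and applying \refobs{bisimulation}(a)--(c) repeatedly yields a finite GR-process $P$ of $N$ with $\widehat P=M$; one further application of \refobs{bisimulation}(c) gives extensions $P\goesto[a]P_a$ and $P\goesto[b]P_b$. By Zorn's Lemma applied to the poset of GR-processes extending $P_a$ --- whose chains have the componentwise union as an upper bound, this union again being a GR-process since acyclicity, the occurrence-net shape of $\NN$, and the finiteness of the sets of $\FF^+$-predecessors are all preserved under directed unions --- extend $P_a$ to a maximal GR-process $\hat P_a$, and likewise $P_b$ to a maximal GR-process $\hat P_b$.

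The crux is that $\hat P_a\not\swapeq^\infty\hat P_b$. Suppose not. Then $\BDify{\hat P_a}=\BDify{\hat P_b}=:\R$ by \refdf{swapeq}, and $\R$ is a BD-run by \reflem{bdprocessisrun}. Since $P\mathbin\leq P_a\mathbin\leq\hat P_a$ with $P,P_a$ finite we get $\BD{P},\BD{P_a}\inp\R$, and from $P_b\mathbin\leq\hat P_b$ also $\BD{P_b}\inp\R$; hence $P,P_a,P_b\in^2\R$. As $P\goesto[a]P_a$ and $P\goesto[b]P_b$ with $a\neq b$, \reflem{scdiamond} gives $\widehat P\goesto[\{a,b\}]$, i.e.\ $M\goesto[\{a,b\}]$, contradicting the choice of $a,b$. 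So $\hat P_a$ and $\hat P_b$ are maximal GR-processes in distinct $\swapeq^\infty$-classes, and the theorem follows.

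The part I expect to carry the real weight is \reflem{scdiamond} --- and, behind it, the firing-sequence characterisation of \refsec{finiteruns} together with the results imported from \cite{glabbeek11ipl} --- since that is exactly where the structural conflict net hypothesis becomes indispensable: the net of \reffig{badswapping} shows that without it the maximal GR-processes obtained by resolving a conflict the two possible ways can coincide up to $\swapeq^\infty$. The remaining ingredients --- the reduction from conflict to binary conflict, the construction of $P$ via the step bisimulation, and the existence of maximal extensions by Zorn's Lemma --- are routine.
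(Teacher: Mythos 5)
Your proof is correct, but it takes a genuinely different route from the paper: the paper disposes of \refthm{structuralconflictbdprocsoneway} by citing Corollary~1 of \cite{glabbeek11ipl}, a proof it describes in the introduction as ``quite involved'', whereas you derive the statement from the paper's own toolkit. Your chain of steps is sound: the reduction from conflict to binary conflict for structural conflict nets is exactly the observation following \refdf{structuralconflict} (and a conflicting multiset of size two must be a set $\{a,b\}$ with $a\neq b$, since $\{t,t\}$ can never be in conflict with itself); realising $M\goesto[a]$, $M\goesto[b]$, $\neg M\goesto[\{a,b\}]$ as $P\goesto[a]P_a$, $P\goesto[b]P_b$ with $\widehat P=M$ is licensed by \refobs{bisimulation}(a,c); maximal extensions $\hat P_a,\hat P_b$ exist by Zorn's Lemma (this is Lemma~2 of \cite{glabbeek11ipl}, which the paper itself invokes in \reflem{maxorder} and \refthm{conflictfree-wm}); and if $\hat P_a\swapeq^\infty\hat P_b$ then $\BDify{\hat P_a}=\BDify{\hat P_b}=:\R$ is a BD-run by \reflem{bdprocessisrun} containing $\BD{P},\BD{P_a},\BD{P_b}$ as finite prefixes, so \reflem{scdiamond} forces $\widehat P\goesto[\{a,b\}]$, the desired contradiction. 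There is no circularity: \reflem{scdiamond} rests on Theorem~3 of \cite{glabbeek11ipl}, not on its Corollary~1, and nothing downstream of the present theorem is used. What your approach buys is the insight that this direction, which the paper imports wholesale, actually falls out in a few lines from \reflem{scdiamond} --- a lemma the paper proves anyway for \reflem{scmaximals}; what the bare citation buys the paper is that the conference version can state the theorem without the TR-only machinery (\reflem{scdiamond} and \refobs{bisimulation} appear only in the full version). The only blemishes are cosmetic: the equivalence of conflict-freeness and binary-conflict-freeness is stated in running text after \refdf{structuralconflict}, not in the boxed Remark (which concerns \refdf{semanticconflict}), and the fact that $\PP\swapeq^\infty\QQ$ iff $\BDify{\PP}=\BDify{\QQ}$ is the remark following \reflem{bdprocessisrun} rather than \refdf{swapeq} itself.
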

\begin{proof}
  Corollary~1 from \cite{glabbeek11ipl}.
\qed
\end{proof}

\noindent
We conjectured in \cite{glabbeek11ipl} that, for countable nets,
also the reverse direction holds, namely that a countable
conflict-free structural conflict net has exactly one maximal
GR-process up to $\swapeq^\infty$.

In \refsec{finiteruns} we have already shown that a corresponding result holds for runs
instead of processes. We will now transfer this result to BD-processes, and
hence prove the conjecture.

We proceed by investigating three notions of maximality for BD-processes;
they will turn out to coincide for structural conflict nets.

\begin{definition}\rm\label{df-bdprocmax}~\vspace{-1.5ex}
  \begin{itemize}
    \item
      A BD-process $\BDinf{P}$ is \defitem{weakly maximal} (or a maximal
      GR-process up to $\swapeq^\infty$), iff some $P' \mathbin\in \BDinf{P}$
      is maximal (in the GR-process sense).
    \item
      A BD-process $\BDinf{P}$ is \defitem{maximal} iff
      $\forall P' \in \BDinf{P}\,\forall Q. (P' \leq Q \Rightarrow P' \swapeq^\infty Q)$.
    \item
      A BD-process $\BDinf{P}$ is \defitem{run-maximal} iff
      the BD-run $\BDify{P}$ is maximal. 
  \end{itemize}
\end{definition}
The first notion is the simplest way of inheriting the notion of
maximality of GR-process by BD-processes, whereas the last one
inherits the notion of maximality from BD-runs.
The middle notion is the canonical notion of maximality with respect
to a natural order on BD-process, defined below.

\begin{definition}\rm\label{df-bdprocorder}
  Let $N$ be a net.

  We define a relation $\preceq$ between BD-processes, via
  $$
  \BDinf{P} \preceq \BDinf{Q} :\Leftrightarrow
  \exists P' \swapeq^\infty P\;
  \exists Q' \swapeq^\infty Q. P' \leq Q'~,
  $$

  and construct an order between BD-processes via
  $$
  \BDinf{P} \leq \BDinf{Q} :\Leftrightarrow
  \BDinf{P} \preceq^+ \BDinf{Q}~.
  $$
\end{definition}
By construction, the relation $\leq$ is reflexive and transitive
(even though $\preceq$ in general is not transitive).
\reflem{procleqimplrunleq} yields that it also is antisymmetric, and
hence a partial order. Namely, if $\BDinf{P} \leq \BDinf{Q}$ and
$\BDinf{Q} \leq \BDinf{P}$, then $\BDify{P}=\BDify{Q}$, so
$P \swapeq^\infty Q$, implying $\BDinf{P} = \BDinf{Q}$.

Now maximality according to \refdf{bdprocmax} is simply maximality
w.r.t.\ $\leq$:
\begin{center}
$\BDinf{P}$ is maximal iff $\nexists \BDinf{P'}.
 \BDinf{P} \leq \BDinf{P'} \wedge \BDinf{P} \ne \BDinf{P'}$.
\end{center}
The following lemma tells how the above notions of maximality form a hierarchy.

\begin{lemma}\rm\label{lem-maxorder}
  Let $N$ be a net and $P$ a process thereof.
  \vspace{-1ex}
  \begin{enumerate}
    \item If $\BDinf{P}$ is run-maximal, it is maximal.
    \item If $\BDinf{P}$ is maximal, it is weakly maximal.
  \end{enumerate}
\end{lemma}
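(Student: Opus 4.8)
The plan is to prove each implication by contraposition, working with the characterisations of maximality already in place. For item~2, suppose $\BDinf{P}$ is not weakly maximal. I would pick a maximal GR-process $P^{\max}$ among those reachable by extension from (a representative of) $\BDinf{P}$ --- but the cleanest route is contrapositive: if $\BDinf{P}$ is not weakly maximal, then \emph{no} $P' \in \BDinf{P}$ is maximal, so fix any $P' \in \BDinf{P}$ and a proper extension $P' \leq Q$ with $P' \neq Q$. Since $P' \leq Q$ and $P' \neq Q$, the process $Q$ has at least one transition not in $P'$, so $Q \notin \BDinf{P}$ (processes in the same $\swapeq^\infty$-class have the same multiset of transition occurrences, by the remark following \refdf{connectedto} transported through \refthm{fsbd}, or directly since $\swapeq$ preserves $\TT$). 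Hence $\BDinf{P} \preceq \BDinf{Q}$ with $\BDinf{P} \neq \BDinf{Q}$, so $\BDinf{P} \leq \BDinf{Q}$ and $\BDinf{P} \neq \BDinf{Q}$, witnessing that $\BDinf{P}$ is not maximal.

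For item~1, suppose $\BDinf{P}$ is not maximal, i.e.\ there is $\BDinf{Q}$ with $\BDinf{P} \leq \BDinf{Q}$ and $\BDinf{P} \neq \BDinf{Q}$. By \refdf{bdprocorder}, $\BDinf{P} \preceq^+ \BDinf{Q}$; expanding the $\preceq$-chain and using that each $\preceq$-step is witnessed by representatives with $P'_i \leq Q'_i$, \reflem{procleqimplrunleq} gives $\BDify{P'_i} \subseteq \BDify{Q'_i}$, and since $\swapeq^\infty$-equivalent processes have equal \textit{BD}-image (the remark after \reflem{bdprocessisrun}), chaining yields $\BDify{P} \subseteq \BDify{Q}$. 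If we had $\BDify{P} = \BDify{Q}$ then $P \swapeq^\infty Q$, so $\BDinf{P} = \BDinf{Q}$, contradicting $\BDinf{P} \neq \BDinf{Q}$; hence $\BDify{P} \subsetneq \BDify{Q}$. Therefore $\BDify{P}$ is not a maximal \textit{BD}-run, so $\BDinf{P}$ is not run-maximal.

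The routine parts are the bookkeeping: that $\swapeq^\infty$-equivalent processes carry the same transition multiset, and the chaining of $\preceq$ into $\leq$ via \reflem{procleqimplrunleq} and the injectivity of \textit{BD} on BD-processes. The only mild subtlety --- and the step I expect to need the most care --- is item~1: one must be sure the \emph{strictness} survives the argument, i.e.\ that $\BDinf{P} \neq \BDinf{Q}$ forces $\BDify{P} \neq \BDify{Q}$ and hence a \emph{proper} inclusion of \textit{BD}-runs; this is exactly where the injectivity of the map $\BDinf{P} \mapsto \BDify{P}$ (noted right after \reflem{bdprocessisrun}) is used. Everything else is unwinding the definitions of the three maximality notions against the partial order $\leq$ on BD-processes.
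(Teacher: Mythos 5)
Your proof of item~1 is correct and is essentially the paper's argument: non-maximality of $\BDinf{P}$ yields $P'\swapeq^\infty P$ and $Q$ with $P'\leq Q$ and $P'\not\swapeq^\infty Q$, whence $\BDify{P}=\BDify{P'}\subseteq\BDify{Q}$ by \reflem{procleqimplrunleq}, and the inclusion is proper by injectivity of \textit{BD} on BD-processes, so $\BDify{P}$ is not a maximal BD-run. (The detour through $\preceq^+$-chains is unnecessary but harmless.)

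Item~2, however, has a genuine gap. The step ``$Q$ has at least one transition not in $P'$, so $Q\notin\BDinf{P}$'' is false for infinite processes. The relation $\swapeq^\infty$ is not generated by single swaps; it is defined by equality of the (downward-closed) sets of $\swapeq^*$-classes of \emph{finite} prefixes, and a proper extension of an infinite process can perfectly well be $\swapeq^\infty$-equivalent to it. Concretely, take the net with one place $s$, $M_0(s)=2$, and one transition $t$ with $F(s,t)=F(t,s)=1$. Let $P'$ be the infinite causal chain of $t$-occurrences fed by one of the initial tokens (the other token idle), and let $Q$ extend $P'$ by a single further $t$-occurrence consuming the idle token. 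Every finite process of this net consists of two chains of lengths $k$ and $m$, and one swap (of the last place of one chain against the first place of the other) merges them into a single chain of length $k+m$; hence all finite processes with $n$ transitions are $\swapeq^*$-equivalent, and $\BDify{P'}=\BDify{Q}$ is the set of all finite BD-processes of this net. Thus $P'\swapeq^\infty Q$ although $P'\leq Q$ and $P'\neq Q$. (Your appeal to preservation of the transition multiset also breaks down here, since both processes contain infinitely many $t$-occurrences.) Consequently, from ``no member of $\BDinf{P}$ is maximal'' you cannot conclude non-maximality of $\BDinf{P}$ by picking an \emph{arbitrary} proper extension. The paper's proof supplies the missing ingredient: by Lemma~2 of \cite{glabbeek11ipl}, an application of Zorn's lemma, $P$ has a \emph{maximal} GR-extension $Q$; maximality of $\BDinf{P}$ then forces $Q\swapeq^\infty P$, exhibiting a maximal representative. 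Without this Zorn-type existence result your contrapositive does not close.
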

\begin{proof}
  ``1'':
  This follows since $\BDinf{P}\leq\BDinf{Q}
  \Rightarrow \BDify{P}\subseteq\BDify{Q}$ by \reflem{procleqimplrunleq}.

\begin{TR}
  Alternatively, assume $\BDify{P}$ is maximal.
  Take $P'\inp\BDinf{P}$ and $Q$ such that $P'\leq Q$.
  Then $\BDify{P'} \subseteq \BDify{Q}$ by
  \reflem{procleqimplrunleq}, but $\BDify{P'} = \BDify{P}$ which is maximal.
  Hence $\BDify{Q} = \BDify{P}$ and $P \swapeq^\infty Q$.
\end{TR}

  ``2'':
  Assume $\BDinf{P}$ is maximal.
  By Lemma 2 in \cite{glabbeek11ipl}, which follows via Zorn's
  Lemma, there exists some maximal $Q$ with $P \leq Q$.
  Since $\BDinf{P}$ is maximal we have $Q \swapeq^\infty P$ and $Q$ is a
  maximal process within $\BDinf{P}$.
  \qed
\end{proof}

\begin{figure}
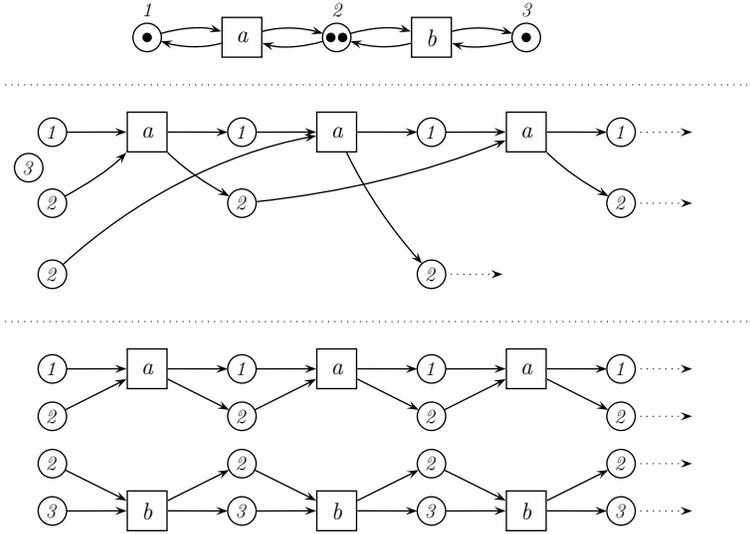

\vspace{-6ex}
  \begin{center}
    \psscalebox{0.9}{
    \begin{petrinet}(16,11)
      \Qt(3,10):p:1;
      \t(5,10):a:a;
      \qt(7,10):q:2;
      \t(9,10):b:b;
      \Qt(11,10):r:3;
      \A p->a; \A a->p;
      \A q->a; \A a->q;
      \A q->b; \A b->q;
      \A r->b; \A b->r;
      \pscircle*(6.875,10){0.1}
      \pscircle*(7.125,10){0.1}

      \psline[linestyle=dotted](0,9)(16,9)

      \s(0.5,7.25):r1:3;
      \s(1,8):p1:1;
      \s(1,6.5):q1:2;
      \s(1,5):q2:2;

      \t(3,8):a1:a;
      \t(7,8):a2:a;
      \t(11,8):a3:a;

      \s(5,8):p2:1;
      \s(9,8):p3:1;
      \s(13,8):p4:1;

      \s(5,6.5):q3:2;
      \s(9,5):q4:2;
      \s(13,6.5):q5:2;

      \a p1->a1; \a a1->p2; \a p2->a2; \a a2->p3; \a p3->a3; \a a3->p4;
      \B q1->a1; \B a1->q3; \B q3->a3; \B a3->q5;
      \A q2->a2; \B a2->q4;

      \psline[linestyle=dotted]{->}(13.3,8)(14.5,8)
      \psline[linestyle=dotted]{->}(13.3,6.5)(14.5,6.5)
      \psline[linestyle=dotted]{->}(9.3,5)(10.5,5)

      \psline[linestyle=dotted](0,4)(16,4)

      \s(1,3):pb1:1;
      \s(1,2):qb1:2;
      \s(1,1):qb2:2;
      \s(1,0):rb1:3;

      \t(3,3):ab1:a;
      \t(7,3):ab2:a;
      \t(11,3):ab3:a;

      \t(3,0):bb1:b;
      \t(7,0):bb2:b;
      \t(11,0):bb3:b;

      \s(5,3):pb2:1;
      \s(9,3):pb3:1;
      \s(13,3):pb4:1;

      \s(5,2):qb3:2;
      \s(9,2):qb4:2;
      \s(13,2):qb5:2;

      \s(5,1):qb6:2;
      \s(9,1):qb7:2;
      \s(13,1):qb8:2;

      \s(5,0):rb2:3;
      \s(9,0):rb3:3;
      \s(13,0):rb4:3;

      \a pb1->ab1; \a ab1->pb2; \a pb2->ab2; \a ab2->pb3; \a pb3->ab3; \a ab3->pb4;
      \a qb1->ab1; \a ab1->qb3; \a qb3->ab2; \a ab2->qb4; \a qb4->ab3; \a ab3->qb5;
      \a qb2->bb1; \a bb1->qb6; \a qb6->bb2; \a bb2->qb7; \a qb7->bb3; \a bb3->qb8;
      \a rb1->bb1; \a bb1->rb2; \a rb2->bb2; \a bb2->rb3; \a rb3->bb3; \a bb3->rb4;

      \psline[linestyle=dotted]{->}(13.3,3)(14.5,3)
      \psline[linestyle=dotted]{->}(13.3,2)(14.5,2)
      \psline[linestyle=dotted]{->}(13.3,1)(14.5,1)
      \psline[linestyle=dotted]{->}(13.3,0)(14.5,0)

    \end{petrinet}
    }
  \end{center}
  \caption{A net and two weakly maximal processes thereof.}
  \label{fig-weakmaximal}
\end{figure}

\noindent
The three notions of maximality are all distinct.
The first process depicted in \reffig{weakmaximal} is an example of a weakly
maximal BD-process that is not maximal.
Namely, the process itself cannot be extended (for none of the tokens
in place 2 will in the end come to rest), but the process is swapping
equivalent with the top half of the second process (using only one
of the tokens in place 2), which can be extended with the bottom half.

\begin{figure}
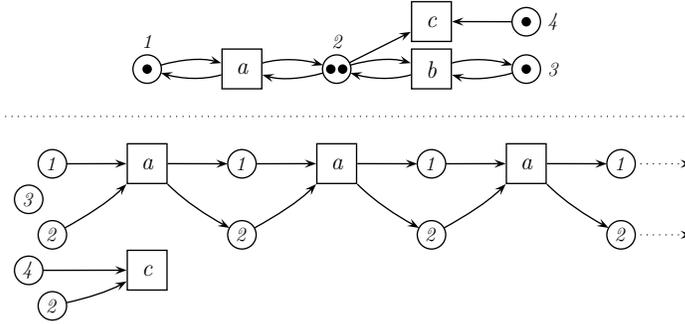

  \vspace{2ex}
  \begin{center}
    \psscalebox{0.9}{
    \begin{petrinet}(16,6)
      \Qt(3,5):p:1;
      \t(5,5):a:a;
      \qt(7,5):q:2;
      \t(9,5):b:b;
      \Q(11,5):r:3;
      \t(9,6):c:c;
      \Q(11,6):s:4;
      \A p->a; \A a->p;
      \A q->a; \A a->q;
      \A q->b; \A b->q;
      \A r->b; \A b->r;
      \a q->c; \a s->c;
      \pscircle*(6.875,5){0.1}
      \pscircle*(7.125,5){0.1}

      \psline[linestyle=dotted](0,4)(16,4)

      \s(0.5,2.25):r1:3;
      \s(1,3):p1:1;
      \s(1,1.5):q1:2;
      \s(1,0):q2:2;
      \s(0.5,0.75):s1:4;

      \t(3,3):a1:a;
      \t(7,3):a2:a;
      \t(11,3):a3:a;

      \t(3,0.75):c1:c;

      \s(5,3):p2:1;
      \s(9,3):p3:1;
      \s(13,3):p4:1;

      \s(5,1.5):q3:2;
      \s(9,1.5):q4:2;
      \s(13,1.5):q5:2;

      \a p1->a1; \a a1->p2; \a p2->a2; \a a2->p3; \a p3->a3; \a a3->p4;
      \B q1->a1; \B a1->q3; \B q3->a2; \B a3->q5;
      \B a2->q4; \B q4->a3;
      \a s1->c1; \B q2->c1;

      \psline[linestyle=dotted]{->}(13.3,3)(14.5,3)
      \psline[linestyle=dotted]{->}(13.3,1.5)(14.5,1.5)
    \end{petrinet}
    }
  \end{center}
  \caption{A net and a maximal process thereof.}
  \label{fig-onlymaximal}
\end{figure}

The process depicted in \reffig{onlymaximal} is an example of a BD-process
$\BDinf{P}$ which is maximal, but not run-maximal.
It is maximal, because no matter how it is swapped, at some point the
$c$-transition will fire, and after that the only token left in place 2
will be in use forever, making it impossible to extend the process
with any ($b$-)transition.
It is not run-maximal, as the set of all finite processes of $N$
constitutes a larger run.
Note that every two finite processes of $N$ mapping to the same multiset
of transitions are swapping equivalent.

The following lemmas show that for countable conflict-free nets
maximality and run-maximality coincide, and that for structural
conflict nets all three notions of maximality coincide.

\begin{lemma}\rm\label{lem-cfmaximals}
  Let $N$ be a countable binary-conflict-\!-free net, and $P$ be a GR-process of $N$.

\begin{TR}
  \vspace{-1.5ex}
  \begin{enumerate}
    \item[(1)] If $\BDify{P}$ is non-maximal, then
        $\exists P' \mathbin{\swapeq^\infty} P\, \exists Q.
        P' \mathbin\leq Q \wedge P' \mathbin{\not\swapeq^\infty} Q$.
    \item[(2)] If $\BDinf{P}$ is maximal, then $\BDinf{P}$ is run-maximal.
  \end{enumerate}
\end{TR}
\begin{CONCUR}
  If $\BDinf{P}$ is maximal, then $\BDinf{P}$ is run-maximal.
\end{CONCUR}
\end{lemma}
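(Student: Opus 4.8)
\noindent\emph{Sketch of the argument.}\quad
Part~(2) is the contrapositive of the implication in part~(1). Indeed, by \refdf{bdprocmax} the hypothesis ``$\BDinf P$ is maximal'' unfolds to $\forall P'\swapeq^\infty P\,\forall Q.\,(P'\leq Q\Rightarrow P'\swapeq^\infty Q)$, which is exactly the negation of the conclusion of~(1); so once~(1) holds, ``$\BDinf P$ maximal'' forces ``$\BDify P$ maximal'', i.e.\ run-maximal. We therefore prove~(1), distinguishing whether $P$ is finite.

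\emph{The finite case.} Here $\BDify P$ is a finite BD-run, hence has a largest element, namely $\BD P$, and non-maximality of $\BDify P$ produces a finite BD-process $\BD B$ with $\BD P\leq\BD B$ and $\BD P\neq\BD B$. Unfolding $\leq$ on $\swapeq^*$-equivalence classes (\refdf{swapeq}) gives finite GR-processes with $P\swapeq^* P''\leq Q\swapeq^* B$; since $\swapeq^*$ and $\swapeq^\infty$ agree on finite GR-processes, $P':=P''$ together with this $Q$ satisfies $P'\swapeq^\infty P$, $P'\leq Q$ and $\BD{P'}=\BD P\neq\BD B=\BD Q$, whence $P'\not\swapeq^\infty Q$.

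\emph{The infinite case.} Non-maximality of the BD-run $\BDify P$ yields a BD-run $\R\supsetneq\BDify P$. Pick $\BD C\in\R\setminus\BDify P$ with the fewest transitions and delete one maximal transition of $C$ (mapping to some $a\in T$), obtaining $C'$ with $\BD{C'}\goesto[a]\BD C$; then $\BD{C'}\in\R$ by prefix-closedness and $\BD{C'}\in\BDify P$ by minimality of $\BD C$. As $\BD{C'}\in\BDify P$ there is a finite prefix $P_0\leq P$ with $\BD{C'}\leq\BD{P_0}$; unfolding this and applying twice the fact recorded just after \refdf{swapeq} (``$A\swapeq^* B\leq B'$ implies $A\leq A'\swapeq^* B'$ for some process $A'$'') yields a GR-process $Z$ with $C'\leq Z$ a genuine prefix and $Z\swapeq^* P$, hence $Z\swapeq^\infty P$. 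If $Z$ is \emph{not} a maximal GR-process, any proper extension $Q\gneq Z$ introduces a transition absent from every finite prefix of $Z$, so $\BDify Q\supsetneq\BDify Z=\BDify P$ and $Q\not\swapeq^\infty Z$; then $P':=Z$ works. So we may assume $Z$ is a maximal GR-process; being infinite, its frontier then enables no transition of $N$, and in particular one cannot simply append an $a$-occurrence to $Z$.

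\emph{The main obstacle.} What remains is to produce $Z'\swapeq^\infty Z$ with $C'\leq Z'$ in which some $\precond a$-sized submultiset of the frontier of $C'$ is never consumed by the transitions of $Z'$; appending a fresh $a$-occurrence there gives $Q\gneq Z'$ with $C\leq Q$, so that $\BD C\in\BDify Q$ while $\BD C\notin\BDify P=\BDify{Z'}$, whence $Q\not\swapeq^\infty Z'$ and $P':=Z'$ finishes the proof. This step is the technical heart, and is precisely where binary-conflict-\!-freeness is indispensable: $a$ is enabled at $\widehat{C'}$, and an easy induction from \reflem{conflictfreeswap} shows that in a binary-conflict-\!-free net an enabled transition stays enabled along any firing sequence not containing it, so that at every finite stage beyond $C'$ the transition fired next by $Z$ forms a \emph{step} with $a$ --- there are always enough tokens to keep a copy of $\precond a$ reserved. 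Converting this marking-level observation into a concrete (possibly infinite) sequence of swaps that rebuilds $Z$ above $C'$ along such a ``lazy'' token allocation --- as a limit of finite processes $Z'_i\swapeq^* Z_i$ whose new transitions avoid the reserved conditions, and handling occurrences of $a$ inside $Z$ itself, if any --- is carried out by a chain construction using the diamond lemmas (\reflem{cfdiamond} and \reflem{closingdiamond}) together with the correspondence between firing sequences modulo $\connectedto$ and finite GR-processes modulo $\swapeq^*$ from \cite{best87both}; this is where the real work lies, and it is also where countability is used, to build $Z$ above $C'$ as the limit of a countable chain of finite prefixes.
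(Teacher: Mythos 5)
Your overall plan is the paper's: (2) follows from (1) by contraposition, and for (1) one picks a minimal $\BD{C}\in\R\setminus\BDify{P}$, strips off a maximal transition to get $\BD{C'}\in\BDify{P}$ with $\BD{C'}\goesto[b]\BD{C}$, and tries to append $b$ to a suitably rearranged representative of $\BDinf{P}$. The contraposition and your finite case are fine. But two of your steps do not hold up. First, the shortcut in the infinite case is wrong: from ``$Z$ is not a maximal GR-process'' you conclude that \emph{any} proper extension $Q$ of $Z$ satisfies $\BDify{Q}\supsetneq\BDify{Z}$ because $Q$ ``introduces a transition absent from every finite prefix of $Z$''. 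The new transition is a fresh element of the occurrence net, but a finite prefix of $Q$ containing it can still be swapping equivalent to a finite prefix of $Z$. Take the net with one place $s$, $M_0(s)=2$, and one transition $a$ with $F(s,a)=F(a,s)=1$, and let $Z$ be the infinite chain of $a$-occurrences threaded through one token, the other token's condition left untouched. Then $Z$ is not maximal, yet appending an $a$-occurrence on the idle condition yields $Q$ with $\BDify{Q}=\BDify{Z}$ (a prefix with $n$ chained occurrences plus the extra one is $\swapeq^*$ to the chain of length $n+1$), so $Q\swapeq^\infty Z$. Hence this branch proves nothing; you still need a \emph{specific} extension certified to contain $\BD{C}\notin\BDify{P}$, which is what the other branch provides. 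The paper accordingly makes no such case split and always appends the designated $b$.

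Second, the step you call ``the main obstacle'' is the entire content of the lemma, and you do not prove it. That $\precond{b}$ stays marked along firing sequences avoiding $b$ is the easy, marking-level half (\reflem{closingdiamondwithout}); the hard half is lifting this to the occurrence net, i.e.\ exhibiting a concrete $\bar P\swapeq^\infty P$ in which a fixed set of conditions $\SS\!_b$ with $\pi(\SS\!_b)=\precond{b}$ is never consumed. The paper does this by enumerating the transitions of $P$ beyond $P_0'$ compatibly with $\FF^+$ (this is where countability enters), inductively re-choosing a preset ${}^\circ t_j$ for each of them among the currently unconsumed conditions outside $\SS\!_b$ --- the marking inequality $\widehat P'_j\supseteq\precond{\pi(t_j)}+\precond{b}$, obtained from repeated applications of \reflem{cfdiamond}, guarantees such a choice exists --- and then verifying that the redefined flow still yields a process swapping equivalent to $P$. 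It also shows, via \refobs{swaptrans}(a), that $P$ cannot fire $b$ at all after $P_0'$, rather than ``handling occurrences of $a$ inside $Z$''. Without this construction your argument restates the goal rather than proving it.
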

  \begin{TR}
\begin{proof}
  ``(1)'':
  Take $\R \supsetneq \BDify{P}$. Take a minimal
  $Q_0 \in^2 \R \mathop{\setminus} \BDify{P}$.
  $Q_0$ can be written as $((\SS'', \TT\,'', \FF\,'', \MM_0''), \pi'')$.
  Let $t$ be a maximal element in $\TT\,''$ with respect to $\FF\,''^+$.
  Then $Q_0 \upharpoonright (\TT\,'' \setminus \{t\}) =: Q'_0$ is a process
  and $Q'_0 \mathbin{\in^2} \BDify{P}$ (as otherwise $Q_0$ would not
  have been minimal).
  Hence there exists finite $P'_0, Q'$ such that $Q'_0 \leq Q' \swapeq P'_0 \leq P$.
  Moreover, there are $Q'_1,\ldots,Q'_n\in^2 \BDify{P}$ with $Q_n=Q'$
  and $Q'_{i-1}\goesto[a_i]Q'_i$ for $i=1,\ldots,n$.

  $\pi_Q(t)$ is some transition $b$ of $N$, so $Q'_0 \goesto[b] Q_0$.
  We now show by induction on $i\in\{1,\ldots,n\}$ that there are
  $Q_1,\ldots,Q_n\not\in^2 \BDify{P}$
  with $Q'_i\goesto[b]Q_i$ and \mbox{$\BD{Q_{i-1}}\goesto[a_i]\BD{Q_i}$} for $i\mathbin=1,\ldots,n$.
  Namely, given $Q_{i-1}$, as $Q_{i-1}\mathbin{\not\in^2}\BDify{P}$ we have
  $Q_{i-1}\not\swapeq^* Q'_i\in^2\BDify{P}$.
  Using that $Q'_{i-1} \goesto[a_i] Q''_i\swapeq^* Q'_i$ and $Q'_{i-1} \goesto[b] Q_{i-1}$,
  this implies $a_i\mathbin{\neq} b$ by \refobs{swaptrans}(a).
  Now \reflem{cfdiamond} yields a $Q_i$ such that 
  $Q'_{i} \goesto[b] Q_{i}$ and $\BD{Q_{i-1}} \goesto[a_i] \BD{Q_i}$.
  As $\BDify{P}$ is prefix closed, we have $Q_i\mathbin{\not\in^2}\BDify{P}$.

  Since $Q'_n\swapeq^* P'_0$ and $Q'_n\goesto[b] Q_{n}$,
  there is a $P_0$ with $P'_0\goesto[b] P_0$ and $P_0\swapeq^*Q_{n}$, using \refobs{swaptrans}(b).
  Hence $P_0\not\in^2\BDify{P}$.

  Let $P = ((\SS, \TT, \FF, {\MM_0}, \pi)$,
  $P'_0 = ((\SS', \TT\,', \FF\,', {\MM_0}, \pi')$
  and $N=(S,T,F,M_0)$.
  Enumerate the transitions in $\TT\setminus\TT\,'$ as $\{t_i \mid i\inp\bbbn\}$,
  such that if $t_i \FF^+ t_j$ then $i \mathbin< j$.
  This is always possible, since $N$ is countable 
  and $\{t \mid (t,u)\inp \FF^+\}$ is finite for all $u\inp \TT$.
  So there are $P'_i \leq P$ for $i\mathbin>0$ such that
  \mbox{$P'_0 \goesto[\pi(t_0)] P'_1 \goesto[\pi(t_1)] P'_2 \goesto[\pi(t_2)] \cdots$}.
  Exactly as above, by induction on $i$, there must be $P_1,P_2,\ldots\not\in^2 \BDify{P}$
  with $P'_{i+1}\goesto[b]P_{i+1}$ and
  \mbox{$\BD{P_{i}}\goesto[\pi(t_i)]\BD{P_{i+1}}$} for $i=0,\ldots,m$.
  Moreover, \plat{$\widehat P'_i \goesto[\{\pi(t_i),b\}]$} by \reflem{cfdiamond}.

  Let $\SS\!_b:=P_0'^\circ\setminus P_0^\circ$. Then $\pi(\SS\!_b) = \precond{b}$.
  With induction on $j$, for each transition $t_j$ pick a set 
  ${}^\circ t_j \subseteq \SS$ with $\pi({}^\circ t_j) =
  \precond{\pi(t_j)} \;( = \pi(\precond{t_j}))$ such that
  $${}^\circ t_j \subseteq {P_0'^\circ} \cup (\bigcup_{i<j} \postcond{t_j}) \setminus 
          (\bigcup_{i<j} {}^\circ t_i) \setminus \SS\!_b {\;.}$$
  Such a set always exists, since
$$
    \pi\left({P_0'^\circ} \cup (\bigcup_{i<j} \postcond{t_j}) \setminus 
          (\bigcup_{i<j} {}^\circ t_i) \right) =
    \widehat P'_0 + \sum_{i<j} \postcond{\pi(t_i)} - \sum_{i<j}\precond{\pi(t_i)} =
    \widehat P'_j \supseteq \precond{\pi(t_j)}+\precond{b}.$$
Let $\bar{P} = ((\SS, \TT, \FF\!_{\bar{P}}, {\MM_0}), \pi)$ with
  \begin{equation*}
    \FF\!_{\bar{P}}(x, y) := 
    \begin{cases}
      \FF(x, y)&\text{ if } (x \in \TT \wedge y \in \SS) \vee y \in \TT\,'\\
      ({}^\circ y)(x)&\text{ otherwise (i.e. } y \in \TT\setminus \TT\,'
      \ \wedge x \in \SS\text{)}.\\
    \end{cases}
  \end{equation*}
$\bar{P}$ is a process via the construction.
Namely, for all $s\in \SS$, $\precond{s}$ in $\bar{P}$ is the same as
in $P$, and hence $|\precond{s}| \leq 1$. Likewise, $\MM_0$ is unchanged.
We have $|\postcond{s}| \leq 1$ by construction, in particular because
${}^\circ{t_i} \cap {}^\circ{t_j}=\emptyset$ for $i<j$.
If $(t_i,t_j)\in \FF\!_{\bar{P}}^{\,+}$ then $i<j$, from which it can be
inferred that $\FF\!_{\bar{P}}$ is acyclic and $\{t \mid (t,u)\in
\FF\!_{\bar{P}}^{\,+}\}$ is finite for all $u\in \TT$.
The conditions $\pi(\MM_0)=M_0$, $\pi(\precond{t})=\precond{\pi(t)}$
and $\pi(\postcond{t})=\postcond{\pi(t)}$ hold for $\bar{P}$ because
they do for $P$, and we have $\pi({}^\circ t_j) = \precond{\pi(t_j)}$.

By construction, $\bar{P}$ is swapping equivalent to $P$.
The componentwise union of $\bar{P}$ and $P_0$ is a process
$\bar{P}_0$ with $\bar{P} \goesto[b] \bar{P}_0$ and $P_0 \leq \bar{P}_0$.
As $P_0 \in^2 \BDify{\bar{P_0}}\setminus \BDify{\bar{P}}$ we have
$\bar{P} \mathbin{\not\swapeq^\infty} \bar{P}_0$.

  ``(2)'':
  Assume $\BDinf{P}$ is maximal, i.e.
  $\nexists P' \mathbin{\swapeq^\infty} P\, \exists Q.
    P' \mathbin\leq Q \wedge P' \mathbin{\not\swapeq^\infty} Q$.
  Then via the contraposition of (1), $\BDify{P}$ is maximal.
  \qed
\end{proof}
  \end{TR}

\begin{lemma}\rm\label{lem-scmaximals}
  Let $N$ be a structural conflict net, and $P$ be a GR-process of $N$.

\begin{TR}
  \vspace{-1.5ex}
  \begin{enumerate}
    \item[(1)] If $\BDify{P}$ is not maximal, then
        $P$ is not maximal, and
    \item[(2)] If $\BDinf{P}$ is weakly maximal,
        then $\BDinf{P}$ is run-maximal.
  \end{enumerate}
\end{TR}
\begin{CONCUR}
    If $\BDinf{P}$ is weakly maximal, then $\BDinf{P}$ is run-maximal.
\end{CONCUR}
\end{lemma}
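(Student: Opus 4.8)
The plan is to prove statement~(1) directly and then derive statement~(2) as a consequence. For~(1), suppose $\BDify{P}$ is not maximal, so there is a BD-run $\R$ with $\BDify{P} \subsetneq \R$. I would pick a minimal element $Q_0 \in^2 \R \setminus \BDify{P}$, write $Q_0 = ((\SS'',\TT\,'',\FF\,'',\MM_0''),\pi'')$, and let $t$ be a maximal transition of $Q_0$ with respect to $\FF\,''^+$. Removing $t$ yields a prefix $Q_0' := Q_0 \restrictedto (\TT\,'' \setminus \{t\})$ with $Q_0' \in^2 \BDify{P}$ by minimality; say $\pi''(t) = b$, so $Q_0' \goesto[b] Q_0$. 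Since $Q_0' \in^2 \BDify{P}$, there is a finite process $P_0' \leq P$ with $P_0' \swapeq^* R \geq Q_0'$ for some $R$, and hence a chain $Q_0' = R_0 \goesto[a_1] R_1 \goesto[a_2] \cdots \goesto[a_n] R_n = R$ with all $R_i \in^2 \R$. The goal is to ``push the $b$-transition along'' this chain, using \reflem{scdiamond} at each step: given $R_{i-1} \goesto[b] Q_{i-1}$ with $Q_{i-1} \not\in^2 \BDify{P}$, the fact that $R_{i-1} \goesto[a_i] R_i \in^2 \BDify{P}$ forces $a_i \neq b$ (by \refobs{swaptrans}(a), since otherwise $Q_{i-1} \swapeq^* R_i$), so \reflem{scdiamond}, applied inside the run $\R$, produces $Q_i \in^2 \R$ with $R_i \goesto[b] Q_i$ and $\BD{Q_{i-1}} \goesto[a_i] \BD{Q_i}$; prefix-closure of $\BDify{P}$ gives $Q_i \not\in^2 \BDify{P}$.

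Having reached $R = R_n$ with $R \goesto[b] Q_n$, I transfer the $b$-move back across the swapping $P_0' \swapeq^* R$ using \refobs{swaptrans}(b) to obtain $P_0$ with $P_0' \goesto[b] P_0$ and $P_0 \swapeq^* Q_n$, so $P_0 \not\in^2 \BDify{P}$. Then, exactly as in the proof of \reflem{cfmaximals}(1), I extend: enumerating the transitions of $P$ not already in $P_0'$ in an order compatible with $\FF^+$, I repeatedly apply \reflem{scdiamond} to push the $b$-transition past each of them (each application legitimate because $N$ being a structural conflict net means the two transitions involved do not share a preplace, which is precisely what \reflem{scdiamond} needs), and I assemble the chosen presets ${}^\circ t_j$ into a process $\bar P$ that is swapping equivalent to $P$ and such that $\bar P \goesto[b] \bar P_0$ for some $\bar P_0 \geq P_0$ (so $\bar P_0 \not\in^2 \BDify{\bar P}$). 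In particular $\bar P$ is a GR-process, swapping-equivalent to $P$, that has a proper extension; and since in a structural conflict net every GR-process can be presented with any preset assignment compatible with the swapping class, $P$ itself is not maximal. This is the content of~(1).

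For~(2): suppose $\BDinf{P}$ is weakly maximal, so some $P' \in \BDinf{P}$ is maximal as a GR-process. If $\BDify{P}$ were not maximal, then since $\BDify{P'} = \BDify{P}$, statement~(1) applied to $P'$ would show $P'$ is not maximal, a contradiction. Hence $\BDify{P}$ is maximal, i.e.\ $\BDinf{P}$ is run-maximal.

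The main obstacle I expect is statement~(1), specifically the final construction of the process $\bar P$: one must choose the presets ${}^\circ t_j$ of all the transitions in $\TT \setminus \TT\,'$ consistently — so that no place is reused, acyclicity is preserved, and the marking bookkeeping $\widehat P_j' \supseteq \precond{\pi(t_j)} + \precond{b}$ holds at every stage — and then verify that the componentwise union with $P_0$ is genuinely a GR-process. This is the same delicate argument as in \reflem{cfmaximals}(1), but here its validity hinges on replacing each use of \reflem{cfdiamond} by \reflem{scdiamond}, which is available precisely because $N$ is a structural conflict net (equivalently, binary-conflict-\!-free suffices for the diamond, but the structural-conflict hypothesis is what guarantees the relevant transitions have disjoint presets, making the preset choices compatible). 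Everything else — the pushing-along-the-chain induction and the transfer across $\swapeq^*$ — is a routine adaptation of the arguments already used for \reflem{cfmaximals}.
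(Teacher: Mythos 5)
Your reduction of (2) to (1), and the first half of your argument for (1) --- isolating a minimal $Q_0 \in^2 \R \setminus \BDify{P}$, peeling off a maximal transition $t$ with image $b$, and pushing the $b$-step along the chain from $Q'_0$ to $Q'_n \swapeq^* P'_0$ via \reflem{scdiamond} and \refobs{swaptrans} to obtain $P_0$ with $P'_0 \goesto[b] P_0$ and $P_0 \not\in^2 \BDify{P}$ --- coincide with the paper's proof. The gap is in how you finish (1). You transplant the $\bar P$-construction from \reflem{cfmaximals}(1), obtaining a process $\bar P \swapeq^\infty P$ with a proper extension, and then assert that ``since in a structural conflict net every GR-process can be presented with any preset assignment compatible with the swapping class, $P$ itself is not maximal.'' That step is unsupported, and it is exactly the point at issue: non-maximality of one member of a $\swapeq^\infty$-class does not in general transfer to another member (\reffig{weakmaximal} shows a class with both a maximal and a non-maximal representative), and for structural conflict nets the transfer is essentially equivalent to what the lemma is asserting, so invoking it is circular. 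As written, your argument only establishes that $\BDinf{P}$ is not maximal --- which is \reflem{cfmaximals}(1) --- not the stronger conclusion of (1) that $P$ itself is not maximal; and (2) genuinely needs that stronger conclusion, since weak maximality only hands you one maximal representative $P'$, to which (1) must be applicable directly.

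The paper closes the argument differently and more simply. It continues pushing $b$ past the remaining transitions of $P$ beyond $P'_0$: at each confrontation $P'_m \goesto[c_m] P'_{m+1}$ and $P'_m \goesto[b] P_m$, \reflem{scdiamond} gives $\widehat{P'_m} \goesto[\{c_m,b\}]$, and since $M_0 \goesto[] \widehat{P'_m}$ by \refobs{bisimulation}(d), the structural-conflict hypothesis yields $\precond{b} \cap \precond{c_m} = \emptyset$. Hence the tokens in the preplaces of $b$ that are present once $P'_0$ has been executed are never consumed by any later transition of $P$, so $P$ \emph{itself} can be extended by an occurrence of $b$ and is not maximal. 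No reassignment of presets, no construction of $\bar P$, and no countability assumption is needed. To repair your write-up, drop the $\bar P$-construction entirely and replace it by this disjointness argument.
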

  \begin{TR}
\begin{proof}
  ``(1)'':
  Take $\R \supsetneq \BDify{P}$. Take a minimal
  $Q_0 \in^2 \R \mathop{\setminus} \BDify{P}$.
  $Q_0$ can be written as $((\SS, \TT, \FF, \MM_0), \pi)$.
  Let $t$ be a maximal element in $\TT$ with respect to $\FF^+$.
  Then $Q_0 \upharpoonright (\TT \setminus \{t\}) =: Q'_0$ is a process
  and $Q'_0 \mathbin{\in^2} \BDify{P}$.
  Hence there exists finite $P'_0, Q'$ such that $Q'_0 \leq Q' \swapeq P'_0 \leq P$.
  Moreover, there are $Q'_1,\ldots,Q'_n\in^2 \BDify{P}$ with $Q_n=Q'$
  and $Q'_{i-1}\goesto[a_i]Q'_i$ for $i=1,\ldots,n$.

  $\pi(t)$ is some transition $b$ of $N$, so $Q'_0 \goesto[b] Q_0$.
  We now show by induction on $i\in\{1,\ldots,n\}$ that there are
  $Q_1,\ldots,Q_n\in^2 \R\mathop{\setminus}\BDify{P} $
  with $Q'_i\goesto[b]Q_i$ and \mbox{$\BD{Q_{i-1}}\goesto[a_i]\BD{Q_i}$} for $i\mathbin=1,\ldots,n$.
  Namely, given $Q_{i-1}$, as $Q_{i-1}\mathbin{\not\in^2}\BDify{P}$ we have
  $Q_{i-1}\not\swapeq^* Q'_i\in^2\BDify{P}$.
  Using that $Q'_{i-1} \goesto[a_i] Q''_i\swapeq^* Q'_i$ and $Q'_{i-1} \goesto[b] Q_{i-1}$,
  this implies $a_i\mathbin{\neq} b$ by \refobs{swaptrans}(a).
  Now \reflem{scdiamond} yields a $Q_i\mathbin{\in^2}\R$ such that 
  $Q'_{i} \goesto[b] Q_{i}$ and $\BD{Q_{i-1}} \goesto[a_i] \BD{Q_i}$.
  As $\BDify{P}$ is prefix closed, we have $Q_i\mathbin{\not\in^2}\BDify{P}$.

  Since $Q'_n\swapeq^* P'_0$ and $Q'_n\goesto[b] Q_{n}$,
  there is a $P_0$ with $P'_0\goesto[b] P_0$ and $P_0\swapeq^*Q_{n}$, using \refobs{swaptrans}(b).
  Hence $P_0\in^2\R\mathop{\setminus}\BDify{P}$.

  Now let $t$ be any transition in $P:=(\NN,\pi_P)$ that is not included in $P'_0$.
  Then there are $P'_1,\ldots,P'_{m+1} \leq P$ with
  $P'_{i}\goesto[c_i]P'_{i+1}$ for $i=0,\ldots,m$ and $c_m=\pi_P(t)$.
  Exactly as above, by induction on $i$, there are
  $P_1,\ldots,P_{m+1}\in^2 \R\mathop{\setminus}\BDify{P} $
  with $P'_{i+1}\goesto[b]P_{i+1}$ and \mbox{$\BD{P_{i}}\goesto[c_i]\BD{P_{i+1}}$} for $i=0,\ldots,m$.
  Moreover, since \mbox{$P'_{m}\goesto[c_i]P'_{m+1}$} and $P'_{m}\goesto[b]P_{m}$,
  we have \plat{$\widehat P'_m \goesto[\{c_m,b\}]$} by \reflem{scdiamond}.
  By \refobs{bisimulation}(d) we furthermore have \plat{$M_0 \goesto[] \widehat
  P'_{m}$}, where $N=:(S,T,F,M_0)$.
  Hence, as $N$ is a structural conflict net, $\precond{b}\cap\precond{c_m}=\emptyset$.

  Since $\widehat P'_0 \subseteq \precond{b}$, by \refobs{bisimulation}(b), and the
  tokens in the preplaces of $b$ cannot be consumed by the $\pi_P$-image of any
  transition of $P$ that fires after $P'_0$ has been executed, $P$ can be extended
  with the transition $b$, and hence is not maximal.

  ``(2)'':
  Assume $\BDinf{P}$ is weakly maximal. Then there is a maximal process
  $P' \in \BDinf{P}$. By (1) if $\BDify{P'}$ would not be maximal, neither
  would $P'$ be.
  Hence $\BDify{P} = \BDify{P'}$ is maximal.
  \qed
\end{proof}
  \end{TR}

\noindent
Finally, we are able to show, using \refthm{onemaximalrun}, that a countable,
binary-conflict-\!- free net has only one maximal BD-process.
In case of a conflict-free structural conflict net we
can do the stronger statement that it has only one weakly maximal
BD-process, i.e.\ only one GR-process up to swapping equivalence.

\begin{lemma}\rm\label{lem-run-conflictfree-max}
  Let $N$ be a binary-conflict-\!-free net.
  \vspace{-1.5ex}

\begin{enumerate}
\item[(1)]
  $N$ has at most one run-maximal BD-process.
\item[(2)]
  If $N$ moreover is countable, then it has exactly one run-maximal BD-process.
\end{enumerate}
\end{lemma}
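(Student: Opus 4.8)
The plan is to reduce the claim to \refthm{onemaximalrun} by transporting it along the order-isomorphism $\Pi$ of \refthm{fsbd}. Since $\Pi$ is a bijection from FS-runs to BD-runs with $\Pi(\mathcal R)\subseteq\Pi(\mathcal R')$ iff $\mathcal R\subseteq\mathcal R'$, it restricts to a bijection between the $\subseteq$-maximal FS-runs and the $\subseteq$-maximal BD-runs; this is just the routine observation that an order-isomorphism preserves maximal elements. By \refthm{onemaximalrun} there is exactly one maximal FS-run, so I would first conclude that $N$ has exactly one maximal BD-run, which I will call $\mathcal R_{\max}$.

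For part~(1), I would unfold the definition: $\BDinf P$ is run-maximal iff $\BDify P$ is a maximal BD-run, i.e.\ iff $\BDify P=\mathcal R_{\max}$. By the remark following \reflem{bdprocessisrun} --- namely that $\PP\swapeq^\infty\QQ$ iff $\BDify\PP=\BDify\QQ$ --- the assignment $\BDinf P\mapsto\BDify P$ from BD-processes to BD-runs is injective, so at most one BD-process can be sent to $\mathcal R_{\max}$. This is (1), and it needs no countability assumption.

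For part~(2), assuming $N$ is countable I would invoke \reflem{bdprocessexists}, which supplies a GR-process $P$ with $\BDify P=\mathcal R_{\max}$ (equivalently, it upgrades the above injection to a bijection). Such a $P$ witnesses that a run-maximal BD-process exists, and together with (1) it is the unique one. I expect no genuine difficulty here: all the substantial work has already been done in \refthm{onemaximalrun} and \reflem{bdprocessexists}, and the only point requiring (minor) care is checking that ``maximal FS-run'', ``maximal BD-run'' and ``run-maximal BD-process'' really are the inclusion-maximal objects in the appropriate classes, so that $\Pi$ and the map $\BDinf{P}\mapsto\BDify{P}$ transport maximality faithfully; once that is unwound, the argument is a short diagram chase.
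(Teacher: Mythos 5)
Your proposal is correct and follows essentially the same route as the paper's own proof: uniqueness via \refthm{onemaximalrun} transported through the FS-run/BD-run correspondence together with the injectivity of $\BDify{\cdot}$, and existence in the countable case via \reflem{bdprocessexists}. The only difference is that you make explicit the role of the order-isomorphism $\Pi$ from \refthm{fsbd}, which the paper leaves implicit in its remark that the choice between FS-runs and BD-runs does not matter.
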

\begin{proof}
Suppose $N$ had two run-maximal BD-processes $\BDinf{P}$ and $\BDinf{P'}$.
Then $\BDify{P}$ and $\BDify{P'}$ are maximal BD-runs.
By \refthm{onemaximalrun} $N$ has only one maximal BD-run.
Hence $\BDify{P} = \BDify{P'}$ and thus $\BDinf{P} = \BDinf{P'}$.

Now assume that $N$ is countable.
By \refthm{onemaximalrun}, $N$ has a maximal BD-run $\R$.
By \reflem{bdprocessexists} there is a process $P$ with $\BDify{P} = \R$.
By \refdf{bdprocmax} $\BDinf{P}$ is run-maximal, so at least one
run-maximal BD-process exists.
\qed
\end{proof}

\begin{theorem}\rm\label{thm-conflictfree-max}
  Let $N$ be a countable binary-conflict-\!-free net.
  
  $N$ has exactly one maximal BD-process.
\end{theorem}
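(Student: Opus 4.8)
The plan is to derive \refthm{conflictfree-max} by assembling the three preceding lemmas, which together show that for a countable binary-conflict-\!-free net the maximality hierarchy of \reflem{maxorder} collapses, so that "maximal" becomes equivalent to "run-maximal" --- a notion for which existence and uniqueness are already available. For existence I would apply \reflem{run-conflictfree-max}(2): since $N$ is countable and binary-conflict-\!-free it has exactly one run-maximal BD-process, say $\BDinf{P}$; by \reflem{maxorder}(1) this $\BDinf{P}$ is then also maximal, so at least one maximal BD-process exists.

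For uniqueness, let $\BDinf{Q}$ be an arbitrary maximal BD-process of $N$; by \refdf{swapeq} it is the $\swapeq^\infty$-class of some (possibly infinite) GR-process $Q$. Since $N$ is countable and binary-conflict-\!-free, \reflem{cfmaximals}(2) applies to $Q$ and yields that $\BDinf{Q}$ is run-maximal. By \reflem{run-conflictfree-max}(1) there is at most one run-maximal BD-process, whence $\BDinf{Q}=\BDinf{P}$. Combined with existence, $\BDinf{P}$ is the unique maximal BD-process of $N$, which is the assertion.

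The assembly itself is routine; the genuine difficulty has already been discharged inside the cited lemmas, and no new obstacle arises here. The load-bearing implication ``maximal $\Rightarrow$ run-maximal'' (\reflem{cfmaximals}(2)) rests on the contraposition of \reflem{cfmaximals}(1), whose proof needs the diamond-closure \reflem{cfdiamond} --- itself using binary-conflict-\!-freeness and the step bisimulation of \refobs{bisimulation} --- together with the delicate reconstruction of a process $\bar P\swapeq^\infty P$ that admits a proper extension. The other half, \reflem{run-conflictfree-max}, rests on \refthm{onemaximalrun} (one maximal FS-run, transported to BD-runs via \refthm{fsbd}) and on \reflem{bdprocessexists}, which is where countability is genuinely used. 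Granting all of these, the present theorem follows immediately.
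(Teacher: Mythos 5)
Your proposal is correct and follows essentially the same route as the paper: the paper likewise observes that by \reflem{maxorder} and \reflem{cfmaximals} maximality and run-maximality coincide for a countable binary-conflict-\!-free net, and then invokes \reflem{run-conflictfree-max} for existence and uniqueness of the run-maximal BD-process. Your version merely unwinds this into separate existence and uniqueness steps, with the same lemmas carrying the same load.
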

\begin{proof}
  By Lemmas~\ref{lem-maxorder} and~\ref{lem-cfmaximals} the
  notions of maximality and run-maximality coincide for $N$, and the
  result follows from \reflem{run-conflictfree-max}.
  \qed
\end{proof}
The net of \reffig{uncountable} is an example of an uncountable 
binary-conflict-\!-free net without any maximal or run-maximal BD-process.

\begin{theorem}\rm\label{thm-conflictfree-wm}
  Let $N$ be a conflict-free structural conflict net.

  $N$ has exactly one weakly maximal BD-process, i.e.\ exactly one
  maximal GR-process up to $\swapeq^\infty$.
\end{theorem}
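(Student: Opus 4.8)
The plan is to assemble the statement from results already in hand, with essentially no new argument required. The first observation is that a conflict-free structural conflict net is binary-conflict-\!-free, as recorded immediately after \refdf{structuralconflict}; this licenses the use of every lemma above that is phrased for binary-conflict-\!-free nets. For the \emph{existence} half I would argue that every net admits a maximal GR-process $P$ (Lemma~2 of \cite{glabbeek11ipl}, obtained via Zorn's Lemma and already invoked in the proof of \reflem{maxorder}); then, by the first clause of \refdf{bdprocmax}, $\BDinf{P}$ is weakly maximal, so at least one weakly maximal BD-process exists.

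For \emph{uniqueness} I would take an arbitrary weakly maximal BD-process $\BDinf{P}$ and push it up the maximality hierarchy: since $N$ is a structural conflict net, part~(2) of \reflem{scmaximals} tells us that $\BDinf{P}$ is in fact run-maximal. Now \reflem{run-conflictfree-max}(1), applied to the binary-conflict-\!-free net $N$, states that there is at most one run-maximal BD-process. Hence any two weakly maximal BD-processes are equal, and together with existence we conclude that there is exactly one. Combined with \reflem{maxorder}, this also shows that for conflict-free structural conflict nets the three notions of maximality of \refdf{bdprocmax} collapse into one.

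I do not anticipate a real obstacle, since the substantive work — the diamond-closure argument \reflem{scdiamond}, the process-surgery in \reflem{scmaximals}, and the single-maximal-run result \refthm{onemaximalrun} feeding \reflem{run-conflictfree-max} — has already been carried out. The only thing demanding attention is getting the implications the right way round: one must climb from weak maximality to run-maximality via \reflem{scmaximals} before applying \reflem{run-conflictfree-max}, rather than reading \reflem{maxorder} backwards. It is also worth flagging that, in contrast to \refthm{conflictfree-max}, no countability assumption is needed here: existence is supplied by Zorn's Lemma on GR-processes instead of by \reflem{bdprocessexists}, and the ``at most one'' clause of \reflem{run-conflictfree-max} carries no countability hypothesis.
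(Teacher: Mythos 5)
Your proposal is correct and follows essentially the same route as the paper: existence via Zorn's Lemma (Lemma~2 of \cite{glabbeek11ipl}) applied to GR-processes, and uniqueness by promoting weak maximality to run-maximality via \reflem{scmaximals} and then invoking the ``at most one'' clause of \reflem{run-conflictfree-max}. The paper merely phrases the uniqueness step as ``the three maximality notions coincide'' (combining \reflem{maxorder} with \reflem{scmaximals}), whereas you use only the one implication actually needed; this is an inessential difference.
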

\begin{proof}
  By Lemmas~\ref{lem-maxorder} and~\ref{lem-scmaximals} the
  three maximality notions coincide for $N$, and the ``at most
  one''-direction follows from \reflem{run-conflictfree-max}.

  \begin{TR}By \refobs{bisimulation}(a),\end{TR}\begin{CONCUR}Surely,\end{CONCUR}
  $N$ has at least one process (with an
  empty set of transitions). By Lemma 2 in \cite{glabbeek11ipl}, which
  in turn invokes Zorn's lemma, every GR-process is a prefix of a maximal
  GR-process. Hence $N$ has a maximal GR-process, and thus a maximal
  GR-process up to $\swapeq^\infty$.
  \qed
\end{proof}

\noindent
The assumption that $N$ is a structural conflict net is essential in
\refthm{conflictfree-wm}. The net in \reffig{weakmaximal} is countable
(even finite) and conflict-free, yet has multiple maximal GR-process up to $\swapeq^\infty$.

We can now justify BD-processes as an abstract notion of process for structural conflict
nets since we obtain exactly one maximal abstract process if and only if the underlying
net is conflict-free.

\begin{corollary}\rm\label{cor-structuralconflictbdprocs}
  Let $N$ be a structural conflict net.

  $N$ is conflict-free iff $N$ has exactly one maximal BD-process,
  which is the case iff $N$ has exactly one
  maximal GR-process up to $\swapeq^\infty$.
\end{corollary}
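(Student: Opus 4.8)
The plan is to assemble the corollary directly from \refthm{structuralconflictbdprocsoneway} and \refthm{conflictfree-wm}, together with the fact — already packaged in \reflem{maxorder} and \reflem{scmaximals} — that for a structural conflict net the three notions of maximality of BD-processes coincide. So there is essentially no new argument to run, only a careful matching of hypotheses and conclusions.

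First I would settle the second ``iff''. Let $N$ be a structural conflict net. Chaining the implications of \reflem{scmaximals} and \reflem{maxorder} (weakly maximal $\Rightarrow$ run-maximal by \reflem{scmaximals}, and run-maximal $\Rightarrow$ maximal $\Rightarrow$ weakly maximal by \reflem{maxorder}), a BD-process of $N$ is weakly maximal iff it is maximal. Hence the set of maximal BD-processes of $N$ and the set of weakly maximal BD-processes of $N$ coincide, so $N$ has exactly one maximal BD-process iff it has exactly one weakly maximal BD-process; and by the definition of weak maximality, having exactly one weakly maximal BD-process is the same as having exactly one maximal GR-process up to $\swapeq^\infty$.

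Next I would settle the first ``iff''. The implication from ``$N$ has exactly one maximal GR-process up to $\swapeq^\infty$'' to ``$N$ is conflict-free'' is exactly \refthm{structuralconflictbdprocsoneway} (``exactly one'' in particular gives ``only one'', which is all that theorem needs). For the converse, \refthm{conflictfree-wm} states precisely that a conflict-free structural conflict net has exactly one weakly maximal BD-process, i.e.\ exactly one maximal GR-process up to $\swapeq^\infty$. Combining the two directions with the equivalence established in the previous paragraph gives all three conditions of the corollary their mutual equivalences.

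The only point that needs a little care — and the closest thing to an obstacle — is the word ``exactly'', which asserts existence as well as uniqueness. Uniqueness is the substantive content of the two cited theorems; existence of a maximal GR-process (hence of a weakly maximal, and therefore of a maximal, BD-process) comes from the Zorn's-lemma argument (Lemma~2 of \cite{glabbeek11ipl}) already invoked inside the proof of \refthm{conflictfree-wm}, applied to the always-available process with an empty set of transitions. Since none of the ingredients used here assumes countability, the corollary holds for arbitrary structural conflict nets.
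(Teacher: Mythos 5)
Your proposal is correct and follows essentially the same route as the paper: the paper's proof likewise first notes that all three maximality notions coincide for structural conflict nets by \reflem{scmaximals} and \reflem{maxorder}, and then cites \refthm{conflictfree-wm} for the forward direction and \refthm{structuralconflictbdprocsoneway} for the converse. Your extra remarks about ``exactly'' versus ``only'' and the Zorn's-lemma existence argument are sound and merely make explicit what the paper leaves implicit.
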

\begin{proof}
    All three notions of maximality coincide for structural conflict nets according to
    \reflem{scmaximals} and \reflem{maxorder}.

    ``$\Rightarrow$'': By \refthm{conflictfree-wm}.

    ``$\Leftarrow$'': By \refthm{structuralconflictbdprocsoneway}.
\qed
\end{proof}

\bibliographystyle{eptcsalpha}

\clearpage
\thispagestyle{empty}
\rput(5.5,-9){\includegraphics[width=14cm]{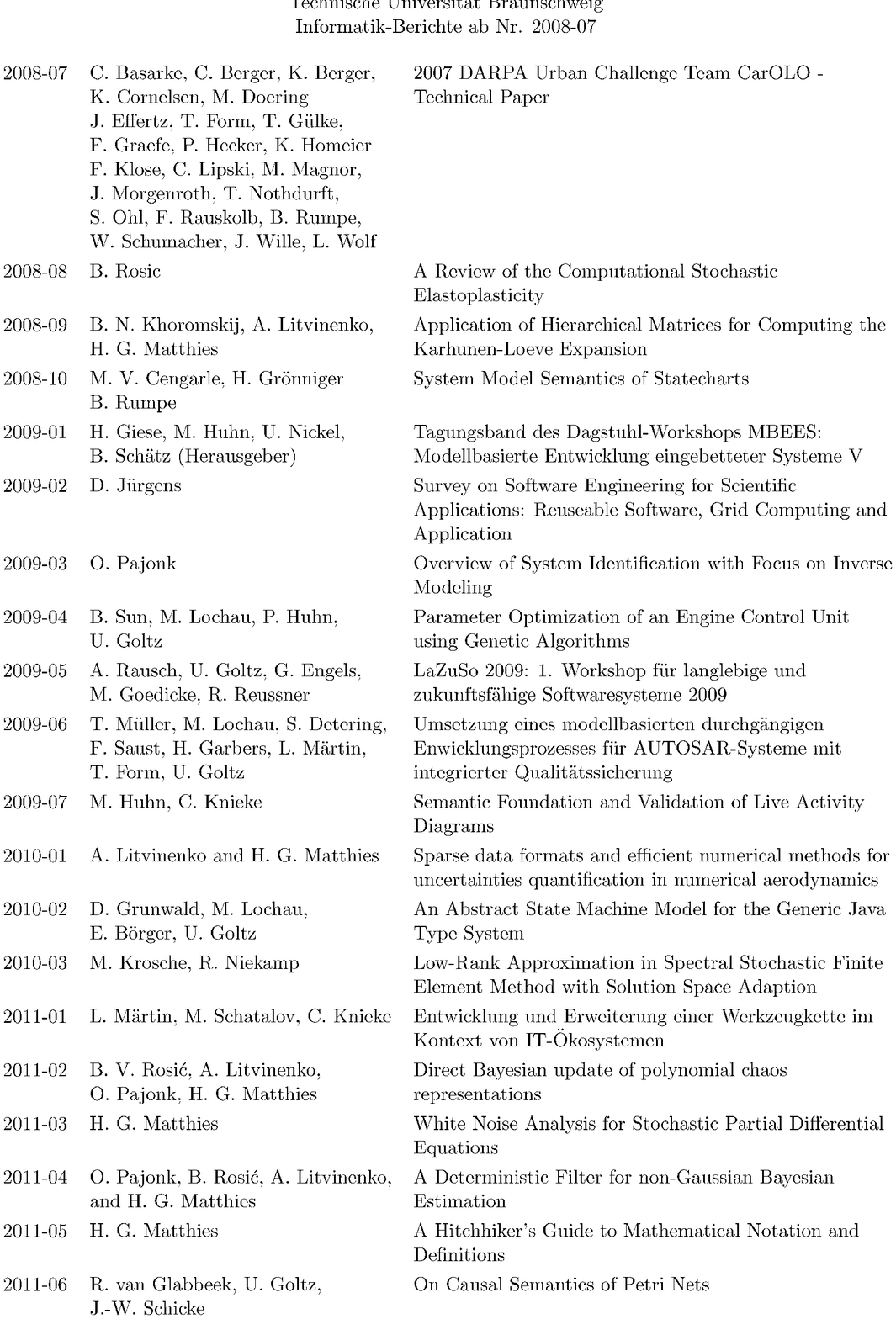}}

\end{document}